\pgfplotsset{compat=newest}
\definecolor{forestgreen}{rgb}{0.13, 0.55, 0.13}
\theoremstyle{plain}
\newtheorem{theorem}{Theorem}[section]
\newtheorem{lemma}{Lemma}[section]
\newtheorem{proposition}{Proposition}[section]
\newtheorem{assumption}{Assumption}
\newtheorem{condition}{Condition}
\theoremstyle{definition}
\declaretheoremstyle[%
  spaceabove=0pt,%
  spacebelow=0pt,%
  headfont=\normalfont\itshape,%
  postheadspace=1em,%
  qed=\qedsymbol%
]{mystyle} 
\declaretheorem[name={Proof},style=mystyle,unnumbered,
]{prf}
\newcommand{\assumptref}[1]{Assumption~\ref{#1}}
\newcommand{\condref}[1]{Condition~\ref{#1}}
\newcommand{\figref}[1]{Figure~\ref{#1}}
\newcommand{\algref}[1]{Algorithm~\ref{#1}}
\newcommand{\secref}[1]{Section~\ref{#1}}
\newcommand{\thmref}[1]{Theorem~\ref{#1}}
\newcommand{\lemref}[1]{Lemma~\ref{#1}}
\newcommand{\propref}[1]{Proposition~\ref{#1}}
\newcommand{\appref}[1]{Appendix~\ref{#1}}
\newcommand*{\wc}{{\mkern 2mu\cdot\mkern 2mu}} %
\newcommand{\PkerHMC}[2]{\mathrm{pr}_{#1, #2}}
\newcommand{\PkerCHMC}[3]{\mathrm{pr}_{#1, #2}^{#3}}
\newcommand{\PkerCMHMC}[4]{\mathrm{pr}_{#1, #2, #3}^{#4}}
\newcommand{\stepsize}{\varepsilon}
\newcommand{\set}[1]{\left\{ #1 \right\}}
\newcommand{\flow}[1]{\Phi_{#1}}
\newcommand{\flowq}[1]{\Phi_{#1}^{\circ}}
\newcommand{\flowp}[1]{\Phi_{#1}^{ * }}
\newcommand{\flowdisc}[2]{\hat{\Phi}_{#1, #2}}
\newcommand{\flowqdisc}[2]{\hat{\Phi}_{#1, #2}^{\circ}}
\newcommand{\flowpdisc}[2]{\hat{\Phi}_{#1, #2}^{ * }}
\newcommand{\levelset}[2]{L_{#1}(#2)}
\DeclareMathOperator*{\argmin}{arg\,min}
\DeclareMathOperator{\MKer}{\mathcal{K}} %
\DeclareMathOperator{\CMKer}{\bar{\MKer}} %
\DeclareMathOperator{\KerHMC}{\mathcal{K}} %
\DeclareMathOperator{\KerCHMC}{\bar{\KerHMC}} %
\DeclareMathOperator{\Energy}{\mathcal{E}}
\DeclareMathOperator{\Leb}{Leb}
\DeclareMathOperator{\TV}{D_{TV}}
\DeclareMathOperator{\Cat}{\mathcal{C}at}
\DeclareMathOperator{\Unif}{\mathcal{U}}
\DeclareMathOperator{\Normal}{\mathcal{N}}
\DeclareMathOperator{\Normald}{\mathcal{N}_d}
\DeclareMathOperator{\Exp}{\mathcal{E}xp}
\newcommand{\mtime}{\tau}
\newcommand{\traj}{\mathbf{t}}
\newcommand{\NN}{\mathbb{N}}
\newcommand{\RR}{\mathbb{R}}
\newcommand{\ZZ}{\mathbb{Z}}
\newcommand{\PP}{\mathbb{P}}
\newcommand{\vmu}{\bm{\mu}}
\newcommand{\vnu}{\bm{\nu}}
\newcommand{\vv}{\mathbf{v}}
\newcommand{\vx}{\mathbf{x}}
\newcommand{\vy}{\mathbf{y}}
\newcommand{\calB}{\mathcal{B}}
\newcommand{\calO}{\mathcal{O}}
\DeclareMathOperator{\ind}{\mathds{1}}
\newcommand{\rmd}{\mathrm{d}}
\newcommand{\E}[2][]{\mathbb{E}_{#1} \left[ #2 \right]}
\newcommand{\Prob}[2][]{\mathbb{P}_{#1} \left(#2\right)}
\begin{document}

\twocolumn[

\aistatstitle{Couplings for Multinomial Hamiltonian Monte Carlo}

\aistatsauthor{ Kai Xu$^\ast$ \And Tor Erlend Fjelde$^\ast$ \And Charles Sutton \And Hong Ge }

\aistatsaddress{ University of Edinburgh \And  University of Cambridge \And University of Edinburgh \And University of Cambridge } 

]

\begin{abstract}
Hamiltonian Monte Carlo (HMC) is a popular sampling method in Bayesian inference. Recently, Heng \& Jacob (2019) studied Metropolis HMC with couplings for unbiased Monte Carlo estimation, establishing a generic parallelizable scheme for HMC. However, in practice a different HMC method, multinomial HMC, is considered as the go-to method, e.g. as part of the no-U-turn sampler. In multinomial HMC, proposed states are not limited to end-points as in Metropolis HMC; instead points along the entire trajectory can be proposed. In this paper, we establish couplings for multinomial HMC, based on optimal transport for multinomial sampling in its transition. We prove an upper bound for the meeting time -- the time it takes for the coupled chains to meet -- based on the notion of local contractivity. We evaluate our methods using three targets: $1{,}000$ dimensional Gaussians, logistic regression and log-Gaussian Cox point processes. Compared to Heng \& Jacob (2019), coupled multinomial HMC generally attains a smaller meeting time, and is more robust to choices of step sizes and trajectory lengths, which allows re-use of existing adaptation methods for HMC. These improvements together paves the way for a wider and more practical use of coupled HMC methods.
\end{abstract}
\vspace{-1em}

\section{Introduction}
Markov chain Monte Carlo (MCMC) is a standard tool to draw samples 
from target distributions known up to a normalising constant \citep{metropolis1953equation,Geman1984-qq}.
Such samples are commonly used to estimate an integral of interest. 
Specifically, 
for a probability distribution $ \pi $ on $ \RR^d $ and 
a measurable function of interest $ h : \RR^d \mapsto \RR $, we want to estimate
\begin{equation}
H 
= \int \pi(x) h(x) \rmd x 
= \E[x \sim \pi]{h(x)}.
\label{eq:H}
\end{equation}
Approximating this integral $ H $ in \eqref{eq:H} is at the core of many statistics and machine learning problems.
For example in Bayesian inference, Monte Carlo samples are used to estimate some posterior predictive distribution, or perform model comparison \citep{gelman2013bayesian}.
Or in energy-based modelling, MCMC samples are to estimate gradients used to update model parameters \citep{teh2003energy,xie2016theory,qiu2019unbiased}.

Under the framework of MCMC, 
a Markov chain is simulated to obtain correlated samples from $\pi$, and then
Monte Carlo integration is used to estimate $H$ using these samples.
However, such estimators are unbiased only when the underlying Markov chain has converged to the equilibrium, which is challenging to verify in practice.
Therefore MCMC with couplings has attracted research attention recently thanks to its ability to debias Monte Carlo estimators \citep{jacob_unbiased_2019}.
In particular, 
\citet{heng_unbiased_2019} focused on the Metropolis-Hastings (MH) adjusted HMC variant, 
which proposes the end-point of a simulated Hamiltonian trajectory as the new state, followed by an MH correction step. 
We refer to this HMC variant as \textit{coupled Metropolis HMC}.
\citet{heng_unbiased_2019} noticed that coupled Metropolis HMC is sensitive to the choice of HMC parameters such as integrator step sizes and Hamiltonian trajectory lengths.
More specifically, 
parameters (e.g. trajectory lengths) optimal for sampling efficiency (e.g. effective sample size) can require a large number of HMC iterations to achieve meeting; on the other hand, optimal parameters for coupling can lead to poor mixing \citep{heng_unbiased_2019}.

Building upon the recent work of \citet{heng_unbiased_2019}, we propose two novel couplings based on a different, more robust implementation of HMC.
We refer to our methods as \emph{coupled multinomial HMC} and demonstrate several advantages of these methods. 
First, coupled multinomial HMC meets faster in general. 
Intuitively, like all MH algorithms, 
the previous coupled HMC method can only propose a point from the initial or the last integration step, 
which leads to two drawbacks for couplings: (i) it may well be that intermediate points are the closest between two chains and (ii) rejection rates of proposals are quite sensitive to step sizes of Hamiltonian dynamics solvers.
Multinomial coupling allows coupled chains to accept intermediate points that are potentially closer together, so they meet quicker.
We therefore design couplings to minimize the expected distance between coupled chains within each transition to encourage faster meeting.
Second, coupled multinomial HMC is less sensitive to Hamiltonian integration step sizes. For Metropolis HMC, 
a small enough step size has to be used to ensure a large enough acceptance probability in the MH adjustment step. 
However, for multinomial HMC, 
intermediate points can be proposed 
even though the end-points would have
been rejected in Metropolis HMC.
We argue that this robustness is crucial for practical use of coupled HMC algorithms.
Thirdly, we prove that
the meeting time of coupled multinomial HMC decays geometrically,
which is a sufficient condition to use the unbiased estimator from \citet{jacob_unbiased_2019}.
Finally, we perform extensive simulations to verify the improved meeting and robustness of our proposed method.

\section{Background}
\subsection{Unbiased MCMC with couplings}
\label{sec:unbiased_mcmc}

For two distributions $ p $ and $ q $,
we denote $ \Gamma(p, q) $ as their \textit{couplings},
i.e. for any $ \gamma \in \Gamma(p, q) $,
the marginals of $ \gamma $ are $p$ and $q$.
For a $\pi$-invariant Markov kernel $\MKer$ defined on $\left( \RR^d, \calB(\RR^d) \right)$,
its coupled kernel $\CMKer$, 
defined on $\left( \RR^d \times \RR^d, \calB(\RR^d) \times \calB(\RR^d) \right)$,
by construction has $\MKer$ as its marginals,
where $ \calB $ denotes the Borel $\sigma$-algebra.
Additionally, given an initial distribution $ \pi_0 $ and some $ \bar{\pi}_0 \in \Gamma(\pi_0, \pi_0)$,
a pair of coupled chains $X=(X_n)_{n \geq 0}$, $Y=(Y_n)_{n \geq 0}$ that share the same equilibrium distribution $\pi$ can be simulated by \algref{alg:coupled_chains} \citep{jacob_unbiased_2019} until meeting at iteration $\mtime := \inf \{n \ge 1 : X_n = Y_{n - 1}\}$ (the \textit{meeting time}).
\begin{algorithm}[t]
Sample $(X_0, Y_0) \sim \bar{\pi}_0$ ($\bar{\pi}_0$ is a coupling of $\pi_0$)\;
Sample $X_1 \sim \MKer(X_0, \wc)$\;
Set $N=1$\;
\While{$X_N \neq Y_{N-1}$}{
    Sample $(X_{N+1}, Y_N) \sim \CMKer((X_N, Y_{N+1}), \wc)$\;
    Set $N=N+1$\;
}
Set $\mtime = N$ and output $\{(X_n)_{n=0}^\mtime, (Y_n)_{n=0}^{\mtime-1}\}$\;
\caption{Sample a pair of coupled chains}
\label{alg:coupled_chains}
\end{algorithm}
The main design choice in this algorithm is the construction of $\CMKer$.
\citet{jacob_unbiased_2019} established that if $\CMKer$ satisfies 
certain conditions,
a pair of coupled chains $X, Y$ from Algorithm~\ref{alg:coupled_chains} can be used to obtain unbiased estimates of~\eqref{eq:H} with finite variance and finite computation cost as\looseness=-1
\begin{equation}
H_k(X, Y) = h(X_k) + {\sum}_{n=k+1}^{\mtime-1}(h(X_n) - h(Y_{n-1}))
\label{eq:H_k}
\end{equation}
where $ k \in \NN $ is a parameter to choose.
The first term in~\eqref{eq:H_k} is a standard, single-sample MCMC estimate and the second term 
can be seen as a debiasing term to correct the bias introduced by non-converged chains.
This estimator is built on the pioneering works from \citet{glynn2014exact}, derived using telescoping sums.
In practice, we use a time-averaged version of \eqref{eq:H_k}, which is still unbiased but with lower variance, e.g.~in Section~\ref{sec:exp}.

\subsection{Hamiltonian Monte Carlo}
In an HMC kernel, new states are proposed by simulating Hamiltonian dynamics \citep{neal_mcmc_2012}.
For a Hamiltonian system with a position variable $ q \in \RR^d $ and 
a momentum variable $ p \in \RR^d $, the trajectory $ \traj := \left( q(t), p(t) \right)_{t \in \RR_+} $ can be described by the following ordinary differential equations
\begin{equation}
\begin{aligned}
&\dv{q}{t} = + \nabla_p \Energy \left( q(t), p(t) \right),\\
&\dv{p}{t} = - \nabla_q \Energy \left( q(t), p(t) \right) = - \nabla U \left( q(t) \right)
\end{aligned}
\label{eq:hamiltonian-system}
\end{equation}
where the \textit{potential} $ U: \RR^d \mapsto \RR_+ $ is chosen s.t.~
the target $ \pi(q) \propto \exp \left( - U(q) \right) $,
the \textit{kinetic} term $ K: \RR^d \mapsto \RR_+ $ is 
assumed to have a form of $ K(p) = \frac{1}{2} p^\top M p $, where $ M $ is the mass matrix, and
the \textit{Hamiltonian} is defined as $ \Energy(q, p) := U(q) + K(p) $.\footnote{
Unless otherwise specified we let $M^{-1} = I_d$ throughout, though we note that $M$ can be chosen using existing adaption methods, e.g. \citep{carpenter2017stan}, or as in Riemannian HMC \citep{girolami2011riemann}.}
The extended target $\bar{\pi}$ for \textit{phase points} $ z := (q, p)$ on the \textit{phase space} $ \RR^d \times \RR^d $ is then defined as having density $ \propto \exp \left( - \Energy(q, p) \right) $.

To describe the dynamics more succinctly, 
we consider the \textit{flow} map $\flow{t}(q_0, p_0) = \left( q(t), p(t) \right)$ for~\eqref{eq:hamiltonian-system} initialized at $ (q_0, p_0) := \big( q(0), p(0) \big) \in \mathbb{R}^d \times \mathbb{R}^d $.
Following \citet{heng_unbiased_2019}, 
we write $ \flowq{t}(q_0, p_o) = q(t) $ and $ \flowp{t}(q_0, p_0) = p(t) $ for the flow projected onto its position and momentum spaces, respectively. 
The flow map $ \flow{t} $ is in general not available in closed form and requires discretization in time via numerical integrators as approximations. 
A standard choice for HMC is the \emph{leapfrog integrator} that, given an initial phase point $ (q_0, p_0) $, iterates:
\begin{equation*}
\begin{split}
    p_{\ell + 1 / 2} &:= p_{\ell} - \frac{\stepsize{}}{2} \nabla U(q_{\ell}) \\
    q_{t + 1} &:= q_{\ell} + \stepsize{} p_{\ell + 1 / 2} \\
    p_{t + 1} &:= p_{\ell + 1/2} - \frac{\stepsize{}}{2} \nabla U(q_{\ell + 1})
\end{split}
\end{equation*}
for $\ell = 0, \dots, L - 1$ with a step size $ \stepsize{} > 0 $ and leapfrog steps $ L \in \NN $.
We denote $ \flowdisc{\stepsize{}}{\ell}(q_0, p_0) := (q_{\ell}, p_{\ell})$ as the numerical flow map approximated by a leapfrog integrator with a step size $ \stepsize{} $ for $ \ell $ steps, and
similarly $\flowqdisc{\stepsize{}}{\ell}$ and $\flowpdisc{\stepsize{}}{\ell}$ for projected maps onto position and momentum.\looseness=-1%

\paragraph{Metropolis HMC}

One can design an MCMC kernel by proposing the \textit{end-point} of a Hamiltonian trajectory in~\eqref{eq:hamiltonian-system}.
In practice, a discretized trajectory is obtained by leapfrog integration.
Due to numerical errors in the simulation,
in order to ensure the kernel $ \pi $-invariant,
the proposal needs to be adjusted by a Metropolis-Hasting step \citep{metropolis1953equation,neal_mcmc_2012}.
Denoting $\Normald$ as the $d$-dimensional standard Gaussian, the kernel $Q \sim \MKer_{\stepsize{},L}^{\text{MH}}(Q_0, \wc)$ for Metropolis HMC follows
\begin{equation}
\begin{aligned}
P_0 &\sim \Normald,\quad
(q_L, p_L) = \flowdisc{\stepsize{}}{L}(Q_0, P_0),\\
Q &=
\begin{cases}
    q_L &\text{ with prob. } \min \set{1, \exp ( \Delta_{\Energy} ) } \\
    Q_0   &\text{ otherwise}
\end{cases}
\end{aligned}
\label{eq:metropolis-hmc}
\end{equation}
where $ \Delta_{\Energy} := -\Energy(q_L, p_L) + \Energy(Q_0, P_0) $ is the energy difference between the origin and the proposal.

\paragraph{Multinomial HMC}

\citet{betancourt_conceptual_2018} describes a trajectory variant of HMC,
which we refer as \textit{multinomial} HMC and denote $ \MKer_{\stepsize{},L}^{\text{Mult}}(Q_0, \wc) $.
In multinomial HMC, 
all intermediate points of a numerical trajectory can be proposed as the next state:
\begin{equation}
P_0 \sim \Normald,
\traj \sim \PP_{\stepsize{}, L}(\cdot \mid Q_0, P_0),
(Q, P) \sim \PP(\cdot \mid \traj)
\label{eq:multi-hmc}
\end{equation}
where $\traj := [ (q_{-L_{\text{b}}}, p_{-L_{\text{b}}}), \dots, (Q_0, P_0), \dots, (q_{L_{\text{f}}}, p_{L_{\text{f}}}) ] $. 
The \textit{trajectory sampling} $ \traj \sim \PP_{\stepsize{}, L}(\cdot \mid Q_0, P_0)$ follows
\begin{equation}
\begin{aligned}
L_{\text{f}} &\sim \Unif(\{0, \dots, L\}),\quad
L_{\text{b}} = L - L_{\text{f}}, \\
(q_\ell, p_\ell)  &=
\begin{cases}
    \flowdisc{\stepsize{}}{\ell}(Q_0, +P_0)  &\text{ for } \ell = 1, \dots, L_{\text{f}} \\
    \flowdisc{\stepsize{}}{\ell}(Q_0, -P_0) &\text{ for } \ell = 1, \dots, L_{\text{b}}
\end{cases}
\end{aligned}
\label{eq:multi-traj}
\end{equation}
The \textit{intra-trajectory sampling} $Q, P \sim \PP(\cdot \mid \traj)$ follows a multinomial distribution \citep{betancourt_conceptual_2018} as
\begin{equation}
\PP\big((Q, P) = (q_\ell, p_\ell) \mid \traj \big) = \sigma\big((q_\ell, p_\ell), \traj\big)
\label{eq:multi-intra}
\end{equation}
where $\sigma(z, \traj) := \exp \left( -\Energy(z) \right) / \sum_{z' \in \traj} \exp \left( -\Energy(z') \right)$.

\subsection{Coupled MCMC kernels}

The coupled HMC kernel in \citep{heng_unbiased_2019} \textit{and} 
the coupled kernels proposed in this work can be unified through Algorithm~\ref{alg:coupled-hmc}.
\begin{algorithm}[t]
Sample $P_0 \sim \Normald $ \;%
Sample $L_\text{f} \sim \PP_{L_\text{f}}$ and set $L_\text{b} = L - L_\text{f}$ \;
\For{$c = 1, 2$}{
    Integrate using leapfrog to obtain $\traj^c = [\flowdisc{\stepsize{}}{-L_\text{b}}(Q_0^c, -P_0), \dots (Q_0^c, P_0), \dots \flowdisc{\stepsize{}}{L_\text{f}}(Q_0^c, P_0)] $
}
Sample next state indices $(i, j) \mid (\traj^1, \traj^2) \sim \bar{P}_\ell$\;
Set $(Q^1, P^1) = \traj^1_i, (Q^2, P^2) = \traj^2_j $\;
Output $(Q^1, Q^2)$\;
\caption{Coupled HMC kernels}
\label{alg:coupled-hmc}
\end{algorithm}
Specifically, letting (i) In Line 2, $\PP_{L_\text{f}}(L_\text{f}=L) = 1$ and (ii) In Line 5, $(i, j) \mid (\traj^1, \traj^2) \sim \bar{P}_\ell$ in \algref{alg:coupled-hmc}, we recover the $\CMKer_{\stepsize{}, L}^{\text{MH}}$ from \citet{heng_unbiased_2019}, where $\bar{P}_\ell$ follows the generative process
\begin{equation*}
\begin{aligned}
    u \sim &\Unif([0, 1]),\; \\
    i =
    \begin{cases}
        L &\text{if } u < \alpha^1 \\
        0 &\text{otherwise }
    \end{cases}\;
    \quad &\text{and} \quad
    j  =
    \begin{cases}
        L &\text{if } u < \alpha^2 \\
        0 &\text{otherwise }
    \end{cases}
\end{aligned}
\end{equation*}
where $\alpha^c = \exp \left( -\Energy(\traj_L^c) + \Energy(\traj_0^c)\right)$ for $c = 1, 2$.
This corresponds to using \textit{common random number} (CRN) in the MH correction steps in~\eqref{eq:metropolis-hmc}.
For coupled multinomial HMC kernels studied in this work, which we denote $\KerCHMC_{\stepsize{}, L}^{\gamma}$, we make different choices for Line 2 and Line 5 in \algref{alg:coupled-hmc}.
In short, 
Line 2 will be a coupled version of~\eqref{eq:multi-traj} and 
Line 5 will correspond to a coupling $\gamma$ of two multinomial distributions as ~\eqref{eq:multi-intra}.
We will discuss them in detail in Section~\ref{sec:mchmc}.

Although coupled HMC kernels can bring two chains within a small neighborhood of each other, the probability of \emph{exact} meeting is zero, thus failing to satisfy conditions for using~\eqref{eq:H_k}.
To alleviate this issue
\citet{heng_unbiased_2019} instead propose a mixture of coupled random-walk Metropolis-Hastings (RWMH) $\CMKer_\sigma$ and coupled HMC $\CMKer_{\stepsize{},L}$ to trigger ``exact meeting''.
The coupled RWMH kernel $\CMKer_\sigma$ with proposal variance $\sigma^2 I_d$ uses maximal coupling \citep{johnson1998coupling,jacob_unbiased_2019} to encourage two chains meet exactly when they are close.
For completeness, we provide it as Algorithm~\ref{alg:coupled-mh} in Appendix~\ref{app:background}.
The overall mixture kernel, denoted $\CMKer_{\stepsize{}, L, \sigma}$, is then defined as
\begin{equation}\label{eq:mixture_kernel}
\CMKer_{\stepsize{}, L, \sigma} \left(\bar{x}, \bar{A} \right) =
(1 - \alpha) \CMKer_{\stepsize{}, L} \left( \bar{x}, \bar{A}  \right) + \alpha \CMKer_{\sigma} \left( \bar{x}, \bar{A}  \right)
\end{equation}
for $\alpha \in (0, 1)$, $\bar{x} := (x, y) \in \RR^d \times \RR^d$ and $\bar{A} := (A, B) \in \calB(\RR^d) \times \calB(\RR^d)$.
That is, with probability $\alpha$ we use the coupled RWMH kernel and with probability $1 - \alpha$ we use the HMC kernel.
\citet{heng_unbiased_2019} proves that, under certain assumptions,
if the \textit{relaxed meeting time} $\tau_\delta := \inf \{ n \geq 0 : \norm{X_n - Y_{n-1}}  \leq \delta \} $ of the coupled HMC kernel $\CMKer_{\stepsize, L}$ has geometric tails for any $\delta > 0$, the chains meet exactly with non-zero probability under $\CMKer_{\stepsize{}, L, \sigma}$ for any $\alpha \in (0, 1)$,
warranting the use of~\eqref{eq:H_k}.

The key conditions that ensure the unbiasedness, finite variance and finite computation cost of~\eqref{eq:H_k} are (i) the coupled chains marginally converge to the target and (ii) the two chains meet sufficiently quickly and stay together after meeting; see \citet{jacob_unbiased_2019} for explicit definitions.
Suppose our proposed HMC kernels satisfy (i) by construction, to ensure the method satisfies (ii) it is sufficient to prove that the relaxed meeting time has geometric tails. 
This we establish in \secref{sec:theoretical}.\looseness=-1

\section{Optimal Transport Couplings for Multinomial HMC}
\label{sec:mchmc}

Recall that in order to use the multinomial HMC kernel $\MKer_{\stepsize{},L}^{\text{Mult}}$ in Algorithm~\ref{alg:coupled-hmc},
we need to specify how Line 2 and Line 5 are performed.
First, the number of leapfrog steps forward and backward sampled in Line 2 follows~\eqref{eq:multi-traj}, inheriting from multinomial HMC, and is shared between the two chains.
In other words, both chains simulate forward and backward for the same number of steps, making them ``aligned in-time''.
Second, Line 5 correspond to a coupling of the intra-trajectory multinomial sampling step in~\eqref{eq:multi-intra}.
To ensure that the marginal chains are equivalent to the original multinomial kernel, it is sufficient to sample $ (i,j) $ such that
the corresponding marginal \textit{categorical distributions} $\vmu$ and $\vnu$ 
of~\eqref{eq:multi-intra} 
for indices $i, j$ are preserved
\begin{equation*}
\begin{aligned}
\vmu : \Cat(\ell=i) = \sigma(\traj^1_\ell, \traj^1),&&
\vnu : \Cat(\ell=j) = \sigma(\traj^2_\ell, \traj^2)
\end{aligned}
\end{equation*}
Here we overload the notations $\vmu$ and $\vnu$ also to refer to their corresponding \textit{probability vectors}.

To this end, our method is fully specified by providing an algorithm to sample $ (i, j) $ such that $ i \sim \vmu $ and $ j \sim \vnu $.
The collection of such joint distributions for $(i,j)$ are couplings of $\vmu$ and $\vnu$, i.e. $\Gamma(\vmu, \vnu)$.

\subsection{Optimal transport couplings}

To repeat, our aim is to construct coupled kernels in which the coupled chains from \algref{alg:coupled_chains} meet in a relatively small number of MCMC steps, i.e. short meeting time.
Unfortunately, it is not clear how to directly minimize the meeting time. 
Intuitively, one might expect a kernel which, informally, ``brings chains closer'' to also have an improved meeting time. Naturally this brings us to consider the following problem:
\begin{equation}
\gamma := \argmin_{\gamma'} \sum_{i,j} \gamma_{ij}' D_{ij} \;\text{ s.t.~}\; \gamma' \in \Gamma(\vmu, \vnu)
\label{eq:ot_objective}
\end{equation}
where $D_{ij} = d(\traj_i^1, \traj_j^2)$ is the distance in the \textit{position space} between the $i$-th point in the first trajectory and the $j$-th point in the second. 
This is an example of a \emph{Kantorovich problem}, a well-studied family of problems from optimal transport \citep{villani2003topics}. 
In the case where $d_2^p(x, y) = \norm{x - y}_2^p$, we will refer to the minimizer as the \emph{$W_p$-coupling} due to the role it plays in the Wasserstein distance wrt. Euclidean metric $W_p(\vmu, \vnu) = \left( \inf_{\gamma \in \Gamma(\vmu, \vnu)} \mathbb{E}_{(X, Y) \sim \gamma} \norm{x - y}_2^p \right)^{1/p}$.

In this work we will consider two different choices for the metric $d$: 1) Euclidean distance $d_2$ which gives rise to the \textit{$W_2$-coupling}, and 2) 0-1 distance $d_I$ which gives rise to the \textit{maximal coupling}.

\subsection{$W_2$-coupling}
Arguably the most natural choice of metric $d$ in~\eqref{eq:ot_objective} is the squared Euclidean distance $d_2^2(\traj_i^1, \traj_j^2) = \norm{q_i^1 - q_j^2}_2^2$, whose solution we denote $\gamma^\circ$. Once we obtain $\gamma^\circ$, sampling $(i, j)$ is straightforward.
For completeness, we provide the full algorithm as Algorithm~\ref{alg:joint_sampling} in Appendix~\ref{app:method}.

Computationally, 
the optimization in~\eqref{eq:ot_objective} can be solved by generic linear programming solvers or more specialized methods, 
e.g. as in \citet{bonneel2011displacement}.
Such solvers in general have a time complexity $\calO(K^3)$ where $K$ is the length of the probability vectors $\vmu$ and $\vnu$.
This can be alleviated by using an approximate solver which could introduce biases.
Therefore, similarly to \citet{jacob16_coupl_partic_filter}, we also describe a debiasing method that allows the use of approximate solvers in \appref{app:method}.

\subsection{Maximal coupling}
For general choices of $d$~\eqref{eq:ot_objective} we do not have analytical solutions,
but for the particular choice $d_{I}(\traj_i^1, \traj_j^2) = \ind(i \neq j)$ we do.
In this case, the solution is the well-known maximal coupling $\gamma^\ast$ of two categorical distributions, which can be represented in its mixture view as
\begin{eqnarray}
\gamma^\ast =
\omega \frac{\vmu \wedge \vnu}{Z} + (1 - \omega) \frac{\vmu - (\vmu \wedge \vnu)  + \vnu - (\vmu \wedge \vnu)}{1 - Z}
\label{eq:maximal_coupling}
\end{eqnarray}
where $\wedge$ is the point-wise minimum operation,
$\omega = \PP(i=j)$ and
$Z = \sum_i (\vmu \wedge \vnu)_i $;
sampling from $\gamma^\ast$ is therefore tractable and straightforward. 
The process is summarized in Algorithm~\ref{alg:maximal_coupling} in \appref{app:method}.

By definition, for a maximal coupling $\gamma^\ast$
the probability of choosing pairs with the same time-index in two trajectories is maximized; we refer to such pairs with same indices as "index-aligned" pairs.
As we will see in \secref{sec:theoretical}, this property allows us to exploit Lemma 1 in \citet{heng_unbiased_2019} to show that the distance between the two coupled chains decreases with non-zero probability when the potential is strongly convex, or, equivalently, the target is strongly log-concave.

Though the idea of index-aligned pairs is useful to establish the theoretical results, it is not necessarily so in practice.
Note that as the approximation of Hamiltonian simulation by numerical integration becomes more accurate when step sizes become smaller, the joint $\gamma^\ast$ converges to the diagonal uniform distribution, i.e. $\gamma_{ii} \approx 1 / K$ and $\gamma_{ij} \approx 0$ for $i \ne j$ for large $K$. It is easy to construct examples where this leads to sub-optimal behavior when the goal is to minimize distance between the proposed states; \figref{fig:illustration} illustrates this nicely.

\subsection{An illustration of different couplings}
\label{sec:illustration}

We now illustrate how different intra-trajectory couplings behave using a 2D Gaussian with zero mean and unit diagonal covariance.
We start by simulating two Hamiltonian trajectories from $q^1_0=[0.5, 2.0]$ and $q^2_0=[0.5, -1.0]$ using the same momentum $p_0=[1.0, 1.0]$ for 7 steps, 
obtaining two trajectories $\traj^1$ and $\traj^2$ in Figure~\ref{fig:illustration-trajectory},
where the arrows represent the initial momentum $p_0$.
\begin{figure}[t]
    \ffigbox[\textwidth]{%
    \begin{subfloatrow}
        \ffigbox[0.8\textwidth]
        {\caption{Coupled trajectories}\label{fig:illustration-trajectory}}
        {\includegraphics[trim={0 1.0cm 0 1.0cm},clip,width=\linewidth]{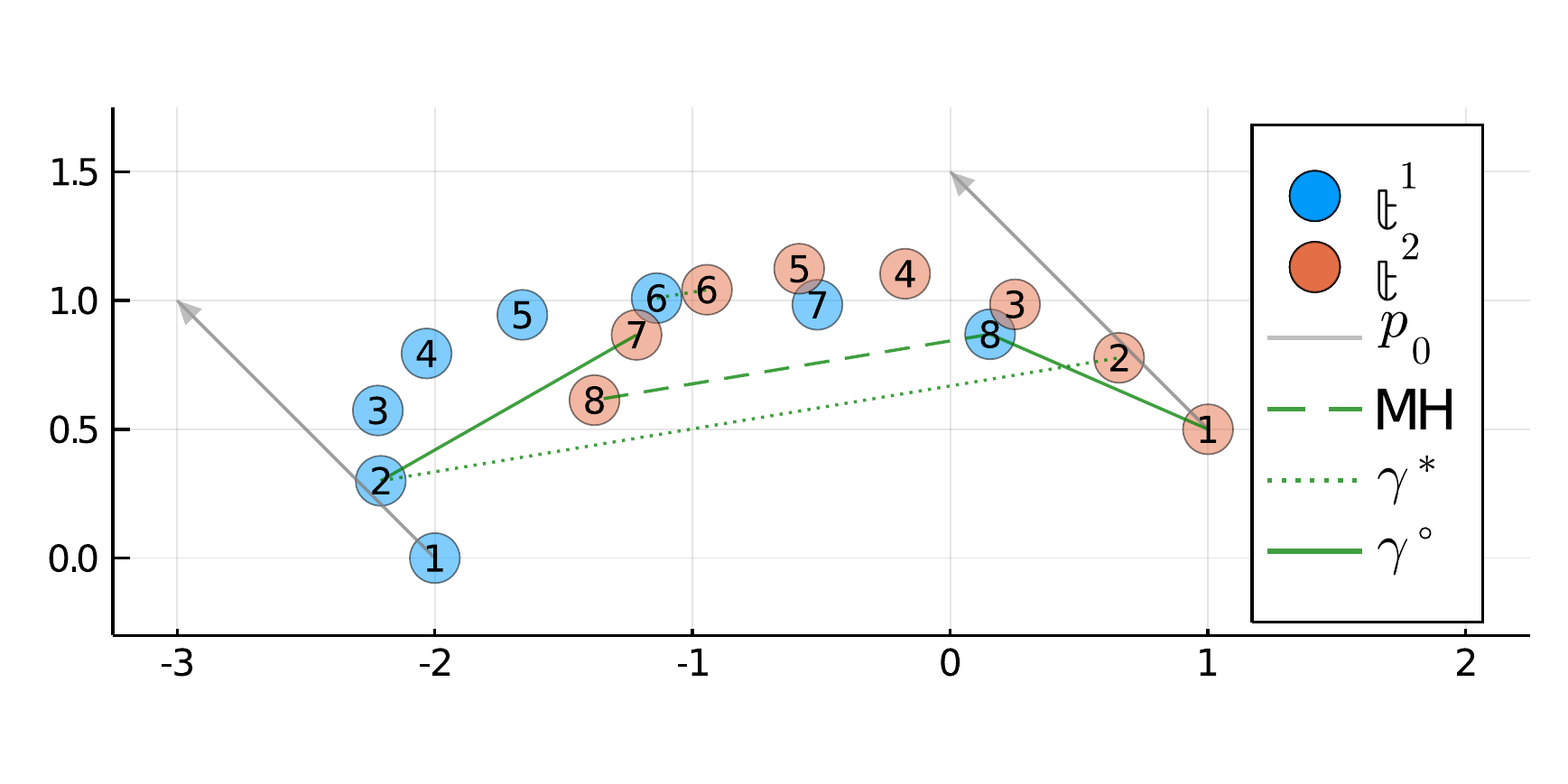}}
    \end{subfloatrow}
    \begin{subfloatrow}[2]
    \ffigbox[0.4\textwidth]
    {\caption{Maximal coupling}\label{fig:illustration-maximal}}
    {\includegraphics[trim={0 1.0cm 0 0.5cm},clip,width=\linewidth]{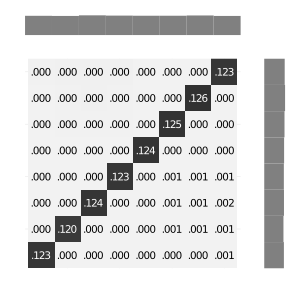}}
    \ffigbox[0.4\textwidth]
    {\caption{$W_2$-coupling}\label{fig:illustration-ot}}
    {\includegraphics[trim={0 1.0cm 0 0.5cm},clip,width=\linewidth]{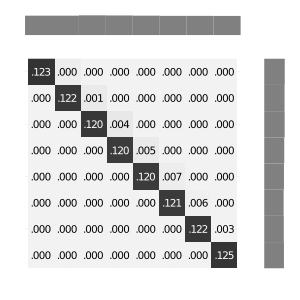}}
    \end{subfloatrow}
    }{%
    \caption{An illustration of different HMC couplings. Green lines in Figure~\ref{fig:illustration-trajectory} indicate possible pairs from different methods. For coupled Metropolis HMC, the dashed line pairs the end-points of two trajectories, which has a relative large distance. The dotted line is for multinomial HMC with maximal coupling. Though there is a change that the $6$-th points of two trajectories are paired, other index-aligned pairs are equally likely (Figure~\ref{fig:illustration-maximal}). E.g. the pair of $2$-th points has a large distance. In contrast, all pairs from multinomial HMC with $W_2$-coupling (solid lines) have relatively small distances, resulting in a small distance on average. 
    To see this, we calculate the expected distances: they are $1.37$ for $W_2$-coupling and $1.97$ for maximal coupling, where the former is clearly smaller, as expected.
    }
    \label{fig:illustration}
    }
    \vspace{-1.0em}
\end{figure}
We then sample from our two couplings to generate $100,000$ pairs of indices to estimate the joint distributions and to compute the marginals, which are shown in Figure~\ref{fig:illustration-maximal} and Figure~\ref{fig:illustration-ot}.
Note how the joint distributions differs: the ordering of the pairings are ``reversed''.
This intuitively makes sense when looking at Figure~\ref{fig:illustration-trajectory},
in which, e.g. the closest point for $\traj^1_1$ is $\traj^2_8$.
Finally, note that this U-turn example is chosen to highlight the differences between the two couplings.
If there was no U-turn, the differences between the two couplings could potentially be smaller.
\section{Theoretical analysis}
\label{sec:theoretical}
We now establish geometric tails for the meeting time for the mixture kernel in \eqref{eq:mixture_kernel} with the proposed coupled HMC kernels as the HMC component, thus satisfying the necessary conditions to use the estimator \eqref{eq:H_k}.

\paragraph{Proof sketch}
To prove geometric tails it turns out that it is sufficient to prove that the methods satisfy the conditions for Proposition 1 in \citet{heng_unbiased_2019}. 
Informally, the proposition states that once the chains enter a region $S$ in the state space where the target density is strongly log-concave, there is a non-zero probability that the chains will end up in a $\delta \text{-neighbourhood}$ of each other in some $n_0 \in \mathbb{N}$ steps.
The proof presented in \citet{heng_unbiased_2019} obtains this statement for the coupled Metropolis HMC by arguing directly about the probabilities of such an event conditioned on the initial states being in $S$. 
Here we instead prove a slightly stronger property, \emph{local contractivity}, from which the proposition follows immediately.
Informally, local contractivity ensures that the distance between the chains will decrease on average when initialized in some region.
We first prove that this holds for the maximal coupling by exploiting the fact that it maximizes the probability of picking index-aligned pairs, which, as mentioned before, is guaranteed to decrease the distance compared to the initial positions for strongly log-concave targets.
Once this has been established, local contractivity for the $W_2$-coupling follows immediately since by definition $W_2$-coupling has a smaller expected distance than the maximal coupling.
The remainder of the proof is essentially identical to \citet{heng_unbiased_2019} where excursions from the set $S$ is controlled with a geometric drift condition, from which we obtain geometric tails for the meeting time and thus validity of the methods.

Following \citet{heng_unbiased_2019}, we make two assumptions on the potential function $U: \RR^d \mapsto \RR $. %
\begin{assumption}[Regularity and growth of potential]
  The potential $U$ is twice continuously differentiable and its gradient $\nabla U: \RR^d \mapsto \RR^d$ is globally $\beta$-Lipschitz, i.e. there exists $\beta > 0$ such that
  $\norm{\nabla U(q) - \nabla U(q')} \leq \beta \norm{q - q'}$
  for all $q, q' \in \RR^d$.
  \label{ass:regularity_and_growth}
\end{assumption}
\begin{assumption}[Local strong convexity of potential]
  There exists a compact set $S \in \calB(\RR^d)$, with positive Lebesgue measure, s.t.~the restriction of the potential $U$ to $S$ is $\alpha$-strongly convex, i.e.,
  $\exists\;\alpha > 0$ s.t.~$(q - q')^\top \left( \nabla U(q) - \nabla U(q') \right) \geq \alpha \norm{q - q'}^2$
  for all $q, q' \in S$.
  \label{ass:local_strong_convexity}
\end{assumption}

Unless otherwise specified, we will let $S$ denote the set in \assumptref{ass:local_strong_convexity}, $\KerCHMC_{\stepsize{}, L}^{\gamma}$ denote a coupled HMC kernel as described in \algref{alg:coupled-hmc} with (i) shared momentum, (ii) shared forward and backward simulation steps and (iii) $(i, j) \sim \gamma$ for intra-trajectory sampling, and $\PkerCHMC{\stepsize{}}{L}{\gamma}$ denote the law of a coupled HMC kernel $\KerCHMC_{\stepsize{}, L}^{\gamma}$. For functions $f: \mathbb{R}^d \to \mathbb{R}$, we will also use the notation $L_{\ell}(f) = \left\{ x \in \mathbb{R}^d : f(x) \le \ell \right\}$ for the levelsets of $f$ and $f_A := f \big|_{A}$ for the restriction of $f$ to $A \in \mathcal{B}(\mathbb{R}^d)$.

\subsection{Geometric tails via local contractivity}
\label{sec:theory-tails}

We first state the definition of local contractivity and Proposition 1 from \citet{heng_unbiased_2019}.

\begin{condition}[Local contractivity]
    Given a compact set $ S \in \calB(\RR) $ with positive Lebesgue measure, we say the kernel $\CMKer_{\stepsize{}, L}^{\gamma}$ is \emph{locally contractive} on $S$ with rate $\rho \in (0, 1)$ if there exists $m \ge 1$ such that for any given $k_0 > 0$ there exists $\bar{\stepsize{}} > 0$, $\bar{L} \in \mathbb{N}$ s.t.
    \begin{equation}\label{eq:local_contractivity}
        \E[(l_1, l_2) \sim \gamma]{\norm{\flowqdisc{\stepsize{}}{l_1}(q^1, p) - \flowqdisc{\stepsize{}}{l_2}(q^2, p)}^m} \le \rho^m \norm{q^1 - q^2}^m
    \end{equation}
    for all $\stepsize{} \in (0, \bar{\stepsize{}})$, $L  \in \mathbb{N}$ such that $\stepsize{} L < \bar{\stepsize{}} \bar{L}$ and for all $(q^1, q^2, p) \in S \times S \times \levelset{k_0}{K}$.
    \label{ass:local_contractivity}
\end{condition}

Informally, this is saying that there exists a step size and integration time such that a single application of $\CMKer_{\stepsize{}, L}^{\gamma}$ decreases the distance between the two states on average. Furthermore, this property is preserved when decreasing the step size or the integration time. This last part is important since different parts of the analysis will require possibly smaller step sizes and/or integration times. Thus, by ensuring that all statements hold for all smaller step sizes and integration times, we can combine the statements by simply choosing the minimum of the step sizes and/or integration times required by the different statements.

\begin{proposition}[Proposition 1, \citet{heng_unbiased_2019}]\label{prop:proposition-1-heng2019}
Suppose that the potential $U$ satisfies Assumptions \ref{ass:regularity_and_growth} and \ref{ass:local_strong_convexity}.
Then for any $\delta > 0$, $u_0 > \inf_{q \in S} U(q)$, and $u_1 < \sup_{q \in S} U(q)$ with $u_0 < u_1$, there exists $\bar{\stepsize{}} > 0$ and $\bar{L} \in \mathbb{N}$ such that for any $\stepsize{} \in (0, \bar{\stepsize{}})$ and $L \in \mathbb{N}$ satisfying $\stepsize{} L < \bar{\stepsize{}} \bar{L}$, there exists $v_0 \in (u_0, u_1)$, $n_0 \in \mathbb{N}$ and $\omega \in (0, 1)$ such that
\begin{equation}\label{eq:proposition-1-heng2019}
    \inf_{q^1, q^2 \in S_0} \KerCHMC_{\stepsize{}, L}^{\gamma, n_0} \left( (q^1, q^2), D_{\delta} \right) \ge \omega
\end{equation}
where $S_0 = L_{v_0}(U_S)$ is compact with positive Lebesgue measure,
\begin{equation*}
\begin{aligned}
    & \KerCHMC_{\stepsize{}, L}^{\gamma, n} \left( (q^1, q^2), A^1 \times A^2 \right) = \\
    &\quad \PkerCHMC{\stepsize{}}{L}{\gamma} \left( (Q_n^1, Q_n^2) \in A^1 \times A^2 \mid (Q_0^1, Q_0^2) = (q^1, q^2) \right)
\end{aligned}
\end{equation*}
denotes the n-step transition probabilities of the coupled chain, and $D_{\delta} = \set{(q, q') \in \mathbb{R}^d \times \mathbb{R}^d : \norm{q - q'} \le \delta}$.
\end{proposition}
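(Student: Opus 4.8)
The plan is to deduce \propref{prop:proposition-1-heng2019} from the stronger \emph{local contractivity} property of \condref{ass:local_contractivity}, which I would first establish for the coupled kernels $\KerCHMC_{\stepsize{}, L}^{\gamma}$ with $\gamma \in \{\gamma^\ast, \gamma^\circ\}$ and $m = 2$. There are three stages: (i) prove local contractivity for the maximal-coupling kernel $\KerCHMC_{\stepsize{}, L}^{\gamma^\ast}$; (ii) transfer it to the $W_2$-coupling by optimality; (iii) run a standard iterated coupling argument as in \citet{heng_unbiased_2019} to promote a one-step expected contraction into the $n_0$-step hitting bound \eqref{eq:proposition-1-heng2019}. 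Throughout I use the three structural features fixed in the statement: $P_0$ is shared, the split $(L_\text{f}, L_\text{b})$ is shared (so $\traj^1$ and $\traj^2$ are aligned in time), and $(i, j) \sim \gamma$.

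\textbf{Local contractivity for the maximal coupling.} Fix $k_0 > 0$ and $(q^1, q^2, p) \in S \times S \times \levelset{k_0}{K}$. Since $\nabla U$ is globally Lipschitz (\assumptref{ass:regularity_and_growth}), the leapfrog flow converges uniformly to the exact Hamiltonian flow on the bounded time window $[-L_\text{b}\stepsize{}, L_\text{f}\stepsize{}]$, along which $\Energy$ is exactly conserved; hence $\sup_{\ell}\abs{\Energy(\traj^c_\ell) - \Energy(\traj^c_0)} \to 0$ as $\stepsize{} \to 0$ with $\stepsize{} L$ bounded, so the marginal weight vectors $\vmu, \vnu$ converge to the uniform vector on $\{-L_\text{b}, \dots, L_\text{f}\}$ and $\TV(\vmu, \vnu) \to 0$. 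Thus the diagonal mass $\omega = \PP(i = j) = 1 - \TV(\vmu, \vnu)$ exceeds $1 - \eta$ for any prescribed $\eta$ once $\bar{\stepsize{}}$ is small enough. On the event $\{i = j = \ell\}$ both chains are mapped to $\flowqdisc{\stepsize{}}{\ell}(q^c, p)$ with the \emph{same} index, and Lemma~1 of \citet{heng_unbiased_2019} gives the position-flow contraction $\norm{\flowqdisc{\stepsize{}}{\ell}(q^1, p) - \flowqdisc{\stepsize{}}{\ell}(q^2, p)} \le \rho_0(\ell, \stepsize{})\norm{q^1 - q^2}$ with $\rho_0 \le 1$ whenever $\stepsize{} L < \bar{\stepsize{}}\bar L$ and $\rho_0$ bounded away from $1$ on a subset of indices of positive probability under the (near-uniform) intra-trajectory law. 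Averaging over $\ell$ and bounding the off-diagonal contribution crudely --- both trajectories lie in a fixed compact set and the leapfrog map is Lipschitz in its position argument on the convex sublevel sets of $U_S$, so distances there are $\calO(\norm{q^1 - q^2})$ --- yields $\E[(l_1, l_2)\sim\gamma^\ast]{\norm{\flowqdisc{\stepsize{}}{l_1}(q^1, p) - \flowqdisc{\stepsize{}}{l_2}(q^2, p)}^2} \le \rho^2 \norm{q^1 - q^2}^2$ for some $\rho < 1$ once $\eta$, $\bar{\stepsize{}}$, $\bar L$ are chosen appropriately, and the bound is preserved under shrinking $\stepsize{}$ or $\stepsize{} L$.

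\textbf{Transfer and deduction of the proposition.} Because $\gamma^\circ$ minimises $\sum_{i, j}\gamma_{ij}\norm{q^1_i - q^2_j}^2$ over $\Gamma(\vmu, \vnu)$, its expected squared distance is no larger than that of $\gamma^\ast$, so \condref{ass:local_contractivity} (with $m = 2$ and the same $\rho$) holds for $\KerCHMC_{\stepsize{}, L}^{\gamma^\circ}$ too. For (iii), given $\delta, u_0, u_1$, choose $v_0 \in (u_0, u_1)$ so that $S_0 := L_{v_0}(U_S)$ is a convex, positive-measure sublevel set lying strictly inside $S$; shrinking $\bar{\stepsize{}}, \bar L$ further and using near-conservation of $\Energy$ as in \citet{heng_unbiased_2019}, arrange that whenever the resampled momentum lies in $\levelset{k_0}{K}$ any trajectory started in $S_0$ stays inside $S$. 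Let $E_{n_0}$ be the event that this momentum condition holds at each of the first $n_0$ steps; then $\PP(E_{n_0}) = p_{k_0}^{n_0}$ with $p_{k_0} := \PP(\tfrac{1}{2}\norm{P_0}^2 \le k_0) > 0$, on $E_{n_0}$ all iterates stay in $S$, and iterating \condref{ass:local_contractivity} gives $\E{\norm{Q^1_{n_0} - Q^2_{n_0}}^2 \mid E_{n_0}} \le \rho^{2 n_0}\mathrm{diam}(S_0)^2$. Markov's inequality makes this $\le \delta^2 / 2$ once $n_0$ is large enough, so $\KerCHMC_{\stepsize{}, L}^{\gamma, n_0}\big((q^1, q^2), D_\delta\big) \ge \tfrac{1}{2}p_{k_0}^{n_0} =: \omega > 0$ uniformly over $q^1, q^2 \in S_0$, which is \eqref{eq:proposition-1-heng2019}.

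\textbf{Main difficulty.} The hard part is the local-contractivity step, specifically making two estimates quantitative and compatible over the admissible regime $\stepsize{} \downarrow 0$, $\stepsize{} L < \bar{\stepsize{}}\bar L$: (a) that the maximal coupling concentrates on index-aligned pairs, which rests on near-conservation of the leapfrog energy along the \emph{whole} trajectory, hence on \assumptref{ass:regularity_and_growth}; and (b) that the trajectory-averaged contraction rate of the position flow is strictly below $1$, which needs Lemma~1 of \citet{heng_unbiased_2019} together with the observation that indices with weak or no contraction are harmless (that factor never exceeds $1$ there) while $\stepsize{} L < \bar{\stepsize{}}\bar L$ keeps every index in the regime where the lemma applies. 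One must also check that the leapfrog discretisation error enters multiplicatively in $\norm{q^1 - q^2}$, which is why the analysis is carried out on the convex sublevel sets $S_0$ of $U_S$ rather than on all of $S$. The remaining pieces --- the transfer to $\gamma^\circ$, the iterated coupling argument above, and the subsequent passage from \propref{prop:proposition-1-heng2019} to geometric tails of the meeting time via a geometric drift condition on excursions out of $S$ --- then follow \citet{heng_unbiased_2019} with only cosmetic changes.
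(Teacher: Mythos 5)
Your route is essentially the paper's own: the paper likewise deduces the conclusion of \propref{prop:proposition-1-heng2019} from \condref{ass:local_contractivity} (\lemref{lemma:local-contractivity-implies-prop-1}), establishes that condition for the maximal coupling by combining a total-variation bound $\TV(\vmu,\vnu)=\calO(\stepsize)$ (\propref{prop:maximal_coupling}, via the order-two leapfrog energy error and 1-Lipschitzness of softmax) with the index-aligned contraction inherited from Lemma 1 of Heng--Jacob after absorbing discretization error (\lemref{lemma:local_abs_contractivity_disc}, \propref{prop:property-parallel}, \lemref{lemma:contractivity-maximal}), and transfers to the $W_2$-coupling by optimality (\lemref{lemma:contractivity-ot}). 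Your last stage differs only cosmetically: the paper converts the expected contraction into a ``contract by $\rho$ with probability at least $\omega_1\omega_2$'' event and iterates that event $n_0=\min\{n:\rho^n B\le\delta\}$ times, keeping all iterates in $S$ via the cumulative energy budget $v_0+(n_0+1)k_0+n_0\eta_0<u_1$, whereas you iterate the $m=2$ expectation and finish with Markov's inequality; both work, but note the containment must be arranged over all $n_0$ steps jointly (the chain leaves $S_0$ after one step), which is exactly what that budget encodes and what your phrase ``any trajectory started in $S_0$ stays inside $S$'' glosses over.

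One concrete misstep: you bound the off-diagonal contribution by claiming that $\norm{\flowqdisc{\stepsize{}}{l_i}(q^1,p)-\flowqdisc{\stepsize{}}{l_j}(q^2,p)}$ for $i\neq j$ is $\calO(\norm{q^1-q^2})$ by Lipschitzness of the leapfrog map. That is false: these two points sit at different integration times along (nearly) the same trajectory, so even when $q^1=q^2$ they are in general far apart, and no bound proportional to $\norm{q^1-q^2}$ is available. The paper instead bounds this expectation by a finite constant (compactness of the relevant sets) and multiplies it by the off-diagonal mass $1-\omega=\TV(\vmu,\vnu)$, which \propref{prop:maximal_coupling} drives to zero as $\stepsize{}\to 0$ at fixed integration length $T$; replace your crude bound by that argument (as in the proof of \lemref{lemma:contractivity-maximal-restated}) and the remainder of your sketch matches the paper.
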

As noted earlier, this proposition is a key step in the proof of the coupled Metropolis HMC kernel in \citet{heng_unbiased_2019}. This ensures that once we reach the set $S_0 \subset S$, there is a non-zero probability that within some finite number of steps the chains will be $\delta$-close, i.e.~meet in the relaxed sense. If we can then also ensure that this set $S_0$ will be entered by the coupled chains sufficiently often, then we bound the tails of distribution over meeting times.

We now establish that indeed, for coupled multinomial HMC kernels, \condref{ass:local_contractivity} implies \propref{prop:proposition-1-heng2019}.

\begin{lemma}\label{lemma:local-contractivity-implies-prop-1}
If $\KerCHMC_{\stepsize{}, L}^{\gamma}$ satisfies \condref{ass:local_contractivity}, then $\KerCHMC_{\stepsize{}, L}^{\gamma}$ satisfies the conditions of \propref{prop:proposition-1-heng2019}.
\end{lemma}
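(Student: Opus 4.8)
The plan is to show that local contractivity (Condition~\ref{ass:local_contractivity}) gives us, via iteration, a contraction of the expected distance between the two chains over $m$-th moments, and then to convert this moment contraction into the positive-probability lower bound required by \eqref{eq:proposition-1-heng2019}. First I would fix $\delta > 0$ and the levels $u_0 < u_1$ as in \propref{prop:proposition-1-heng2019}, and choose $\bar{\stepsize{}}, \bar{L}$ from the local-contractivity hypothesis applied with a suitable $k_0$; here $k_0$ must be large enough that the momentum variable drawn from $\Normald$ lies in $\levelset{k_0}{K}$ with probability bounded below by some fixed constant, since the HMC kernel resamples $p \sim \Normald$ each step. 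The subtlety is that \condref{ass:local_contractivity} only controls one application of the position flow conditioned on $p \in \levelset{k_0}{K}$, whereas $\KerCHMC_{\stepsize{}, L}^{\gamma}$ is the full coupled kernel including the momentum refreshment and the intra-trajectory index sampling; so the first step is to check that $\KerCHMC_{\stepsize{}, L}^{\gamma}\left((q^1, q^2), \wc\right)$ restricted to the event that both chains stay in $S$ and $p \in \levelset{k_0}{K}$ indeed produces an expected-distance contraction with some rate $\rho' \in (\rho, 1)$ after accounting for the probability mass lost on the complement.

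Next I would pick $v_0 \in (u_0, u_1)$ so that $S_0 = \levelset{v_0}{U_S}$ is compact with positive Lebesgue measure and is bounded away from the boundary of $S$; concretely, $\sup_{q \in S_0} \|q - q'\|$ for $q'$ on $\partial S$ should exceed $\operatorname{diam}(S_0)$ times the contraction factor, so that a contracting step starting in $S_0 \times S_0$ keeps both chains inside $S$. Then, starting from any $(q^1, q^2) \in S_0 \times S_0$, define the stopping event $E$ that at every step up to $n_0$ both chains remain in $S$ and each refreshed momentum lands in $\levelset{k_0}{K}$; on $E$ the moment contraction compounds, giving $\E{\|Q_{n_0}^1 - Q_{n_0}^2\|^m \mathbf{1}_E} \le \rho^{m n_0} \operatorname{diam}(S_0)^m$. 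Choosing $n_0$ large enough that $\rho^{n_0} \operatorname{diam}(S_0) < \delta / 2$, Markov's inequality gives $\Prob{\|Q_{n_0}^1 - Q_{n_0}^2\| > \delta, E} < (\rho^{n_0} \operatorname{diam}(S_0)/\delta)^m < 2^{-m}$, while $\Prob{E} \ge c^{n_0}$ for a fixed $c > 0$ from the Gaussian tail bound on momenta and an argument that the chains do not escape $S_0$ (respectively $S$) before time $n_0$ with probability bounded below; combining, $\Prob{\|Q_{n_0}^1 - Q_{n_0}^2\| \le \delta} \ge \Prob{E} - 2^{-m} \ge \omega$ for some $\omega \in (0,1)$, uniformly over $(q^1, q^2) \in S_0 \times S_0$, which is exactly \eqref{eq:proposition-1-heng2019}.

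The main obstacle I anticipate is controlling the event $E$ — specifically, the joint requirement that \emph{both} coupled chains stay inside $S$ for all $n_0$ steps. A single contracting step brings the chains closer to \emph{each other} but does not by itself keep each chain inside $S$; one needs the flow applied to a point in $S_0$ (with controlled momentum and step size, so $\stepsize{} L < \bar{\stepsize{}} \bar{L}$) to have bounded displacement, so that it lands in a slightly larger set still contained in $S$, and then to argue this can be iterated. This is precisely the role of the condition $\stepsize{} L < \bar{\stepsize{}} \bar{L}$ and the twice-differentiability of $U$: the position displacement over a trajectory of total integration time below $\bar{\stepsize{}} \bar{L}$ is bounded by a constant times $\bar{\stepsize{}} \bar{L}$ times $\sup$-norm of the momentum and $\nabla U$ on the relevant compact region, so by shrinking $\bar{\stepsize{}} \bar{L}$ if necessary we can make this displacement smaller than the gap between $S_0$ and $\partial S$. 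Since \condref{ass:local_contractivity} lets us freely decrease $\bar{\stepsize{}}$ and $\bar{L}$, this shrinking is permitted, and the rest of the argument — the geometric compounding and the Markov-inequality conversion — is routine. I would remark that this part of the argument is essentially the same as the corresponding step in \citet{heng_unbiased_2019}, the only genuinely new input being that \condref{ass:local_contractivity} supplies the one-step expected-distance decrease abstractly rather than through the Metropolis-specific computation.
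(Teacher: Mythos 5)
Your moment-compounding step is fine (Condition~\ref{ass:local_contractivity} applied conditionally on $K(P)\le k_0$ and $(Q_k^1,Q_k^2)\in S\times S$ does iterate through the tower property), but the final conversion to the probability bound \eqref{eq:proposition-1-heng2019} has a genuine gap. You choose $n_0$ so that the Markov bound is $<2^{-m}$ and then conclude $\Prob{\norm{Q_{n_0}^1-Q_{n_0}^2}\le\delta}\ge\Prob{E}-2^{-m}\ge\omega$. Here $2^{-m}$ is a \emph{fixed} constant — and Condition~\ref{ass:local_contractivity} only guarantees some $m\ge1$, so it may be $1/2$ — whereas $\Prob{E}$ is at best $\omega_2^{n_0}$ with $\omega_2=\Prob[P\sim\Normald]{K(P)\le k_0}<1$, and $n_0$ is dictated by $\delta$, $\rho$ and $\operatorname{diam}(S_0)$, not by you. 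Nothing ensures $\Prob{E}>2^{-m}$, so the difference can be negative. Any repair forces you to compare geometric rates (you need roughly $\rho^m<\omega_2$, or to inflate $k_0$ with $n_0$ so that $\omega_2^{n_0}$ stays above the Markov bound), and the latter interacts badly with the argument that the chains stay in $S$: in the paper's treatment $k_0$ must be taken \emph{small}, since positions are confined to $S$ through an energy budget $v_0+(n_0+1)k_0+n_0\eta_0<u_1$ combining the potential level of $S_0$, the kinetic energy of each refreshed momentum, and the per-step numerical energy error of the leapfrog integrator. Your displacement-bound alternative for confinement is plausible under Assumption~\ref{ass:regularity_and_growth}, but you would still have to make the $k_0$-versus-$n_0$ trade-off explicit, and as written the quantifiers do not close.

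The paper avoids this rate comparison entirely, and this is the structural difference worth internalizing: from the expectation bound \eqref{eq:local_contractivity} it extracts, for every $(q^1,q^2,p)\in S\times S\times\levelset{k_0}{K}$, a \emph{positive probability} $\omega_1$ that a single intra-trajectory draw contracts the distance by the factor $\rho$ pathwise (if the contraction event had probability zero the expectation bound would be violated). Intersecting these per-step contraction events with the events $\set{K(P)\le k_0}$ over $n_0$ steps gives probability at least $(\omega_1\omega_2)^{n_0}>0$, and on that intersection the distance is \emph{deterministically} at most $\rho^{n_0}B\le\delta$ with $B=\sup_{q^1,q^2\in S}\norm{q^1-q^2}$ — no Markov inequality is needed, so no competition between $\rho^{mn_0}$ and $\Prob{E}$ ever arises. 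I would recommend restructuring your argument along these lines rather than trying to patch the moment-based route.
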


\begin{prf}
    Observe that
    \begin{equation*}
    \begin{split}
      & \PkerHMC{\stepsize{}}{L} \left( \norm{Q_1^1 - Q_1^2} \le \rho \norm{Q_0^1 - Q_0^2} \mid (Q_0^1, Q_0^2) = (q^1, q^2) \right) \\
      &= \E[\KerCHMC_{\stepsize{}, L}^{\gamma}]{\ind \set{\norm{Q_1^1 - Q_1^2} \le \rho \norm{q^1 - q^2}}} \\
      &= \E[P \sim \mathcal{N}(0, I)]{\E[(l_1, l_2) \sim \gamma]{\ind(R_{q^1, q^2, P}) \mid P}}
    \end{split}
    \end{equation*}
    where we have let $R_{q^1, q^2, p}$ denote the set of events where we have contraction,~i.e.
    \begin{equation*}
      R_{q^1, q^2, p} = \set{\norm{\flowqdisc{\stepsize{}}{l_1}(q^1, p) - \flowqdisc{\stepsize{}}{l_2}(q^2, p)} \le \rho \norm{q^1 - q^2}}
    \end{equation*}
    By \condref{ass:local_contractivity}, $\E[(l_1, l_2) \sim \gamma]{\ind (R_{q^1, q^2, p})} > 0$
    for any $(q^1, q^2, p) \in S \times S \times \levelset{k_0}{K}$, where $k_0 > 0$ is to be decided, since otherwise~\eqref{eq:local_contractivity} would not hold. Hence a single application of the kernel $\KerCHMC_{\stepsize{}, L}^{\gamma}$ will have a non-zero probability of decreasing the distance between the states if we are in $S$. The remainder of the proof ensures that parameters can be chosen such that there is a non-zero probability of staying within the set $S_0 \subset S$ for some $n_0 := \inf \set{n \in \mathbb{N}: \rho^n \norm{q^1 - q^2} \le \delta}$ applications of the kernel, i.e. $(Q_k^1, Q_k^2) \in S_0 \times S_0$ for all $k = 1, \dots, n_0$. This finally allows us to conclude that there is a non-zero probability of entering $D_{\delta}$ if we are currently in the set $S_0$. 
    See the Appendix~\ref{app:proof-local-contractivity-implies-prop-1} for the full proof.
\end{prf}

\begin{theorem}[Theorem 2, \citet{heng_unbiased_2019}]\label{thm:theorem-2-heng2019}
    Suppose that the potential $U$ satisfies Assumptions \ref{ass:regularity_and_growth} and \ref{ass:local_strong_convexity}.
    Suppose that there exists $\bar{\stepsize{}} > 0$ and $\bar{\sigma}> 0$ such that for any $\stepsize{} \in (0, \bar{\stepsize{}})$, $L \in \NN$ and $\sigma \in (0, \bar{\sigma})$, there exists a measurable function $V: \mathbb{R}^d \to [1, \infty)$, $\lambda \in (0, 1)$, $b < \infty$ and $\mu > 0$ such that
    \begin{equation*}
    \KerHMC_{\stepsize{}, L}(V)(x) \le \lambda V(x) + b, \quad Q_{\sigma}(V)(x) \le \mu \left( V(x) + 1 \right)
    \end{equation*}
    for all $x \in \mathbb{R}^d$, $\pi_0(V) < \infty$, $\lambda_0 = (1 - \gamma) \lambda + \gamma (1 + \mu) < 1$ and $\set{x \in \mathbb{R}^d: V(x) \le \ell_1} \subseteq \set{x \in S: U(x) \le \ell_0}$, for some $\ell_0 \in \set{\inf_{x \in S} U(x), \sup_{x \in S} U(x)}$ and $\ell_1 > 1$ satisfying $\lambda_0 + 2 \left( (1 - \gamma) b + \gamma \mu \right) (1 - \lambda_0)^{-1} (1 + \ell_1)^{-1} < 1$.
    Then there exists $\stepsize{}_0 \in (0, \bar{\stepsize{}})$, $L_0 \in \mathbb{N}$ and $\sigma_0 > 0$ such that for any $\stepsize{} \in (0, \stepsize{}_0)$, $L \in \mathbb{N}$ satisfying $\stepsize{} L < \stepsize{}_0 L_0$ and $\sigma \in (0, \sigma_0)$, we have
    \begin{equation*}
        \PkerCMHMC{\stepsize{}}{L}{\sigma}{\gamma}(\tau > n) \le C_0 \kappa_0^n
    \end{equation*}
    for some $C_0 \in \mathbb{R}_{ + }$ and $\kappa_0 \in (0, 1)$ and for $n \in \mathbb{N}_0$, where $\PkerCMHMC{\stepsize{}}{L}{\sigma}{\gamma}$ denotes the law of the kernel $\KerCHMC_{\stepsize{}, L, \sigma}^{\gamma}$ for a given coupled HMC kernel $\KerCHMC_{\stepsize{}, L}^{\gamma}$.
    \label{theorem:geometric_tails}
\end{theorem}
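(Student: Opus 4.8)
The plan is to reduce the statement to Theorem~2 of \citet{heng_unbiased_2019}: in their proof the specific HMC proposal mechanism enters only through \propref{prop:proposition-1-heng2019}, so once that proposition is secured for the coupled multinomial kernels $\KerCHMC_{\stepsize{}, L}^{\gamma}$ the drift-and-minorization machinery applies with essentially no change. The first step, therefore, is to record that $\KerCHMC_{\stepsize{}, L}^{\gamma}$ satisfies \condref{ass:local_contractivity} on $S$ for both admissible couplings --- for $\gamma = \gamma^\ast$ by using that it puts maximal mass on index-aligned pairs together with Lemma~1 of \citet{heng_unbiased_2019} on the strongly convex restriction $U_S$, and for $\gamma = \gamma^\circ$ because its expected squared position distance is no larger than that of $\gamma^\ast$ --- and then to apply \lemref{lemma:local-contractivity-implies-prop-1} to obtain \eqref{eq:proposition-1-heng2019}. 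Concretely this yields thresholds $\bar{\stepsize{}}, \bar{L}$ such that for every $\stepsize{} \in (0, \bar{\stepsize{}})$ and every $L$ with $\stepsize{} L < \bar{\stepsize{}}\bar{L}$, and for every $q^1, q^2$ in a suitable interior sublevel set $S_0 = L_{v_0}(U_S)$, an $n_0$-step block of $\KerCHMC_{\stepsize{}, L}^{\gamma}$ lands the pair in $D_\delta$ with probability at least $\omega > 0$.

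Second, I would upgrade ``$\delta$-close'' to ``exactly equal'' using the random-walk component of the mixture kernel \eqref{eq:mixture_kernel}. Since $\CMKer_\sigma$ draws the two proposals from a maximal coupling of $\Normald(q^1, \sigma^2 I_d)$ and $\Normald(q^2, \sigma^2 I_d)$, and the Metropolis acceptance of a common proposal is bounded below uniformly over the compact set, there is a constant $\eta = \eta(\delta, \sigma, S) > 0$ so that from any pair in $D_\delta \cap (S_0 \times S_0)$ the two chains coincide after one application of $\CMKer_{\stepsize{}, L, \sigma}$ with probability at least $\alpha \eta$, where $\alpha$ is the mixture weight. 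Composing with the previous step shows that $S_0 \times S_0$ acts as a \emph{meeting set}: there is $\omega' > 0$ such that the mixture chain started anywhere in $S_0 \times S_0$ has met by step $n_0 + 1$ with probability at least $\omega'$.

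Third, I would run the geometric-drift argument on the coupled mixture chain exactly as in \citet{heng_unbiased_2019}. Writing $\bar V(x, y) := \tfrac12(V(x) + V(y))$ and using that the coupled kernels have the stated marginals, the hypotheses $\KerHMC_{\stepsize{}, L}(V) \le \lambda V + b$ and $Q_\sigma(V) \le \mu(V + 1)$ combine through the mixture structure into a geometric drift $\CMKer_{\stepsize{}, L, \sigma}(\bar V) \le \lambda_0 \bar V + b_0$ with the $\lambda_0 < 1$ of the hypothesis and $b_0 < \infty$ built from $b$ and $\mu$; the level-set hypothesis gives $\set{\bar V \le \ell_1} \subseteq S_0 \times S_0$, and the closing parameter inequality is precisely what forces the return-time rate to $\set{\bar V \le \ell_1}$ strictly below one. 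A standard excursion bound then controls the number of steps between consecutive visits to $S_0 \times S_0$ geometrically; since a meeting is attempted at each such visit and succeeds with probability at least $\omega'$, summing a geometric series over the number of visits yields $\PkerCMHMC{\stepsize{}}{L}{\sigma}{\gamma}(\tau > n) \le C_0 \kappa_0^n$ for suitable $C_0 \in \RR_+$, $\kappa_0 \in (0, 1)$. The admissible thresholds $\stepsize{}_0, L_0, \sigma_0$ are taken as the minima of those demanded by the three steps, which is legitimate precisely because \condref{ass:local_contractivity} --- hence \propref{prop:proposition-1-heng2019} --- was formulated to persist under shrinking $\stepsize{}$ and $\stepsize{} L$.

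I expect the main obstacle to be bookkeeping rather than any single estimate: one must thread a single choice of $(\stepsize{}_0, L_0, \sigma_0)$ through the contractivity block, the random-walk meeting bound, and the drift simultaneously, keeping $\lambda_0$, $\ell_1$, $\omega'$, $b_0$ and all the parameter inequalities mutually compatible --- this is exactly where the ``monotone in $\stepsize{}$ and in $\stepsize{} L$'' formulation of \condref{ass:local_contractivity} does the real work. Since the drift/excursion half of the argument is verbatim \citet{heng_unbiased_2019}, I would present the reduction in detail --- local contractivity of the two couplings, \lemref{lemma:local-contractivity-implies-prop-1}, and the random-walk meeting bound --- and then defer the remaining return-time analysis to the proof of their Theorem~2.
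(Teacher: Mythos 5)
Your proposal follows essentially the same route as the paper: the paper's proof is a one-line deferral stating that, once \lemref{lemma:local-contractivity-implies-prop-1} supplies the analogue of \propref{prop:proposition-1-heng2019}, the remainder is identical to the proof of Theorem~2 in \citet{heng_unbiased_2019}. You simply spell out that reduction in more detail (folding in the local contractivity of $\gamma^\ast$ and $\gamma^\circ$, which the paper establishes separately in Lemmas~\ref{lemma:contractivity-maximal} and~\ref{lemma:contractivity-ot}, plus the RWMH meeting step and drift/excursion bookkeeping that the paper defers to \citet{heng_unbiased_2019}), so the argument is correct and matches the paper's approach.
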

\begin{prf}
    The proof is identical to \citet{heng_unbiased_2019} via \lemref{lemma:local-contractivity-implies-prop-1}.
\end{prf}

\subsection{Local contractivity for $W_2$-coupling and maximal coupling}
\label{sec:theory-local_contractivity}
Now we
establish \condref{ass:local_contractivity} for coupled multinomial HMC kernels with maximal coupling $\gamma^\ast$ and $W_2$-coupling $\gamma^\circ$, ensuring that \thmref{theorem:geometric_tails} applies to the resulting mixture kernels $\KerCHMC_{\stepsize{}, L, \sigma}^\ast$ and $\KerCHMC_{\stepsize{}, L, \sigma}^\circ$.

We first restate a slight variation of Lemma 1 from \citet{heng_unbiased_2019}, which tells us that the states reached by \emph{exact} flows with shared momentum is closer than the initial states for sufficiently small integration times.\looseness=-1

\begin{lemma}\label{lemma:local_abs_contractivity}
    Suppose that the potential $U$ satisfies Assumptions \ref{ass:regularity_and_growth} and \ref{ass:local_strong_convexity}. For any compact set $A \subset S \times S \times \mathbb{R}^d$, there exists a trajectory length $T > 0$ such that
    \begin{equation}\label{eq:local_abs_contractivity}
        \norm{\flowq{t}(q^1, p) - \flowq{t}(q^2, p)} \leq \rho \norm{q^1 - q^2}
    \end{equation}
    for all $t \in [-T, T] \setminus \set{0}$ and all $(q^1, q^2, p) \in A$.
\end{lemma}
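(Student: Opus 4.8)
The plan is to reduce the claim to a statement about the Jacobian of the position flow along the diagonal, and then use a Gr\"onwall-type argument to control the difference of two flows with shared momentum. Write $z^c(t) = \flow{t}(q^c, p)$ for $c = 1, 2$, and let $\Delta(t) = \flowq{t}(q^1, p) - \flowq{t}(q^2, p)$ be the position difference. Differentiating Hamilton's equations \eqref{eq:hamiltonian-system} (recalling $M^{-1} = I_d$, so $\dot q = p$ and $\dot p = -\nabla U(q)$) gives $\ddot q(t) = -\nabla U(q(t))$, hence $\ddot\Delta(t) = -\big(\nabla U(q^1(t)) - \nabla U(q^2(t))\big)$, with $\Delta(0) = q^1 - q^2$ and $\dot\Delta(0) = 0$ because the momenta agree. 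The idea is that for small $t$, $q^c(t)$ stays in a neighbourhood of $S$ where $U$ is both $\beta$-Lipschitz-gradient (\assumptref{ass:regularity_and_growth}) and $\alpha$-strongly convex (\assumptref{ass:local_strong_convexity}), so the right-hand side is a contractive-looking restoring force of magnitude comparable to $\Delta$.

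The key steps, in order: (1) Since $A \subset S \times S \times \RR^d$ is compact, the momentum coordinate ranges over a bounded set, so there is a uniform bound on $\|p\|$; combined with the $\beta$-Lipschitz gradient this gives, via Gr\"onwall on the $(q,p)$-system, a uniform time $T_1 > 0$ and a compact set $\tilde S \supseteq S$ such that $q^c(t) \in \tilde S$ for all $|t| \le T_1$ and all $(q^1,q^2,p) \in A$. We may WLOG extend the strong convexity/Lipschitz estimates to hold on $\tilde S$ by shrinking $T_1$ — this is exactly the kind of "shrink the trajectory length" step the statement allows, and matches \citet{heng_unbiased_2019}. (2) On $[-T_1, T_1]$, expand $g(t) := \tfrac12 \|\Delta(t)\|^2$: $\dot g = \langle \Delta, \dot\Delta\rangle$, $\ddot g = \|\dot\Delta\|^2 - \langle \Delta,\, \nabla U(q^1(t)) - \nabla U(q^2(t))\rangle$. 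The cross term is not directly $\alpha$-controlled because $\nabla U$ is evaluated at $q^c(t)$, not at $q^c$; but $q^1(t) - q^2(t) = \Delta(t)$, so strong convexity on $\tilde S$ gives $\langle \Delta(t),\, \nabla U(q^1(t)) - \nabla U(q^2(t))\rangle \ge \alpha\|\Delta(t)\|^2$. Hence $\ddot g(t) \le \|\dot\Delta(t)\|^2 - 2\alpha g(t)$. (3) Control $\|\dot\Delta(t)\|$: from $\dot\Delta(0)=0$ and $\|\ddot\Delta(s)\| \le \beta\|\Delta(s)\|$ (Lipschitz gradient), integrate to get $\|\dot\Delta(t)\| \le \beta \int_0^{|t|}\|\Delta(s)\|\,\rmd s$, and a further Gr\"onwall/continuity argument bounds $\|\Delta(s)\| \le C\|q^1 - q^2\|$ and $\|\dot\Delta(t)\| \le C|t|\,\|q^1 - q^2\|$ uniformly on a possibly smaller $[-T_2, T_2]$. (4) Plug back: $\|\Delta(t)\|^2 = \|q^1-q^2\|^2 + 2\int_0^t (t-s)\ddot g(s)\,\rmd s \le \|q^1-q^2\|^2 + \int_0^{|t|} (\text{lower-order in } t)\,\rmd s$, so there is $T \in (0, T_2]$ such that for all $0 < |t| \le T$ the net effect is $\|\Delta(t)\|^2 \le \rho^2 \|q^1 - q^2\|^2$ for some $\rho \in (0,1)$ depending only on $\alpha, \beta, T$ and the compact set $A$. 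Taking square roots gives \eqref{eq:local_abs_contractivity}.

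The main obstacle I anticipate is step (3): making the bound on $\|\dot\Delta(t)\|^2$ genuinely of higher order in $t$ than the gain $-2\alpha g(t)$, uniformly over the compact set $A$, while simultaneously keeping the trajectory inside the region where strong convexity is available. One has to be careful that the constants $C$ (and the times $T_1, T_2$) depend only on $A$, $\alpha$ and $\beta$, not on the particular $(q^1, q^2, p)$, which is why compactness of $A$ is essential — it makes $\sup_{(q^1,q^2,p)\in A}\|q^1-q^2\|$ and $\sup\|p\|$ finite and lets the escape time from $\tilde S$ be bounded below uniformly. Since this lemma is stated as "a slight variation of Lemma 1 from \citet{heng_unbiased_2019}", the cleanest route is to follow their argument verbatim, noting only that nothing changes when the conclusion is phrased for the exact flow $\flowq{t}$ on the interval $[-T,T]\setminus\{0\}$ rather than at a single endpoint.
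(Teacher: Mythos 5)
You take a genuinely different route from the paper: the paper disposes of this lemma by citing Lemma~1 of \citet{heng_unbiased_2019} for $t \in (0, T]$ and extending to $t \in [-T, 0)$ via the reversibility identity $\flowq{t}(q, -p) = \flowq{-t}(q, p)$, whereas you attempt a self-contained Gr\"onwall/energy argument. The skeleton of your argument is sensible, but steps (1)--(2) contain a genuine gap. You apply the strong-convexity inequality at the \emph{flowed} positions, i.e.\ you use $\langle \Delta(t), \nabla U(q^1(t)) - \nabla U(q^2(t)) \rangle \ge \alpha \norm{\Delta(t)}^2$, and to license this you assert that one may ``WLOG extend the strong convexity'' of \assumptref{ass:local_strong_convexity} to a compact neighbourhood $\tilde S \supseteq S$ after shrinking $T_1$. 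That assumption gives $\alpha$-strong convexity only on $S$ itself, and it does not extend to any neighbourhood in general: $U$ can lose convexity immediately outside $S$ while retaining a globally $\beta$-Lipschitz gradient, and shrinking the trajectory length does not help, because an initial point on the boundary of $S$ leaves $S$ in arbitrarily small time. So the restoring-force bound that drives your differential inequality is unsupported for every $t \neq 0$, which is precisely the delicate point of the lemma.

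The standard repair---and essentially what the cited Lemma~1 of \citet{heng_unbiased_2019} provides, which is why the paper simply invokes it---is to use strong convexity only at $t = 0$, where the positions are guaranteed to lie in $S$. For instance, write $\Delta(t) = \Delta(0) - \int_0^t (t-s)\left[\nabla U(q^1(s)) - \nabla U(q^2(s))\right] \rmd s$, expand $\norm{\Delta(t)}^2$, use $\langle \Delta(0), \nabla U(q^1) - \nabla U(q^2)\rangle \ge \alpha \norm{\Delta(0)}^2$ for the leading term, and show the perturbation $\left[\nabla U(q^1(s)) - \nabla U(q^2(s))\right] - \left[\nabla U(q^1) - \nabla U(q^2)\right]$ contributes only $o(1)\cdot\norm{\Delta(0)}^2$ as $s \to 0$, using the Gr\"onwall bounds you already have in step (3) ($\sup_{\abs{s}\le T}\norm{\Delta(s)} \le C \norm{\Delta(0)}$ and $\norm{\Delta(s) - \Delta(0)} \le C s^2 \norm{\Delta(0)}$, which need only the global Lipschitz condition and compactness of $A$) together with uniform continuity of $\nabla^2 U$ on the compact flow tube of $A$; this gives $\norm{\Delta(t)}^2 \le \left(1 - c t^2 + C \abs{t}^3\right)\norm{\Delta(0)}^2$. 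Two smaller points: a single $\rho$ depending only on $\alpha, \beta, T, A$ cannot work uniformly over all $t \in [-T, T] \setminus \set{0}$ (for a standard Gaussian target the exact ratio is $\abs{\cos t} \to 1$ as $t \to 0$), so $\rho$ must be allowed to depend on $t$ as in the paper's appendix statement; and negative times need either the symmetric-in-$\abs{t}$ form of the above or the momentum-reversal identity the paper uses---your closing remark that ``nothing changes'' glosses over this.
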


\begin{prf}
See \appref{app:proof-local_abs_contractivity} for the detailed proof.
\end{prf}

Note that \lemref{lemma:local_abs_contractivity} is a statement about the distance between the integrated states \emph{at the same integration time $t$}. 
As an immediate consequence the expected distance with respect to a joint distribution with probability mass only along the diagonals satisfies a similar property, controlling for numerical errors (see \appref{app:proof-property-parallel}).
Therefore our strategy in proving local contractivity for $\CMKer_{\stepsize, l}^\ast$ and $\CMKer_{\stepsize, l}^\circ$ is to ensure that as we decrease the stepsize the probability mass on the diagonals, i.e. $\mathbb{P}(i = j)$, can be made close to 1.

\paragraph{Maximal coupling}
\label{sec:theory-local_contractivity-maximal}
To establish \condref{ass:local_contractivity} for coupled multinomial HMC with maximal coupling $\gamma^\ast$, 
$\CMKer_{\stepsize, l}^\ast$,
we first introduce a bound on the total variation distance between the trajectory distributions $\vmu$ and $\vnu$.
\begin{proposition}\label{prop:maximal_coupling}
    Suppose that $U$ satisfies Assumptions \ref{ass:regularity_and_growth} and \ref{ass:local_strong_convexity}.
    For any $\delta > 0$, there exists $\stepsize{}_0 > 0$, $L_0 \in \mathbb{N}$ s.t.~for all $\stepsize{} \in (0, \stepsize{}_0)$, $L \in \mathbb{N}$ satisfying $\stepsize{} L < \stepsize{}_0 L_0$, we have
    \begin{equation}
        \TV( \vmu, \vnu ) = \PP(i \neq j) < \delta.
    \end{equation}
\end{proposition}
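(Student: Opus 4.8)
The plan is to show that $\TV(\vmu, \vnu) \to 0$ as $\stepsize{} \to 0$, uniformly over all $L$ with $\stepsize{} L$ bounded and over the initial data $(q^1, q^2, p) \in S \times S \times \levelset{k_0}{K}$ (for any fixed $k_0 > 0$, exactly as the quantification in \condref{ass:local_contractivity}), by exploiting that the leapfrog scheme conserves the Hamiltonian up to $\calO(\stepsize{}^2)$ over bounded integration times, so that $\vmu$ and $\vnu$ are both close to the \emph{uniform} distribution on the shared index set. First I would reduce the statement to a bound on the energy oscillation. Write $z_\ell^c$ for the $\ell$-th phase point of $\traj^c$; since $L_\text{f}$ and $L_\text{b}$ are sampled once and shared (Line~2 of \algref{alg:coupled-hmc}), both trajectories run over the same index set, of size $K = L + 1$. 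Let $\eta := \max_{c \in \set{1,2}} \max_{\ell} \abs{\Energy(z_\ell^c) - \Energy(z_0^c)}$ be the largest energy oscillation along either leapfrog trajectory (using $K(p)=K(-p)$, so $\Energy(z_0^c) = U(q^c) + K(p)$ independent of integration direction). Since $\mu_\ell = e^{-\Energy(z_\ell^1)}/\sum_m e^{-\Energy(z_m^1)}$ and each ratio $e^{-\Energy(z_\ell^1)}/e^{-\Energy(z_0^1)} \in [e^{-\eta},e^{\eta}]$, we get $\mu_\ell \ge e^{-2\eta}/K$, and likewise $\nu_\ell \ge e^{-2\eta}/K$, for every index $\ell$; summing over the $K$ indices gives $\sum_\ell \min(\mu_\ell,\nu_\ell) \ge e^{-2\eta}$, so $\TV(\vmu,\vnu) = 1 - \sum_\ell \min(\mu_\ell,\nu_\ell) \le 1 - e^{-2\eta} \le 2\eta$. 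Hence it suffices to make $\eta < \delta/2$.

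To bound $\eta$, I would fix any $\bar{T} > 0$ as the target integration-time budget. The set $S \times S \times \levelset{k_0}{K}$ is compact (the kinetic levelset is a Euclidean ball since $M^{-1} = I_d$; moreover \assumptref{ass:regularity_and_growth} gives global existence of the flow), so by continuous dependence of the exact flow $\flow{t}$ on initial data its image under $\flow{t}$ for $\abs{t} \le \bar{T}$ lies in a compact set, which I enlarge to a compact $\calR \subset \RR^d \times \RR^d$ with nonempty interior. Comparing the leapfrog map to $\flow{t}$ by Gronwall's inequality together with \assumptref{ass:regularity_and_growth}, there is $\stepsize{}_1 > 0$ such that for every $\stepsize{} < \stepsize{}_1$ and every $L$ with $\stepsize{} L \le \bar{T}$, all points of $\traj^1$ and $\traj^2$ (forward and backward) stay in $\calR$. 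On $\calR$ the derivatives of $U$ up to second order are bounded, and the leapfrog integrator is second-order and symplectic, so the standard energy-error estimate gives a constant $C = C(\calR, \bar{T})$ with $\abs{\Energy(\flowdisc{\stepsize{}}{\ell}(q,\pm p)) - \Energy(q, p)} \le C \stepsize{}^2$ for all such $\stepsize{}$, $L$, $\ell$; hence $\eta \le C\stepsize{}^2$ uniformly. (Only the qualitative fact $\eta \to 0$ is needed, and it also follows directly from uniform convergence of the leapfrog trajectory to the energy-preserving exact flow on $[-\bar{T},\bar{T}]$, uniformly on the compact initial set.)

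Putting the pieces together, $\TV(\vmu,\vnu) \le 2C\stepsize{}^2$ uniformly over the stated range; choosing $\stepsize{}_0 := \min\set{\stepsize{}_1, \sqrt{\delta/(2C)}, \bar{T}}$ and $L_0 := \lfloor \bar{T}/\stepsize{}_0 \rfloor \ge 1$, any $\stepsize{} < \stepsize{}_0$ and any $L$ with $\stepsize{} L < \stepsize{}_0 L_0 \le \bar{T}$ meet the hypotheses above and give $\TV(\vmu,\vnu) < 2C\stepsize{}_0^2 \le \delta$. I expect the one genuinely delicate point to be the uniformity in $L$ in the energy estimate: because $L$ can be of order $\bar{T}/\stepsize{}$, one cannot argue $\stepsize{}$-by-$\stepsize{}$, and must use that the per-step energy error of leapfrog is $\calO(\stepsize{}^3)$ — equivalently, that the symplectic scheme conserves the Hamiltonian over bounded integration times — which is exactly where \assumptref{ass:regularity_and_growth} is used, via boundedness of the derivatives of $U$ on the compact region traversed. \assumptref{ass:local_strong_convexity} enters only to supply the compact set $S$; no convexity is needed for this proposition.
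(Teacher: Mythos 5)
Your proof is correct, and while it rests on the same core mechanism as the paper's proof --- both $\vmu$ and $\vnu$ are softmax weights of energies that the leapfrog integrator conserves to $\mathcal{O}(\stepsize{}^2)$ uniformly over a bounded integration time, so their discrepancy vanishes as $\stepsize{} \to 0$ with $\stepsize{} L$ bounded --- the way you convert the energy estimate into a total-variation bound is genuinely different. The paper works with the \emph{difference} of the two probability vectors: it writes $\vmu = \sigma(\vx)$, $\vnu = \sigma(\vy)$, shifts each energy vector by its initial energy, and chains Cauchy--Schwarz with the $1$-Lipschitz property of the softmax to obtain $\TV(\vmu,\vnu) \le \tfrac{1}{2} K \sqrt{C_2(q_0^1,p_0) C_2(q_0^2,p_0)}\, T \stepsize{}^2$, i.e.\ a bound of order $T^2 \stepsize{}$ after substituting $K = L+1$. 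You instead show each vector is within a multiplicative factor $e^{\pm 2\eta}$ of the uniform distribution on the shared index set (with $\eta$ the energy oscillation along either trajectory) and use the overlap characterization $\TV(\vmu,\vnu) = 1 - \sum_\ell \min(\mu_\ell, \nu_\ell) \le 1 - e^{-2\eta} \le 2\eta$; this needs no Lipschitz property of the softmax and yields a bound $2C\stepsize{}^2$ carrying no factor of $K$, so it does not degrade as $L$ grows with $\stepsize{} L$ fixed --- slightly tighter and more elementary than the paper's route. You are also more explicit than the paper about uniformity of the energy-error constant over the compact set $S \times S \times \levelset{k_0}{K}$ of initial conditions (compactness, Gronwall confinement of the discrete trajectory to a compact region, bounded derivatives of $U$ there), a point the paper leaves implicit in its state-dependent constants $C_2(q_0^c, p_0)$; both you and the paper ultimately invoke the standard second-order leapfrog energy bound over a fixed time horizon (the paper's inequality \eqref{eq:numerical-error-energy}, citing Hairer et al.), so you are on equal footing on that ingredient, and your closing observation that \assumptref{ass:local_strong_convexity} is used only to supply the compact set $S$ matches the paper's usage.
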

\begin{prf}
    The proof uses the 1-Lipschitz property of the softmax function \citep{gao2018properties} in \eqref{eq:multi-intra} to bound the probability differences introduced by numerical errors.
    See Appendix~\ref{app:proof-maximal_coupling}.
\end{prf}
With Proposition~\ref{prop:maximal_coupling}, 
we can establish local contractivity for coupled multinomial HMC kernels with $\gamma^\ast$.
\begin{lemma}\label{lemma:contractivity-maximal}
    $\CMKer_{\stepsize, l}^\ast$ satisfies \condref{ass:local_contractivity}.
\end{lemma}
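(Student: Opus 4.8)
The plan is to combine Lemma~\ref{lemma:local_abs_contractivity} (contractivity of the \emph{exact} flow at a common integration time) with Proposition~\ref{prop:maximal_coupling} (the mass $\PP(i=j)$ can be pushed arbitrarily close to $1$ by shrinking the step size). Fix the compact set $S$ of \assumptref{ass:local_strong_convexity} and a momentum bound $k_0 > 0$, and let $A = S \times S \times \levelset{k_0}{K}$, which is compact. First I would invoke Lemma~\ref{lemma:local_abs_contractivity} on $A$ to obtain a trajectory length $T > 0$ and a rate $\rho_1 \in (0,1)$ such that $\norm{\flowq{t}(q^1, p) - \flowq{t}(q^2, p)} \le \rho_1 \norm{q^1 - q^2}$ for all $t \in [-T, T]\setminus\set{0}$ and all $(q^1,q^2,p) \in A$. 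I then fix the target rate $\rho \in (\rho_1, 1)$ appearing in \condref{ass:local_contractivity}.

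Next I would pass from the exact flow to the numerical (leapfrog) flow. Since $\flowdisc{\stepsize{}}{\ell}$ converges to $\flow{t}$ uniformly on the compact set $A$ as $\stepsize{} \to 0$ with $\stepsize{}\ell = t$ fixed in $[-T,T]$ (standard for leapfrog under \assumptref{ass:regularity_and_growth}; this is exactly the ``controlling for numerical errors'' remark after Lemma~\ref{lemma:local_abs_contractivity}, cf. \appref{app:proof-property-parallel}), there exist $\bar{\stepsize{}}_1 > 0$ and $\bar L_1$ such that for $\stepsize{} \in (0,\bar{\stepsize{}}_1)$ and $\ell$ with $\stepsize{}\ell \le \min(T, \bar{\stepsize{}}_1\bar L_1)$ the index-aligned numerical states satisfy $\norm{\flowqdisc{\stepsize{}}{\ell}(q^1,p) - \flowqdisc{\stepsize{}}{\ell}(q^2,p)} \le \rho \norm{q^1-q^2}$ on $A$. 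Call this event (over the random shared number of steps $\ell$, which always lies in $\{1,\dots,L\}$ with $\stepsize{}L$ bounded) the \emph{diagonal contraction event}; it holds deterministically whenever $i=j$.

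Now I would bound the left side of \eqref{eq:local_contractivity}. Split the expectation over $(l_1,l_2)\sim\gamma^\ast$ according to whether $l_1 = l_2$ or not. On $\{l_1 = l_2\}$, which by the mixture representation \eqref{eq:maximal_coupling} carries mass $1 - \PP(i\neq j)$, the integrand is at most $\rho^m \norm{q^1-q^2}^m$ by the diagonal contraction event. On $\{l_1 \neq l_2\}$ the integrand is bounded by $\bigl(\mathrm{diam}_q(\traj^1 \cup \traj^2)\bigr)^m$, and since all trajectory points stay in a compact set (the flow over $[-\bar T, \bar T]$ applied to the compact $A$), this diameter is bounded by some constant $C$ depending only on $A$ and $\bar T$ — crucially independent of $\stepsize{}, L$. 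Hence the left side is at most $\bigl(1 - \PP(i\neq j)\bigr)\rho^m\norm{q^1-q^2}^m + \PP(i\neq j)\,C^m$. Using Proposition~\ref{prop:maximal_coupling}, choose $\stepsize{}_0 \le \bar{\stepsize{}}_1$ and $L_0$ small enough that $\PP(i\neq j) < \delta_0$ where $\delta_0$ is picked so that $\delta_0 C^m \le (\rho^m - \rho_1^m)\,r^m$ with $r := \inf\{\norm{q^1-q^2} : (q^1,q^2)\in S\times S,\ q^1 \neq q^2\}$... here lies the one subtlety: $r$ can be $0$. The clean fix, which I expect to be the main obstacle and which mirrors \citet{heng_unbiased_2019}, is to absorb the additive $\PP(i\neq j)C^m$ term by instead working on $S_0 \subsetneq S$ bounded away from the diagonal is not possible either, so instead I would exploit that we only need contraction with the \emph{looser} rate $\rho$ (not $\rho_1$): since the $l_1=l_2$ branch already gives rate $\rho_1 < \rho$, there is slack $(\rho^m - \rho_1^m)\norm{q^1-q^2}^m$ to absorb the additive term, but this slack vanishes as $\norm{q^1-q^2}\to 0$. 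The resolution used in the source paper is that \condref{ass:local_contractivity} with the additive constant is still enough downstream (the constant can be folded into $b$ in the drift argument of \thmref{theorem:geometric_tails}); alternatively one restricts to $\norm{q^1-q^2} \ge \delta$ since for $\norm{q^1-q^2}<\delta$ the relaxed meeting has already occurred. I would state it in whichever of these two forms \condref{ass:local_contractivity} was actually formulated to tolerate, then choose $m$, $\bar{\stepsize{}} = \stepsize{}_0$, $\bar L = L_0$ accordingly and note monotonicity in $\stepsize{}$ and $\stepsize{}L$ is automatic because all the bounds above only improve as these decrease. Deferring the routine leapfrog error estimates and this bookkeeping to \appref{app:proof-maximal_coupling}-style detail completes the argument.
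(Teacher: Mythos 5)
Your proposal follows essentially the same route as the paper's proof: the paper decomposes the maximal coupling as $\gamma^\ast = \omega J^\parallel + (1-\omega) J^\nparallel$ with $1-\omega = \PP(i \neq j)$, bounds the diagonal part by $\tilde{\rho}\,\norm{q^1 - q^2}$ via \propref{prop:property-parallel} (which is exactly your step of combining \lemref{lemma:local_abs_contractivity} with the $\calO(\stepsize{}^2)$ leapfrog error on a compact set), and then invokes \propref{prop:maximal_coupling} to drive the off-diagonal mass to zero as $\stepsize{} \to 0$. The only divergence is at the point you flag yourself. The paper does not restrict to separated pairs, nor does it fold an additive constant into the drift term of \thmref{thm:theorem-2-heng2019}; it simply notes that the off-diagonal expectation is finite, lets $\stepsize{} \to 0$, and concludes that for any $\rho \in (\tilde{\rho}, 1)$ there is $\stepsize{}_2 > 0$ with the bound $\rho \norm{q^1 - q^2}$ holding for all $\stepsize{} \le \min\{\stepsize{}_1, \stepsize{}_2\}$ — i.e.\ it asserts \condref{ass:local_contractivity} verbatim. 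Your diagnosis of the subtlety is accurate: the off-diagonal term $(1-\omega)\,C$ is controlled by the step size alone (the total-variation bound of \propref{prop:maximal_coupling} comes from energy conservation errors and does not shrink with $\norm{q^1 - q^2}$), so absorbing it into a multiplicative rate requires a choice of $\stepsize{}_2$ that, as the argument is written, is not uniform over pairs approaching the diagonal of $S \times S$; the paper's proof passes over this silently rather than resolving it. Be aware, however, that both of your suggested repairs (an additive-constant variant of the condition, or restricting to $\norm{q^1 - q^2} \ge \delta$) prove a statement weaker than \condref{ass:local_contractivity} as the paper formulates it, so to match the paper you would either present the limiting argument as stated or flag the non-uniformity explicitly.
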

\begin{prf}
    Due to Proposition~\ref{prop:maximal_coupling}, for a given integration time, we can choose step size arbitrary small to increase probability of picking parallel-in-time pairs, whose contractivity is established in Proposition~\ref{prop:property-parallel}.
    See Appendix~\ref{app:proof-contractivity-maximal} for the complete proof.
\end{prf}

\paragraph{$W_2$-coupling}
\label{sec:theory-local_contractivity-ot}
Similarly to Lemma~\ref{lemma:contractivity-maximal}, for coupled multinomial HMC with $\gamma^{\circ}$, $\CMKer_{\stepsize, l}^\circ$, we have:
\begin{lemma}\label{lemma:contractivity-ot}
    $\CMKer_{\stepsize, l}^\circ$ satisfies \condref{ass:local_contractivity}.
\end{lemma}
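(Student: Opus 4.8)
The plan is to obtain \lemref{lemma:contractivity-ot} as an immediate consequence of \lemref{lemma:contractivity-maximal}, using only the defining optimality of the $W_2$-coupling. Fix a step size $\stepsize{}$, a number of steps $L \in \NN$, and a triple $(q^1, q^2, p)$, and let $\traj^1, \traj^2$ be the two leapfrog trajectories produced in \algref{alg:coupled-hmc} with shared momentum $p$ and shared forward/backward step counts; then the intra-trajectory categoricals $\vmu$ and $\vnu$ are supported on the same index set, and by construction $\gamma^\circ$ is the minimizer in \eqref{eq:ot_objective} for the cost $D_{ij} = \norm{q_i^1 - q_j^2}_2^2$, i.e.
\[
  \gamma^\circ \in \argmin_{\gamma' \in \Gamma(\vmu, \vnu)} \sum_{i,j} \gamma'_{ij}\, \norm{q_i^1 - q_j^2}_2^2 .
\]
Since the maximal coupling $\gamma^\ast$ of \eqref{eq:maximal_coupling} is itself a member of $\Gamma(\vmu, \vnu)$, it is feasible for this program, so for this fixed $(\stepsize{}, L, q^1, q^2, p)$
\[
  \E[(l_1,l_2) \sim \gamma^\circ]{\norm{\flowqdisc{\stepsize{}}{l_1}(q^1, p) - \flowqdisc{\stepsize{}}{l_2}(q^2, p)}^2} \le \E[(l_1,l_2) \sim \gamma^\ast]{\norm{\flowqdisc{\stepsize{}}{l_1}(q^1, p) - \flowqdisc{\stepsize{}}{l_2}(q^2, p)}^2}.
\]

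Next I would invoke \lemref{lemma:contractivity-maximal}, which establishes \condref{ass:local_contractivity} for $\CMKer_{\stepsize, l}^\ast$, read with exponent $m = 2$. This reading is legitimate because the contractivity of the maximal coupling ultimately rests on the \emph{pointwise} bound \eqref{eq:local_abs_contractivity} of \lemref{lemma:local_abs_contractivity} together with the numerical-error control of \propref{prop:property-parallel} for parallel-in-time pairs, and a pointwise inequality between norms survives being raised to any fixed power and averaged over a diagonal coupling. Concretely, for any $k_0 > 0$, \lemref{lemma:contractivity-maximal} supplies $\bar{\stepsize{}} > 0$, $\bar{L} \in \NN$ and $\rho \in (0,1)$ such that the right-hand side of the last display is at most $\rho^2 \norm{q^1 - q^2}^2$ for all $\stepsize{} \in (0, \bar{\stepsize{}})$, all $L \in \NN$ with $\stepsize{} L < \bar{\stepsize{}}\bar{L}$, and all $(q^1, q^2, p) \in S \times S \times \levelset{k_0}{K}$. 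Chaining the two displays then shows that $\CMKer_{\stepsize, l}^\circ$ satisfies \eqref{eq:local_contractivity} with the same $m = 2$, $\rho$, $\bar{\stepsize{}}$ and $\bar{L}$; in particular the clause of \condref{ass:local_contractivity} requiring the property to persist when $\stepsize{}$ and $\stepsize{} L$ shrink is inherited verbatim from \lemref{lemma:contractivity-maximal}.

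I do not expect a genuine obstacle here, since the heart of the argument is a one-line domination, but two bookkeeping points deserve attention. First, $\gamma^\circ$ and $\gamma^\ast$ must be couplings of the \emph{same} pair of categoricals $(\vmu, \vnu)$; this is precisely why \algref{alg:coupled-hmc} is run with shared momentum and shared forward/backward step counts, so that $\traj^1$ and $\traj^2$ carry identical index sets and both $\gamma^\ast, \gamma^\circ \in \Gamma(\vmu, \vnu)$. Second, one must align the power appearing in \condref{ass:local_contractivity} with the exponent optimized by $\gamma^\circ$; working throughout with $m = 2$ matches the $W_2$ objective in \eqref{eq:ot_objective} and makes the domination exact, with no recourse to Jensen- or power-mean-type detours. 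With \condref{ass:local_contractivity} in hand for $\CMKer_{\stepsize, l}^\circ$, \lemref{lemma:local-contractivity-implies-prop-1} yields \propref{prop:proposition-1-heng2019}, and hence \thmref{theorem:geometric_tails} applies to the mixture kernel $\CMKer_{\stepsize{}, L, \sigma}^\circ$, as claimed.
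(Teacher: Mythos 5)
Your proposal is correct and takes essentially the same route as the paper: the paper's proof likewise notes that the defining optimality of $\gamma^\circ$ in \eqref{eq:ot_objective}, with $\gamma^\ast$ feasible for the same marginals, gives domination of the expected squared distance by that under $\gamma^\ast$, and then concludes by the argument of \lemref{lemma:contractivity-maximal}. Your bookkeeping remarks (working with $m=2$ and checking that both couplings share the marginals $(\vmu,\vnu)$) merely make explicit what the paper leaves implicit.
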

\begin{prf}
    The definition of $\gamma^\circ$ from \eqref{eq:ot_objective} implies
    \begin{equation}
    \begin{aligned}
        & \E[(i, j) \sim \gamma^\circ] { \norm{\flowqdisc{\stepsize{}}{l_i}(q^1, p) - \flowqdisc{\stepsize{}}{l_j}(q^2, p)}^2 } \leq \\
        & \quad \E[(i, j) \sim \gamma^\ast] { \norm{\flowqdisc{\stepsize{}}{l_i}(q^1, p) - \flowqdisc{\stepsize{}}{l_j}(q^2, p)}^2 }
    \end{aligned}
    \end{equation}
    The rest of the proof follows that of \lemref{lemma:contractivity-maximal}.
\end{prf}
Lemmas \ref{lemma:contractivity-maximal} and \ref{lemma:contractivity-ot} together with \thmref{thm:theorem-2-heng2019} then establishes geometric tails for the meeting time of the resulting mixture kernels $\CMKer_{\stepsize, l}^\ast$ and $\CMKer_{\stepsize, l}^\circ$, respectively.

\section{Experiments}
\label{sec:exp}

In this section, we evaluate the performance of the proposed coupled HMC kernels.
The coupled Metropolis HMC by \citet{heng_unbiased_2019} is used as a baseline. 
Following \citet{heng_unbiased_2019},
we combine coupled HMC kernels with a coupled RWMH kernel to obtain exact couplings:
we take the standard deviation of the RWMH kernel to be $\sigma = 10^{-3}$ and the mixture cofficient (i.e. probability of using RWMH) to be $\alpha = 1/20$. 
For the estimation task,
we consider a more efficient (still unbiased) \textit{time-averaged} variant of \eqref{eq:H_k}:\looseness=-1
\begin{equation}
    H_{k:m}(X, Y) = \frac{1}{m - k + 1} {\sum}_{i=k}^m H_{i}(X, Y)
\label{eq:H_km}
\end{equation}
Furthermore, we run $R$ independent pairs of coupled chains $(X^r, Y^r)$, $r = 1, \dots, R$, and estimate $H^\dagger$ as $\hat{H} = R^{-1} \sum_{r=1}^R H_{k:m} (X^{(r)}, Y^{(r)})$. 
For $x \in \mathbb{R}^d$, we consider $h$ as the first and second moments of $x$, i.e.~ $h_i(x) = x_i$ and $h_{d+i}(x) = x_i^2$ for $i = 1, \dots, d$.

We consider three target distributions.
The first target is a \textbf{1{,}000D Gaussian}.
The second one is the posterior of a \textbf{Bayesian logistic regression} model on the German credit dataset \citep{asuncion2007uci}.
We apply the same pre-processing as in \citet{heng_unbiased_2019},
which results in a sampling space of $\mathbb{R}^{302}$.
The last model considered is a \textbf{log-Gaussian Cox point process} that models tree locations in a forest.
We discretize the forest using a $16 \times 16$ grid%
, resulting in a sampling space on $\mathbb{R}^{256}$.
Note that the first two targets meet necessary conditions from Section~\ref{sec:theoretical}.
More details of these targets can be found in \appref{app:target_distributions}.

Our implementation is based on \textsc{AdvancedHMC.jl} \citet{xu2020advancedhmc} and is available at at \url{https://github.com/TuringLang/CoupledHMC.jl}, which also contains scripts to reproduce results in this paper.

\subsection{Meeting time comparisons}
\label{sec:meeting}

We first investigate how the meeting time $\tau$ of our method changes under different step sizes $\epsilon$ and numbers of leapfrog steps $L$. For this purpose, we run all coupled HMC methods
initialised at a random draw from $\mathcal{N}(0, I)$ for 1,000 iterations.
For each method, we use different step sizes $\stepsize{}$ and leapfrog steps $L$: $(\stepsize{}, L) \in \{0.05, 0.07, \dots, 0.45\} \times \{5, 10, 15\}$ for $1{,}000$D Gaussians, $(\stepsize{}, L) \in \{0.01, 0.0125, \dots, 0.04\} \times \{10, 20, 30\}$ for logistic regression and $(\stepsize{}, L) \in \{0.05, 0.07, \dots, 0.45\} \times \{10, 20, 30\}$ for log-Gaussian Cox point processes. Furthermore, we repeat each experiment for $R=10$ times to estimate standard derivation.
Figure~\ref{fig:meeting} shows resulting meeting time together with standard derivation for varying $\stepsize{}$ and a fixed $L=10$; figures for other $L$ are similar so we defer those to Appendix~\ref{app:sweep}.
\begin{figure*}[t]
    \ffigbox[\textwidth]{%
        \centering
        \begin{tikzpicture}
    \begin{axis}[%
        hide axis, xmin=10, xmax=50, ymin=0, ymax=0.4,
        legend style={color={rgb,1:red,0.1333;green,0.1333;blue,0.3333}, draw opacity={0.1}, line width={1}, solid, fill={rgb,1:red,1.0;green,1.0;blue,1.0}, fill opacity={0.9}, text opacity={1.0}, font={{\fontsize{8 pt}{10.4 pt}\selectfont}}, at={(1.02, 1)}, anchor={north west}, legend columns=-1}
    ]
        \addlegendimage{color={rgb,1:red,0.2667;green,0.4667;blue,0.6667}, name path={41d92112-fced-469f-bc61-94814c6a6591}, draw opacity={0.7}, line width={1.2}, solid, mark={diamond*}, mark size={2.25 pt}, mark options={color={rgb,1:red,0.0;green,0.0;blue,0.0}, draw opacity={0.7}, fill={rgb,1:red,0.2667;green,0.4667;blue,0.6667}, fill opacity={0.7}, line width={0.0}, rotate={0}, solid}}
        \addlegendentry{Metropolis}
        \addlegendimage{color={rgb,1:red,0.1333;green,0.5333;blue,0.2}, name path={d1056c54-bf38-4a84-9416-9b3b5e66f354}, draw opacity={0.7}, line width={1.2}, solid, mark={star}, mark size={2.25 pt}, mark options={color={rgb,1:red,0.0;green,0.0;blue,0.0}, draw opacity={0.7}, fill={rgb,1:red,0.1333;green,0.5333;blue,0.2}, fill opacity={0.7}, line width={0.0}, rotate={0}, solid}}
        \addlegendentry{Maximal}
        \addlegendimage{color={rgb,1:red,0.8;green,0.7333;blue,0.2667}, name path={98ad80d3-1b99-4400-911a-13702f6a1df5}, draw opacity={0.7}, line width={1.2}, solid, mark={*}, mark size={2.25 pt}, mark options={color={rgb,1:red,0.0;green,0.0;blue,0.0}, draw opacity={0.7}, fill={rgb,1:red,0.8;green,0.7333;blue,0.2667}, fill opacity={0.7}, line width={0.0}, rotate={0}, solid}}
        \addlegendentry{$W_2$}
    \end{axis}
\end{tikzpicture}
        \begin{subfloatrow}[3]
            \ffigbox[0.31\textwidth]
            {\caption{1000D Gaussian}}
            {\includegraphics[trim=0 0.25cm 0 0.1cm,clip,width=\linewidth]{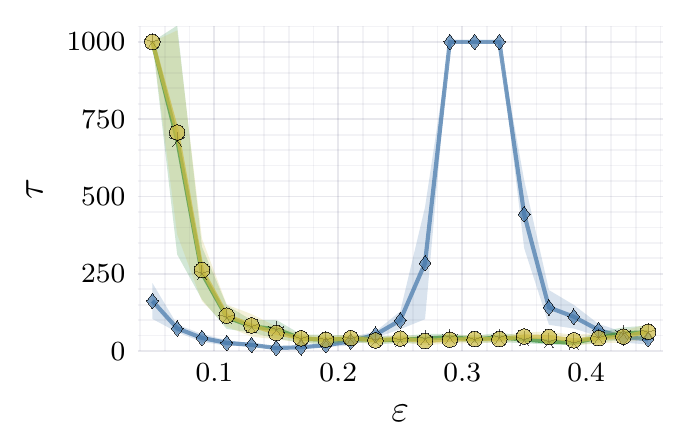}\label{fig:meeting-gaussian}}
            \ffigbox[0.31\textwidth]
            {\caption{Logistic regression}}
            {\includegraphics[trim=0 0.25cm 0 0.1cm,clip,width=\linewidth]{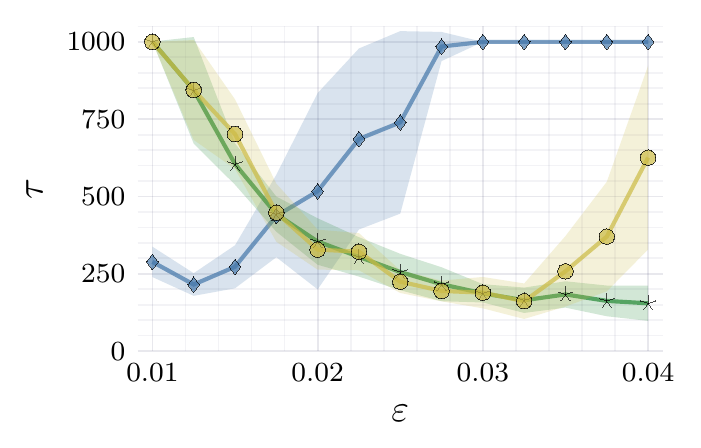}\label{fig:meeting-lr}}
            \ffigbox[0.31\textwidth]
            {\caption{Log-Gaussian Cox point process}\label{fig:meeting-coxprocess}}
            {\includegraphics[trim=0 0.25cm 0 0.1cm,clip,width=\linewidth]{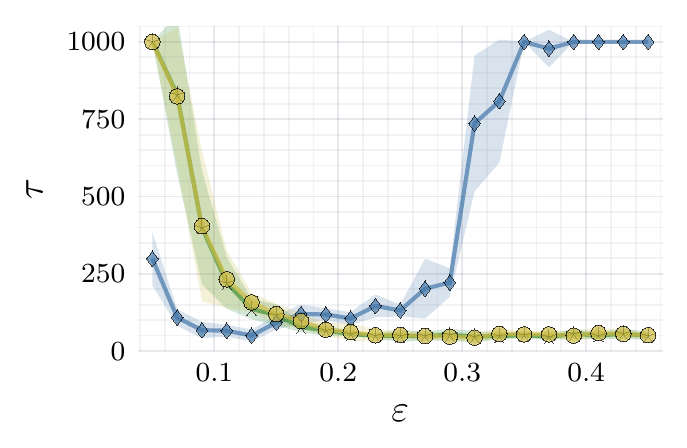}}
        \end{subfloatrow}
    }{%
        \caption{Meeting time $\tau$ with different $\stepsize{}$ and $L=10$ out of $R=10$ runs with lines for average and shade for 1 standard deviation. Overall, coupled multinomial HMC attains smaller meeting time and is more robust to $\stepsize{}$.}
        \label{fig:meeting}
    }
\vspace{-1em}
\end{figure*}
It is worth noting that $\tau$ equal to 1,000 should be interpreted as coupled chains \textit{did not meet within 1,000 iterations}.

Figure~\ref{fig:meeting} shows clearly that both maximal coupling and $W_2$-coupling achieve smaller meeting time than the baseline for large step sizes.
This robustness against large step sizes is useful in practice 
since it allows us to simulate a trajectory of a given length with less computation, by using larger $\epsilon$ rather than larger $L$.
However, when the step size is sufficiently small Metropolis HMC will almost always accept the end-point, thus travel the full integration length $T$ at every step.
In contrast, multinomial HMC will put uniform mass on intermediate states which means that it travels $1/4 T$ in expectation. 
Therefore Metropolis HMC will move towards the typical set faster and thus have a smaller meeting time compared to multinomial HMC. 
It is also worth noting the surge in meeting time for coupled Metropolis HMC in \figref{fig:meeting-gaussian} around $\stepsize{}=0.3$ can be explained by the similar phenomena observed in \figref{fig:illustration}, large trajectory length can lead to end-points close to their starting points, thus never meet.
In particular, the trajectory length $3 = 0.3\times10$ is around $\pi \approx 3.14$, in which case trajectories are basically full circles ending close to where they start, a special case for Gaussians.

Furthermore, for logistic regression (Figure~\ref{fig:meeting-lr}),  optimal parameters of HMC ($\epsilon=0.03, L=10$) leads to excessively long meeting time. This result is consistent with those in \citet{heng_unbiased_2019}. 
This is clearly undesirable since optimal parameters for good sampling efficiency leads to non-contractive coupled chains.
Besides, it is worth noting that maximal coupling is more robust to large step sizes than $W_2$-optimal coupling for the logistic regression model.
To understand this, recall that $W_2$-coupling takes a local greedy approach but there is \textit{no guarantee} it can lead to faster meeting through multiple transitions. With large step sizes, numerical errors in simulation are enlarged, leading to more probabilities assigned to non-diagonal entries in the coupling matrix, equivalently more freedom in the $W_2$-coupling. In such cases, the greedy effect of $W_2$-coupling is also enlarged but such greedy approach \textit{turns out} to be less effective than maximal coupling for the logistic regression model, a target that satisfies Assumptions \ref{ass:regularity_and_growth} and \ref{ass:local_strong_convexity}. In short, whether or not the greedy approach is preferable is target-dependent. Specifically, when a target satisfies Assumptions \ref{ass:regularity_and_growth} and \ref{ass:local_strong_convexity}, one would expect maximal coupling to work well enough; when such assumptions fail, $W_2$-coupling can be more efficient, as seen in Appendix \ref{app:toy}.

Finally, as motivated earlier, one can use existing adaption techniques to choose parameters $\stepsize{},L$. For example, one can use the adapted parameters from preliminary runs of NUTS. As an concrete example, NUTS-adapted $\stepsize{},L$ for logistic regression are 0.022 and 22, and those for log-Gaussian Cox point processes are 0.28 and 16, which allows our method to meet relatively fast: 114 and 118 for the first model and 50 and 51 for the second one, for the two proposed kernels respectively.%

\subsection{Estimator efficiency comparisons}

Although Monte Carlo estimates by \eqref{eq:H_km} are unbiased,
it can have large variances due to the use of coupled but often short Markov chains.
In other words, making \eqref{eq:H_km} bias-free comes at a cost of increased variance.
Therefore, it is helpful to study the efficiency, or \textit{inefficiency}, of the estimator under a joint effect of removed bias but increased variance, which we define next.

For a vector-valued function $h$,
the variance of estimator $\hat{H}$ for coupled HMC is defined as $\sum_d \nu(h_d)$ where $\nu(h) = \mathbb{V}_r\left(H_{k:m}(h, X^r, Y^r)\right)$. Here $r$ is the index of repeated runs.
\emph{Asymptotic inefficiency} is defined as $\sum_d i(h_d)$ where $i(h) = \hat{C} \nu(h)$ \citep{glynn1992asymptotic}. 
Here $\hat{C} = \E[r]{2(\tau^r - 1) + \max(1, m + 1 - \tau^r)}$ is the \emph{expected cost} over $R$ runs and $\tau^r$ is the meeting time for the $r$-th run.
The asymptotic variance of (non-coupled) HMC can be approximated with the \texttt{spectrum0.ar} function of the \texttt{coda} R package \citep{plummer_coda_2006} using a long chain: 10,000 samples after a burn-in of 1,000 using $(\epsilon, L) = (0.03, 10)$ for logistic regression, and $(0.3, 10)$ for the log-Gaussian Cox point process model.
Relative inefficiency is then defined as the ratio of asymptotic inefficiency over asymptotic variance.

We study this inefficiency using logistic regression and log-Gaussian Cox point processes due to their wide adoption in practice.
Following \citet{heng_unbiased_2019}, we set $\stepsize{}$ and $L$
to those leading to smallest meeting time in Section~\ref{sec:meeting}.
We first perform 100 preliminary runs of coupled HMC kernels to get an empirical distribution of meeting time $\tau$. 
Then we use this distribution to determine $k$ and $m$ following heuristics from \citet{heng_unbiased_2019}: we
take $k$ as either the median or the 90\% sample quantile of $\tau$ and take $m$ as a product of $k$ and a constant, e.g. $5 k$ or $10 k$.
We then perform $R = 100$ independent runs of coupled chains with different combinations of $k$ and $m$ -- we expect a longer chain to have a smaller variance but with a larger computation budget.
Also bear in mind that an ideal asymptotic inefficiency should be close to 1. 

Table~\ref{tab:biasvariance} shows asymptotic inefficiencies for varying $k$ and $m$.
\begin{table}[t]
\centering
\begin{tabular}{cc|c|c|c}
\toprule
$k$ & $m$ & Metropolis\tablefootnote{We kindly note that relative inefficiencies reported here for Metropolis on logistic regression are different from \citet{heng_unbiased_2019} by a factor around $2.0$. This is due to a small mistake in their paper, as confirmed by the authors.} & Maximal & $W_2$ \\ \hline \hline
\multirow{2}{*}{median} & $5k$ & 2.40 & 17.69 & 3.18 \\
 & $10k$ & 2.39 & 7.37 & 3.73 \\
\multirow{2}{*}{90\% quantile} & $5k$ & 2.36 & 2.14 & 2.88 \\
 & $10k$ & 2.32 & 1.90 & \textbf{0.94} \\ \hline \hline
\multirow{2}{*}{median} & $5k$ & 6.03 & 4.84 & 7.62 \\
 & $10k$ & 4.81 & 4.01 & 6.00  \\
\multirow{2}{*}{90\% quantile} & $5k$ & 5.26 & 4.58 & 6.86 \\
 & $10k$ & 4.61 & \textbf{3.83} & 5.80  \\ \bottomrule
\end{tabular}
\caption{Relative inefficiency with different $k$ and $m$ for logistic regression (top half) and log-Gaussian Cox point processes (bottom half). Bold indicates the one most close to 1. Note that for each method, $k$ is different thus inefficiencies across different coupled kernels (across columns) are \textit{not directly comparable}. Instead, we aim to study if the relative inefficiency can be made close to $1$ with suitable parameters (across rows).}
\label{tab:biasvariance}
\vspace{-0.5em}
\end{table}
For logistic regression, with suitable choices of $k,m$, the relative inefficiency can be made close to 1 ($W_2$-coupling); for the other model, the best ($3.83$) is attained by maximal coupling, both of which are superior to best of Metropolis.
Overall, it demonstrates that the optimal transport couplings can achieve better bias-variance trade-off than the baseline \emph{when suitable $k$ and $m$ are chosen}.
Note that both optimal transport couplings \textit{seem} to be inefficient for $m=5k$ with $k$ being the median. This is because the coupled chains meet in much shorter time compared to Metropolis. As a result, the chains are not as close to the stationary distribution.
The table also confirms that a larger $m$ helps reduce the asymptotic inefficiency, at the cost of more computation. 
Fortunately, this variance reduction can \textit{also} be achieved by parallel coupled HMC chains.

\section{Related Work}

Research on couplings for MCMC methods has a long history~\citep{research2005coupled,johnson1996studying,johnson1998coupling,rosenthal1997faithful,meyn2012markov,rowland2018geometrically,nuesken2018constructing,jacob2019smoothing,biswas2019estimating}.
Couplings for HMC has been more recently focusing on Metropolis HMC, e.g.~ \citet{neal2017circularlycoupled}.
The work that is most closely related to ours is \citet{heng_unbiased_2019}, in which coupling for Metropolis HMC is established.
\citet{bou-rabee_coupling_2019} studied the convergence of Metropolis HMC, and also proposed a new way to couple momentum variables, called \textit{contractive coupling}, that does not rely on sharing them.\footnote{We also study the effect of contractive coupling in our methods empirically but defer this to Appendix~\ref{app:toy}.}

Our analysis in Section~\ref{sec:theoretical} is also related to works on convergence analysis of HMC on log-concave targets \citep{mangoubi2017rapid,chen2019optimal}.

\section{Conclusion}

In this paper,
we develop two novel couplings for multinomial HMC. %
We provide theoretical analysis on the validity of the proposed methods,
and perform simulations to 
demonstrate their advantages over existing methods in terms of meeting time and robustness to HMC parameters, which is an important step towards practical use of coupled HMC.
We hope this work will help advance the research on searching more efficient coupled HMC methods, and a wider use of coupled HMC for probabilistic modeling
in practice.
For future work, we are interested in extending the coupling methods to more advanced HMC variants, e.g. the no-U-turn algorithm \citep{hoffman_no-u-turn_2011}.

\section*{Acknowledgement}
Hong Ge and Tor Erlend Fjelde acknowledge generous support from Huawei Research and donations from Microsoft Research. The proposed algorithms' implementation is based on the Turing probabilistic programming language, from which the authors benefit greatly. We would also like to thanks Cameron Pfiffer, Mohamed Tarek, Martin Trapp, Sharan Yalburgi for helpful comments on an earlier draft of this paper. 
\bibliographystyle{apalike}
\bibliography{references,content/ref-hong}

\begin{thebibliography}{}

\bibitem[Asuncion and Newman, 2007]{asuncion2007uci}
Asuncion, A. and Newman, D. (2007).
\newblock {UCI} machine learning repository.

\bibitem[Betancourt, 2018]{betancourt_conceptual_2018}
Betancourt, M. (2018).
\newblock A conceptual introduction to {Hamiltonian} {Monte} {Carlo}.
\newblock {\em arXiv:1701.02434 [stat]}.
\newblock arXiv: 1701.02434.

\bibitem[Biswas et~al., 2019]{biswas2019estimating}
Biswas, N., Jacob, P.~E., and Vanetti, P. (2019).
\newblock Estimating convergence of {M}arkov chains with l-lag couplings.
\newblock In {\em Advances in Neural Information Processing Systems},
  volume~32.

\bibitem[Bonneel et~al., 2011]{bonneel2011displacement}
Bonneel, N., Van De~Panne, M., Paris, S., and Heidrich, W. (2011).
\newblock Displacement interpolation using {L}agrangian mass transport.
\newblock In {\em Proceedings of the 2011 SIGGRAPH Asia Conference}, pages
  1--12.

\bibitem[Bou-Rabee et~al., 2020]{bou-rabee_coupling_2019}
Bou-Rabee, N., Eberle, A., and Zimmer, R. (2020).
\newblock Coupling and convergence for {Hamiltonian} {Monte} {Carlo}.
\newblock {\em Annals of applied probability: an official journal of the
  Institute of Mathematical Statistics}, 30(3):1209--1250.

\bibitem[Carpenter et~al., 2017]{carpenter2017stan}
Carpenter, B., Gelman, A., Hoffman, M.~D., Lee, D., Goodrich, B., Betancourt,
  M., Brubaker, M., Guo, J., Li, P., and Riddell, A. (2017).
\newblock Stan: A probabilistic programming language.
\newblock {\em Journal of statistical software}, 76(1).

\bibitem[Chen and Vempala, 2019]{chen2019optimal}
Chen, Z. and Vempala, S.~S. (2019).
\newblock Optimal convergence rate of {H}amiltonian {M}onte {C}arlo for
  strongly logconcave distributions.
\newblock {\em arXiv:1905.02313 [cs, stat]}.

\bibitem[Cuturi, 2013]{cuturi2013sinkhorn}
Cuturi, M. (2013).
\newblock Sinkhorn distances: Lightspeed computation of optimal transport.
\newblock In {\em Advances in neural information processing systems}, pages
  2292--2300.

\bibitem[Devroye, 1990]{research2005coupled}
Devroye, L. (1990).
\newblock Coupled samples in simulation.

\bibitem[Gao and Pavel, 2018]{gao2018properties}
Gao, B. and Pavel, L. (2018).
\newblock On the properties of the softmax function with application in game
  theory and reinforcement learning.
\newblock {\em arXiv:1704.00805 [cs, math]}.

\bibitem[Gelman et~al., 2013]{gelman2013bayesian}
Gelman, A., Carlin, J.~B., Stern, H.~S., Dunson, D.~B., Vehtari, A., and Rubin,
  D.~B. (2013).
\newblock {\em Bayesian data analysis}.
\newblock CRC press.

\bibitem[Geman and Geman, 1984]{Geman1984-qq}
Geman, S. and Geman, D. (1984).
\newblock {Stochastic relaxation, {G}ibbs distributions, and the {B}ayesian
  restoration of images}.
\newblock {\em IEEE Trans. Pattern Anal. Mach. Intell.}, 6(6):721--741.

\bibitem[Girolami and Calderhead, 2011]{girolami2011riemann}
Girolami, M. and Calderhead, B. (2011).
\newblock Riemann manifold {L}angevin and {H}amiltonian {M}onte {C}arlo
  methods.
\newblock {\em Journal of the Royal Statistical Society: Series B (Statistical
  Methodology)}, 73(2):123--214.

\bibitem[Glynn and Rhee, 2014]{glynn2014exact}
Glynn, P.~W. and Rhee, C.-H. (2014).
\newblock Exact estimation for {M}arkov chain equilibrium expectations.
\newblock {\em Journal of Applied Probability}, 51A:377--389.

\bibitem[Glynn and Whitt, 1992]{glynn1992asymptotic}
Glynn, P.~W. and Whitt, W. (1992).
\newblock The asymptotic efficiency of simulation estimators.
\newblock {\em Operations Research}, 40(3):505--520.

\bibitem[Hairer et~al., 2006]{hairer2006geometric}
Hairer, E., Lubich, C., and Wanner, G. (2006).
\newblock {\em Geometric numerical integration: structure-preserving algorithms
  for ordinary differential equations}.
\newblock Number~31 in Springer series in computational mathematics. Springer,
  Berlin ; New York, 2nd ed edition.

\bibitem[Heng and Jacob, 2019]{heng_unbiased_2019}
Heng, J. and Jacob, P.~E. (2019).
\newblock Unbiased {Hamiltonian} {Monte} {Carlo} with couplings.
\newblock {\em Biometrika}, 106(2):287--302.
\newblock Publisher: Oxford Academic.

\bibitem[Hoffman and Gelman, 2014]{hoffman_no-u-turn_2011}
Hoffman, M.~D. and Gelman, A. (2014).
\newblock The no-{U}-turn sampler: Adaptively setting path {Lengths} in
  {Hamiltonian} {Monte} {Carlo}.
\newblock {\em J. Mach. Learn. Res.}, 15(1):1593--1623.
\newblock arXiv: 1111.4246.

\bibitem[Jacob et~al., 2016]{jacob16_coupl_partic_filter}
Jacob, P.~E., Lindsten, F., and Sch{\"o}n, T.~B. (2016).
\newblock Coupling of particle filters.
\newblock {\em CoRR}.

\bibitem[Jacob et~al., 2019]{jacob2019smoothing}
Jacob, P.~E., Lindsten, F., and Sch{\"o}n, T.~B. (2019).
\newblock Smoothing with couplings of conditional particle filters.
\newblock {\em Journal of the American Statistical Association}, pages 1--20.

\bibitem[Jacob et~al., 2020]{jacob_unbiased_2019}
Jacob, P.~E., O’Leary, J., and Atchad{\'e}, Y.~F. (2020).
\newblock Unbiased {M}arkov chain {M}onte {C}arlo methods with couplings.
\newblock {\em Journal of the Royal Statistical Society: Series B (Statistical
  Methodology)}, 82(3):543--600.

\bibitem[Johnson, 1996]{johnson1996studying}
Johnson, V.~E. (1996).
\newblock Studying convergence of {M}arkov chain {M}onte {C}arlo algorithms
  using coupled sample paths.
\newblock {\em Journal of the American Statistical Association},
  91(433):154--166.

\bibitem[Johnson, 1998]{johnson1998coupling}
Johnson, V.~E. (1998).
\newblock A coupling-regeneration scheme for diagnosing convergence in {M}arkov
  chain {M}onte {C}arlo algorithms.
\newblock {\em Journal of the American Statistical Association},
  93(441):238--248.

\bibitem[Mangoubi and Smith, 2017]{mangoubi2017rapid}
Mangoubi, O. and Smith, A. (2017).
\newblock Rapid mixing of {H}amiltonian {M}onte {C}arlo on strongly log-concave
  distributions.
\newblock {\em arXiv:1708.07114 [math, stat]}.

\bibitem[Metropolis et~al., 1953]{metropolis1953equation}
Metropolis, N., Rosenbluth, A.~W., Rosenbluth, M.~N., Teller, A.~H., and
  Teller, E. (1953).
\newblock Equation of state calculations by fast computing machines.
\newblock {\em The Journal of Chemical Physics}, 21(6):1087--1092.

\bibitem[Meyn and Tweedie, 2012]{meyn2012markov}
Meyn, S.~P. and Tweedie, R.~L. (2012).
\newblock {\em Markov chains and stochastic stability}.
\newblock Springer Science \& Business Media.

\bibitem[Møller et~al., 1998]{moller_log_1998}
Møller, J., Syversveen, A.~R., and Waagepetersen, R.~P. (1998).
\newblock Log {Gaussian} {Cox} processes.
\newblock {\em Scandinavian Journal of Statistics}, 25(3):451--482.

\bibitem[Neal, 2011]{neal_mcmc_2012}
Neal, R.~M. (2011).
\newblock {MCMC} using {Hamiltonian} dynamics.
\newblock {\em Handbook of markov chain monte carlo}, 2(11):2.

\bibitem[Neal, 2017]{neal2017circularlycoupled}
Neal, R.~M. (2017).
\newblock Circularly-coupled {M}arkov chain sampling.
\newblock {\em arXiv:1711.04399 [stat]}.

\bibitem[Nuesken and Pavliotis, 2018]{nuesken2018constructing}
Nuesken, N. and Pavliotis, G.~A. (2018).
\newblock Constructing sampling schemes via coupling: {M}arkov semigroups and
  optimal transport.
\newblock {\em arXiv:1806.11026 [math]}.

\bibitem[Plummer et~al., 2006]{plummer_coda_2006}
Plummer, M., Best, N., Cowles, K., and Vines, K. (2006).
\newblock {CODA}: convergence diagnosis and output analysis for {MCMC}.
\newblock {\em R News}, 6(1):7--11.
\newblock Number: 1.

\bibitem[Qiu et~al., 2019]{qiu2019unbiased}
Qiu, Y., Zhang, L., and Wang, X. (2019).
\newblock Unbiased contrastive divergence algorithm for training energy-based
  latent variable models.
\newblock In {\em International Conference on Learning Representations}.

\bibitem[Rosenthal, 1997]{rosenthal1997faithful}
Rosenthal, J.~S. (1997).
\newblock Faithful couplings of {M}arkov chains: now equals forever.
\newblock {\em Advances in Applied Mathematics}, 18(3):372--381.

\bibitem[Rowland et~al., 2018]{rowland2018geometrically}
Rowland, M., Choromanski, K.~M., Chalus, F., Pacchiano, A., Sarlos, T., Turner,
  R.~E., and Weller, A. (2018).
\newblock Geometrically coupled {M}onte {C}arlo sampling.
\newblock In {\em Advances in Neural Information Processing Systems 31}, pages
  195--206.

\bibitem[Teh et~al., 2003]{teh2003energy}
Teh, Y.~W., Welling, M., Osindero, S., and Hinton, G.~E. (2003).
\newblock Energy-based models for sparse overcomplete representations.
\newblock {\em Journal of Machine Learning Research}, 4(Dec):1235--1260.

\bibitem[Thorisson, 2000]{thorisson_coupling_2000}
Thorisson, H. (2000).
\newblock {\em Coupling, Stationarity, and Regeneration}.
\newblock Probability and {Its} {Applications}. Springer New York.

\bibitem[Villani, 2003]{villani2003topics}
Villani, C. (2003).
\newblock {\em Topics in optimal transportation}.
\newblock Number~58. American Mathematical Soc.

\bibitem[Xie et~al., 2016]{xie2016theory}
Xie, J., Lu, Y., Zhu, S.-C., and Wu, Y. (2016).
\newblock A theory of generative convnet.
\newblock In {\em International Conference on Machine Learning}, pages
  2635--2644.

\bibitem[Xu et~al., 2020]{xu2020advancedhmc}
Xu, K., Ge, H., Tebbutt, W., Tarek, M., Trapp, M., and Ghahramani, Z. (2020).
\newblock Advanced{HMC}.jl: A robust, modular and efficient implementation of
  advanced {HMC} algorithms.
\newblock In {\em Symposium on Advances in Approximate Bayesian Inference},
  pages 1--10.

\end{thebibliography}

\clearpage
\appendix

\section{Additional Background}
\label{app:background}

\subsection{Properties of Hamiltonian flow}

The flow map $ \flow{t} $ has the following properties:
\begin{enumerate}
\item (Reversibility). 
$\forall\;t \in \RR_+ $, 
the inverse flow map $ \Phi_t^{-1} $ satisfies $ \Phi_t^{-1} = R \circ \flow{t} \circ R $, 
where $ R(q, p) = (q, - p) $ denotes the momentum \underline{r}eversal operation.
\item (Energy conservation). 
The Hamiltonian $ \Energy $ of the system satisfies $ \Energy \circ\,\flow{t} = \Energy $.
\item (Measure preservation). 
For any $ t \in \RR_+ $ and $ A \in \calB(\RR^{2d})$, we have $ \Leb_{2d} \left( \flow{t}(A) \right) = \Leb_{2d} (A) $, where $ \Leb_d $ denotes the Lebesgue measure on $ \RR^d $.
\end{enumerate}
Together the properties ensures that the Markov kernel defined by the Hamiltonian flow leaves the extended target distribution $ \bar{\pi} $ invariant.

\subsection{Properties of leapfrog integration}

The numerical flow map $\flowdisc{\stepsize}{L}$ enjoys the following two inequalities due to the simplicity of order-two leapfrog integrators \citep{hairer2006geometric}
\begin{equation}\label{eq:numerical-error-state}
    \norm{\flowdisc{\stepsize}{L}(q_0, p_0) - \flow{\stepsize{} L}(q_0, p_0)} \leq C_a(q_0, p_0, L) \stepsize{}^2
\end{equation}
\begin{equation}\label{eq:numerical-error-energy}
    \norm{\Energy\left(\flowdisc{\stepsize}{L}(q_0, p_0)\right) - \Energy(q_0, p_0)} \leq C_b(q_0, p_0, L) \stepsize{}^2
\end{equation}
for some positive constants $C_a$ and $C_b$.
These two inequalities are used in several places through our theoretical analysis, e.g. in Section~\ref{app:proof-local_abs_contractivity} and Section~\ref{app:proof-maximal_coupling}.

\subsection{Coupled RWMH kernel}
The coupled RWMH kernel from \citet{heng_unbiased_2019} used in this paper is shown in Algorithm~\ref{alg:coupled-mh} for completeness. Note that here we slightly abuse notation, writing $\MKer_{\sigma}(X, Y)$ to mean denote the probability \emph{density} of the probability measure $\MKer_{\sigma}(X, \cdot)$ evaluated at $Y$, where $X$ and $Y$ are random variables.
\begin{algorithm}[t]
    \KwInput{A pair of current states $(X_0, Y_0)$ and a RWMH kernel $\MKer_\sigma$ with variance $\sigma^2 I_d$}
    \KwOutput{A pair of next states $(X', Y')$}
    Sample $X^\ast \sim \MKer_\sigma(X_0, \wc) $ \;
    Sample $ w \mid X \sim \Unif([0, \MKer_\sigma(X_0, X^\ast) ]) $ \;
    \eIf{$w\leq \MKer_\sigma(Y_0, X^\ast) $}
    {
        Set $Y^\ast=X^\ast$\;
    }
    {
        \Repeat{$w^\ast > \MKer_\sigma(X_0, Y^\ast)$}{
            Sample $Y^\ast \sim \MKer_\sigma(Y_0, \wc) $ \;
            Sample $ w^\ast \mid Y^\ast \sim \Unif([0, \MKer_\sigma(Y_0, Y^\ast) ]) $\;
        }
    }
    Sample $ u \sim \Unif([0, 1])$\;
    Set $ X = X_0 $ and $ Y = Y_0$ \;
    \If{$u \leq \min\{ 1, \pi(X^\ast) / \pi(X_0) \}$}{Set $X = X^\ast$}
    \If{$u \leq \min\{ 1, \pi(Y^\ast) / \pi(Y_0) \}$}{Set $Y = Y^\ast$}
    Output $(X, Y)$\;
    \caption{Coupled RWMH kernel with maximal coupling \citep{jacob_unbiased_2019}}
    \label{alg:coupled-mh}
\end{algorithm}

\section{Additional Algorithmic Details}
\label{app:method}

\setcounter{subsection}{1}
\subsection{Sampling from discrete joints}
For completeness, we provide an algorithmic description of how to sample a pair of indices given their joint probability matrix in Algorithm~\ref{alg:joint_sampling}.
\begin{algorithm}[t]
    \KwInput{A $M \times N$ matrix $J$ that represents the joint of two categorical distributions}
    \KwOutput{A pair of indices $(i,j) \sim J$}
    \For{$i=1,\dots,M, j=1,\dots,N$}
    {
        Compute $ k = M (i - 1) + j $\;
        Set $ u_k = (i, j) $ and $ \vv_k = J_{ij} $\;
    }
    Sample $ k \sim \Cat(\vv)$\;
    Output $ u_k $\;
    \caption{Sampling from a discrete joint $J$}
    \label{alg:joint_sampling}
\end{algorithm}

\subsection{Debiasing marginal-non-preserving joints}

A side effect of using fixed-point iteration solvers or even approximate solvers \citep{cuturi2013sinkhorn} to solve~\eqref{eq:ot_objective} is that
the solution does not belong to $ \Gamma(\vmu, \vnu) $.
We denote such solutions as $J^\circ$, 
which indicates it is a \underline{j}oint probability matrix rather than a proper coupling.
Therefore we need a way to ensure that when using $J^\circ$,
we still have $ i \sim \vmu $ and $ j \sim \vnu $ exactly,
which we refer as a \textit{debiasing} step.
Inspired by the mixture view of the maximal coupling,
the result of our debiasing algorithm, the debiased $W_2$-coupling $\hat{\gamma}^\circ$,
can be as well viewed as a mixture
$$
\hat{\gamma}^\circ = \alpha J^\circ + (1 - \alpha) J^d
$$
where $\alpha$ is the probability of sampling from $J^\circ$, 
and $J^d$ is the \underline{d}ebiasing joint probability matrix.
The algorithm aims to find the maximal probability $ \alpha $ such that $ \hat{\gamma}^\circ \in \Gamma(\vmu, \vnu) $,
together with the corresponding debiasing matrix $ J^d $.
First, to find the maximal $ \alpha $,
we see that $ \hat{\gamma}^\circ \in \Gamma(\vmu, \vnu) $ implies
\begin{equation}
\begin{aligned}
&\vmu = \alpha \vmu^\circ + (1 - \alpha) \vmu^d, &
&\vnu = \alpha \vnu^\circ + (1 - \alpha) \vnu^d
\end{aligned}
\label{eq:debiasing-marginal_constraints}
\end{equation}
where $\vmu^\circ$ and $\vnu^\circ$ are marginals of $J^\circ$ and
$\vmu^d$ and $\vnu^d$ are marginals of $J^d$.
Since $\vmu^d$ and $\vnu^d$ are $K$-length probability vectors, 
we have $\mu_i^d >0 $ and $\nu_i^d > 0$ for all $i = 1,\dots,K$, 
which implies a set of constrains on $\alpha$
\begin{equation*}
\begin{aligned}
&\mu_i \geq \alpha \mu^\circ_i,&
&\text{ and }&
&\nu_i \geq \alpha \nu^\circ_i&
&\text{for all}\;i  = 1, \dots, K
\end{aligned}
\end{equation*}
Therefore, the maximal value of $\alpha$ is given by
\begin{equation}
\alpha = \min\{1,  \frac{\mu_1}{\mu^\circ_1}, \dots, \frac{\mu_K}{\mu^\circ_K}, \frac{\nu_1}{ \nu^\circ_1}, \dots, \frac{\nu_K}{\nu^\circ_K} \}.
\label{eq:debiasing-probability}    
\end{equation}
With $ \alpha $ found, 
we can solve~\eqref{eq:debiasing-marginal_constraints} to find $\vmu^d$ and $\vnu^d$,
and $J^d$ can be chosen as any coupling of them, i.e. $ J^d \in \Gamma(\vmu^d, \vnu^d)$,
including the \textit{independent coupling} that simply samples as $ i \sim \vmu^d, j \sim \vnu^d $.
We summarise in \algref{alg:debiasing} a sampling procedure of $\hat{\gamma}^\circ$ resulting from 
this debiasing approach.
\begin{algorithm}[t]
\textsc{Input}: A $K \times K$ probability matrix $\hat{\gamma}$ and two $K$-length probability vectors $\vmu, \vnu$ to target \;
\textsc{Output}: A pair of indices $(i,j)$ with $i\sim\vmu$ and $j\sim\vnu$ while maximally using $\hat{\gamma}$ \;
Compute $\vmu^\circ$ and $\vnu^\circ$ as marginals of $\hat{\gamma}$\;
Compute $\alpha$ according to~\eqref{eq:debiasing-probability}\;
Sample $ U \sim \Unif([0, 1])$\;
\eIf{$U<\alpha$}{
    Sample $(i,j) \sim \hat{\gamma} $ using Algorithm~\ref{alg:joint_sampling}\;
}{
    Compute $\vmu^d$ and $ \vnu^d$ by solving~\eqref{eq:debiasing-marginal_constraints}\;
    Sample $i \sim \vmu^d$ and $j \sim \vnu^d$\;
}
Output $ (i, j) $\;
\caption{Maximally sampling from a joint $\hat{\gamma}$ while ensuring marginals to be $\vmu$ and $\vnu$}
\label{alg:debiasing}
\end{algorithm}
It is not hard to see that by construction,
the approach satisfies~\eqref{eq:debiasing-marginal_constraints} and 
yields $ \hat{\gamma}^\circ \in \Gamma(\vmu, \vnu) $,
which, as a result, yields a coupled HMC kernel whose marginal kernels converge to the target.
Also, when there is no bias, i.e. $J^\circ \in \Gamma(\vmu, \vnu)$, we have $\alpha = 1$ from~\eqref{eq:debiasing-probability} and the algorithm reduces to exact $W_2$-coupling.

\subsection{Sampling from discrete maximal maximal coupling}
For completeness, we provide an algorithmic description of how to sample a pair if indices from the maximal coupling of two categorical distribution in Algorithm~\ref{alg:maximal_coupling}.
\begin{algorithm}[t]
\textsc{Input}: Two categorical distributions $\vmu$ and $\vnu$\;
\textsc{Output}: A pair of indices $(i,j) \sim \gamma^\ast$ \;
Compute $\omega = 1 - \TV(\vmu, \vnu)$ and $Z = \sum_i (\vmu \wedge \vnu)_i$\;
Sample $ u \sim \Unif([0, 1])$\;
\eIf{$u \leq \omega$}
{
    Sample $i \sim \Cat(\frac{\vmu \wedge \vnu}{Z})$ and set $j=i$\;
}{
    Sample $i \sim \Cat(\frac{\vmu - (\vmu \wedge \vnu)}{1 - Z})$, $j \sim \Cat(\frac{\vnu - (\vmu \wedge \vnu)}{1 - Z})$\;
}
Output $(i, j)$\;
\caption{Maximal coupling of $\vmu$ and $\vnu$}
\label{alg:maximal_coupling}
\end{algorithm}

\section{Technical Details}
\label{app:proof}

\subsection{Proof of \lemref{lemma:local-contractivity-implies-prop-1}}
\label{app:proof-local-contractivity-implies-prop-1}

\begin{prf}
  Suppose $\KerCHMC_{\stepsize{}, L}^{\gamma}$ satisfies \condref{ass:local_contractivity} on the set $S$ for some $\bar{\stepsize{}} > 0$, $\bar{L} \in \mathbb{N}$.

  First observe that
  \begin{equation*}
    \begin{split}
      & \PkerCHMC{\stepsize{}}{L}{\gamma} \left( \norm{Q_1^1 - Q_1^2} \le \rho \norm{Q_0^1 - Q_0^2} \mid (Q_0^1, Q_0^2) = (q^1, q^2) \right) \\
      &= \E[\KerCHMC_{\stepsize{}, L}^{\gamma}]{\ind \set{\norm{Q_1^1 - Q_1^2} \le \rho \norm{q^1 - q^2}}} \\
      &= \mathbb{E}_{P \sim \mathcal{N}(0, I)} \Big[ \E[(l_1, l_2) \sim \gamma]{\ind(R_{q^1, q^2, P}) \mid P} \Big]
    \end{split}
  \end{equation*}
  where we have let $R_{q^1, q^2, p}$ denote the set of events where we have contraction,~i.e.
  \begin{equation*}
    R_{q^1, q^2, p} = \set{\norm{\flowqdisc{\stepsize{}}{l_1}(q^1, p) - \flowqdisc{\stepsize{}}{l_2}(q^2, p)} \le \rho \norm{q^1 - q^2}}
  \end{equation*}
  By \condref{ass:local_contractivity} we know that there exists $\omega_1 \in (0, 1)$ such that
  \begin{equation}\label{eq:gamma-prob-lower-bound}
    \Prob[(l_1, l_2) \sim \gamma]{R_{q^1, q^2, p}} \ge \omega_1
  \end{equation}
  for all $(q^1, q^2, p) \in S \times S \times \levelset{k_0}{K}$, where $k_0 > 0$.
  By the tower property of expectation, this immediately implies that
  \begin{equation*}
    \begin{split}
      & \mathbb{E}_{P \sim \mathcal{N}(0, I)} \Big[ \E[(l_1, l_2) \sim \gamma]{\ind(R_{q^1, q^2}) \ind \left\{ K(P) \le k_0 \right\} \mid P} \Big] \\
      &\ge \E[P \sim \mathcal{N}(0, I)]{\omega_1 \ind \left\{ K(P) \le k_0 \right\}} \\
      &= \omega_1 \Prob[P \sim \mathcal{N}(0, I)]{\set{K(P) \le k_0}} \\
      &> 0
    \end{split}
  \end{equation*}
  where the last inequality follows from the fact that the level sets $\levelset{k_0}{K}$ are closed for any $k_0 > 0$ since $K$ is continuous and bounded and therefore compact, in addition to having positive Lebesgue measure.
  Since~\eqref{eq:gamma-prob-lower-bound} holds for all $(q^1, q^2, p) \in S \times S \times \levelset{k_0}{K}$ with $\omega_1 > 0$, we have
  \begin{equation}\label{eq:contraction-in-probability-infimum-with-K}
  \begin{split}
    \inf_{q^1, q^2 \in S} \PkerCHMC{\stepsize{}}{L}{\gamma} \big( & \left\{ \norm{Q_1^1 - Q_1^2} \le \rho \norm{Q_0^1 - Q_0^2} \right\} \\
    & \cap \set{K(P) \le k_0} \mid (Q_0^1, Q_0^2) = (q^1, q^2) \big) \\
    & \ge \omega_1 \omega_2 \\
    & > 0
  \end{split}
  \end{equation}
  where we have let $\omega_2 = \Prob[P \sim \mathcal{N}(0, I)]{K(P) \le k_0}$.
  
  In words, for any initial points $(q^1, q^2) \in S \times S$, a single application of the kernel $\KerCHMC_{\stepsize{}, L}^{\gamma}$ decreases the distance with non-zero probability. Equipped with this, proving the desired statement is just a matter of ensuring that we can indeed apply \eqref{eq:contraction-in-probability-infimum-with-K} repeatedly to get the states sufficiently close to each other. A straightforward approach to this is to simply choose the stepsize to be sufficiently small such that even when taking the required number of steps to get within the desired $\delta \text{-ball}$, every step taken is still within a set where \eqref{eq:contraction-in-probability-infimum-with-K} holds. This is exactly the approach taken in \citet{heng_unbiased_2019} and so the rest of the proof is essentially identical to the last paragraph in the proof of Proposition 1 in \citet{heng_unbiased_2019}.
  
  Consider $u_0 > \inf_{q \in S} U(q)$, and $u_1 < \sup_{q \in S} U(q)$ with $u_0 < u_1$, and let $A_{\ell} := \levelset{\ell}{U_S} \times \levelset{u_1 - \ell}{K} \subset \levelset{u_1}{\Energy}$ for $\ell \in (u_0, u_1)$. Since continuity and convexity of $U_S$ imply that this is a closed function, its level sets $\levelset{\ell}{U_S}$ are closed. 
  Moreover, under the assumptions on $U$ and $S$, it follows that these level sets are compact with positive Lebesgue measure.
  Note that if $(q, p) \in A_{\ell}$, due to energy conservation and continuity of $U$, the mapping $t \mapsto \flowq{t}(q, p)$ imply that $\flowq{t}(q, p) \in \levelset{u_1}{U_S}$ for any $t \in [-T, T]$.
  Due to time discretization, using~\eqref{eq:numerical-error-energy} and compactness of $A_{\ell}$ we can only conclude that there exists $\eta_0 > 0$ such that $\flowqdisc{\stepsize{}}{l}(q, p) \in \levelset{u_1 + \eta_0}{U}$ for all $(q, p) \in A_{\ell}$ and $l = L_{b}, ..., L_{f}$.
  Let $n_0 = \min \set{n \in \mathbb{N}: \rho^n B \le \delta}$, where $B := \sup_{q^1, q^2 \in S} \norm{q^1 - q^2}$.
  By choosing $v_0 \in (u_0, u_1)$, $k_0 > 0$, and $\eta_0 > 0$ small enough such that
  \begin{equation*}
    v_0 + (n_0 + 1) k_0 + n_0 \eta_0 < u_1
  \end{equation*}
  holds, we have $Q_k^1, Q_k^2 \in S$ for all $k = 1, \dots, n_0$.
  Hence, by repeated application of~\eqref{eq:contraction-in-probability-infimum-with-K},
  \begin{equation*}
    \inf_{q^1, q^2 \in S_0} \PkerHMC{\stepsize{}}{L} \left( \norm{ Q_{n_0}^1 - Q_{n_0}^2 } \le \delta \mid (Q_0^1, Q_0^2) = (q^1, q^2) \right) > 0
  \end{equation*}
  with $S_0 = \levelset{v_0}{U_S}$, exactly as in \propref{prop:proposition-1-heng2019}.
\end{prf}

\subsection{Proof of Lemma~\ref{lemma:local_abs_contractivity}}
\label{app:proof-local_abs_contractivity}

\begin{lemma}\label{lemma:local_abs_contractivity_disc}
    Suppose that the potential $U$ satisfies Assumptions \ref{ass:regularity_and_growth} and \ref{ass:local_strong_convexity}. For any compact set $A \subset S \times S \times \mathbb{R}^d$, there exists a trajectory length $T > 0$ and a step size $\stepsize{}_1 > 0 $ s.t.~for any $\stepsize{} \in (0, \stepsize{}_1]$ and any $t \in [-T, T] \setminus \set{0}$ with $l := t / \stepsize{} \in \ZZ$,
    there exists $\rho \in [0, 1)$ satisfying
    \begin{equation}
        \norm{\flowqdisc{\stepsize{}}{l}(q_0^1, p_0) - \flowqdisc{\stepsize{}}{l}(q_0^2, p_0)} \leq \rho \norm{q_0^1 - q_0^2}
        \label{eq:local_abs_contractivity_disc}
    \end{equation}
    for all $(q_0^1, q_0^2, p_0) \in A$.
\end{lemma}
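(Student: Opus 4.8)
\quad The idea is to transport the contraction of the exact Hamiltonian flow — the content of \lemref{lemma:local_abs_contractivity} — to the leapfrog map, with the cost of the transfer controlled by the second-order error bound~\eqref{eq:numerical-error-state}. By the reversibility of the leapfrog integrator, $\flowqdisc{\stepsize{}}{l}$ for $l < 0$ reduces to the $\abs{l}$-step forward map with the momentum reversed; since $A$ is compact, so is $A' := \set{(q^1, q^2, -p) : (q^1, q^2, p) \in A} \subset S \times S \times \RR^d$, so it suffices to prove the claim for $l \ge 1$, applied to the compact set $A \cup A'$. Fix this set, fix any $\stepsize{} \in (0, \stepsize{}_1]$ and any $l \in \NN$ with $t := l \stepsize{} \le T$, where $T > 0$ is taken no larger than the trajectory length furnished by \lemref{lemma:local_abs_contractivity} (possibly shrunk further below, but independently of $\stepsize{}$ and $l$) and $\stepsize{}_1 > 0$ is chosen at the end. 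All constants below are uniform over $A \cup A'$ and over $t \le T$, which is legitimate by compactness together with \assumptref{ass:regularity_and_growth}.

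Writing $\Delta(q, p) := \flowqdisc{\stepsize{}}{l}(q, p) - \flowq{t}(q, p)$ for the position-projected global leapfrog error after $l$ steps, the triangle inequality gives
\[
  \norm{\flowqdisc{\stepsize{}}{l}(q_0^1, p_0) - \flowqdisc{\stepsize{}}{l}(q_0^2, p_0)} \le \norm{\flowq{t}(q_0^1, p_0) - \flowq{t}(q_0^2, p_0)} + \norm{\Delta(q_0^1, p_0) - \Delta(q_0^2, p_0)}.
\]
For the first term I would use the \emph{quantitative} form of the contraction of the exact flow, namely $\norm{\flowq{t}(q^1, p) - \flowq{t}(q^2, p)} \le \sqrt{1 - c_1 t^2}\, \norm{q^1 - q^2}$ with some $c_1 > 0$, valid for all $t \in (0, T]$ after possibly shrinking $T$: this is exactly what the second-order Taylor expansion of $s \mapsto \norm{\flowq{s}(q^1, p) - \flowq{s}(q^2, p)}^2$ behind \lemref{lemma:local_abs_contractivity} produces, its first derivative at $0$ vanishing because the momenta are shared and its second derivative being at most $-2\alpha \norm{q^1 - q^2}^2$ by \assumptref{ass:local_strong_convexity}.

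The crux is the second term. The bare bound $\norm{\Delta(q_0^i, p_0)} \le C_a \stepsize{}^2$ from~\eqref{eq:numerical-error-state} only yields the \emph{additive} estimate $\norm{\Delta(q_0^1, p_0) - \Delta(q_0^2, p_0)} \le 2 C_a \stepsize{}^2$, which cannot be absorbed into a multiplicative contraction constant when $q_0^1$ and $q_0^2$ are close. The main obstacle is therefore to upgrade~\eqref{eq:numerical-error-state} to a Lipschitz-in-initial-condition estimate
\[
  \norm{\Delta(q_0^1, p_0) - \Delta(q_0^2, p_0)} \le K\, t\, \stepsize{}^2\, \norm{q_0^1 - q_0^2},
\]
the factor $t$ being essential: it reflects that the error is accumulated along the integration interval. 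I would obtain this by differentiating, in the initial condition, the standard telescoping expansion of $\flowdisc{\stepsize{}}{l} - \flow{t}$ into $l$ contributions, the $k$-th being the difference of $\flowdisc{\stepsize{}}{l-1-k}$ evaluated at the two states $\flowdisc{\stepsize{}}{1}(\flow{k\stepsize{}}(\cdot))$ and $\flow{\stepsize{}}(\flow{k\stepsize{}}(\cdot))$, which differ by the one-step error: that one-step error and its Jacobian are both $O(\stepsize{}^3)$ in the state (Taylor expansion of the leapfrog step, using \assumptref{ass:regularity_and_growth}), the Jacobians of $\flowdisc{\stepsize{}}{j}$ and $\flow{j\stepsize{}}$ for $j\stepsize{} \le T$ are bounded on compact sets by a discrete Gronwall constant depending only on $\beta$ and $T$, and there are $l = t/\stepsize{}$ summands; collecting the factors gives $\norm{\nabla_q \Delta} \le K t \stepsize{}^2$ on a compact set containing the segments $[q_0^1, q_0^2]$, and the mean value theorem concludes. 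A self-contained alternative that bypasses \lemref{lemma:local_abs_contractivity} is to propagate the difference variables $\delta q_j := q_j^1 - q_j^2$, $\delta p_j := p_j^1 - p_j^2$ directly through the leapfrog recursion — writing $\nabla U(q_j^1) - \nabla U(q_j^2) = H_j \delta q_j$ with $\alpha\norm{\delta q_j}^2 \le \delta q_j^\top H_j \delta q_j \le \beta\norm{\delta q_j}^2$ while the iterates remain in $S$ — and estimating $\norm{\delta q_l}^2$ by the same $1 - \Theta(t^2)$ bound; this re-derives \lemref{lemma:local_abs_contractivity} as a byproduct but additionally needs control of the excursion of the iterates from $S$ via energy near-conservation~\eqref{eq:numerical-error-energy} and continuity, which is why the route through \lemref{lemma:local_abs_contractivity} is cleaner.

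Combining the two estimates and using $\sqrt{1 - x} \le 1 - x/2$ for $x \in [0, 1]$ (valid once $T$ is shrunk so that $c_1 T^2 \le 1$),
\[
  \norm{\flowqdisc{\stepsize{}}{l}(q_0^1, p_0) - \flowqdisc{\stepsize{}}{l}(q_0^2, p_0)} \le \Big( \sqrt{1 - c_1 t^2} + K t \stepsize{}^2 \Big) \norm{q_0^1 - q_0^2} \le \Big( 1 - t\big( \tfrac{c_1 t}{2} - K \stepsize{}^2 \big) \Big) \norm{q_0^1 - q_0^2}.
\]
Since $t = l\stepsize{} \ge \stepsize{}$, the bracket is at least $\stepsize{}\big(\tfrac{c_1}{2} - K\stepsize{}\big)$, so choosing $\stepsize{}_1 < \tfrac{c_1}{2K}$ — and small enough that the error estimates above hold — makes the prefactor strictly less than $1$; taking $\rho := 1 - t\big(\tfrac{c_1 t}{2} - K\stepsize{}^2\big)$, which lies in $[0, 1)$ because $t\big(\tfrac{c_1 t}{2} - K\stepsize{}^2\big) \le \tfrac{c_1 t^2}{2} \le \tfrac{c_1 T^2}{2} \le \tfrac12$, yields~\eqref{eq:local_abs_contractivity_disc}. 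The whole argument hinges on the linear-in-$t$ discretisation estimate of the previous paragraph: the $t$-independent additive error from~\eqref{eq:numerical-error-state} is, by itself, of the same order $O(\stepsize{}^2)$ as the contraction gap $1 - \sqrt{1 - c_1 t^2}$ for the shortest admissible trajectories, so the refinement is essential rather than cosmetic.
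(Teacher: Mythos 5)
Your proposal is correct and shares the paper's basic skeleton -- split $\norm{\flowqdisc{\stepsize{}}{l}(q_0^1, p_0) - \flowqdisc{\stepsize{}}{l}(q_0^2, p_0)}$ by the triangle inequality into the contraction of the exact flow (Lemma 1 of Heng \& Jacob, i.e.~\lemref{lemma:local_abs_contractivity}) plus leapfrog discretization errors, handle negative times by momentum reversal, and then take $\stepsize{}$ small -- but it departs at the decisive step, and the departure is substantive rather than cosmetic. The paper keeps the discretization error in the \emph{additive} form $\left( C(q_0^1, p_0, t) + C(q_0^2, p_0, t) \right) \stepsize{}^2$ from \eqref{eq:numerical-error-state} and concludes that, since this vanishes as $\stepsize{} \to 0$, any $\rho \in (\rho', 1)$ works for small enough $\stepsize{}$; as you observe, a fixed additive error cannot be absorbed into a multiplicative factor uniformly over $A$, since the final inequality $\rho' \norm{q_0^1 - q_0^2} + \left( C_1 + C_2 \right) \stepsize{}^2 \le \rho \norm{q_0^1 - q_0^2}$ requires the separation $\norm{q_0^1 - q_0^2}$ to be at least of order $\stepsize{}^2$, so the paper's argument as written only covers pairs bounded away from the diagonal (the statement itself is still believed true, since the true error difference does scale with the separation). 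Your two refinements -- the quantitative contraction $\sqrt{1 - c_1 t^2}$ for the exact flow and, crucially, the Lipschitz-in-initial-condition error estimate $\norm{\Delta(q_0^1, p_0) - \Delta(q_0^2, p_0)} \le K\, t\, \stepsize{}^2 \norm{q_0^1 - q_0^2}$ obtained by differentiating the telescoped global error -- repair exactly this, yielding a factor $\rho = 1 - t\left( c_1 t / 2 - K \stepsize{}^2 \right) \in [0, 1)$ valid for \emph{all} pairs in $A$, which is what the downstream application in Proposition~\ref{prop:property-parallel} (where the bound is invoked with no lower bound on the separation) actually needs. The price is extra regularity and work: controlling the Jacobian of the one-step leapfrog error needs roughly a Lipschitz Hessian of $U$, slightly beyond what \assumptref{ass:regularity_and_growth} literally grants, although the paper's own appeal to the order-two error constants of Hairer et al.~carries a comparable implicit smoothness assumption; also make sure the compact set on which your Gr\"onwall-type Jacobian bounds hold contains the segments $[q_0^1, q_0^2]$ and the numerical trajectories emanating from them, which follows from compactness of $A$ and \eqref{eq:numerical-error-state} after possibly shrinking $\stepsize{}_1$. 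In short, the paper buys brevity at the cost of a uniformity gap near the diagonal; your route costs a sharper error lemma but delivers the uniform statement.
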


\begin{proof}
    As the leapfrog integrator is of order two \citep{hairer2006geometric,bou-rabee_coupling_2019}, for any sufficently small step size $\stepsize{}$ and number of step $l$ states above, we have
    $ \norm{\flowdisc{\stepsize{}}{l}(q_0, p_0) - \flow{t}(q_0, p_0)} \leq C_1(q_0, p_0, t) \stepsize{}^2 $ and similar for its position-projected correspondence
    \begin{equation}
        \norm{\flowqdisc{\stepsize{}}{l}(q_0, p_0) - \flowq{t}(q_0, p_0)} \leq C_1(q_0, p_0, t) \stepsize{}^2
        \label{eq:leapfrog_bound-phase}
    \end{equation}
    where $C_1(q_0, p_0, t)$ is some constant that only depends on $q_0, p_0$ and $t$.

    By \citep[Lemma 1,][]{heng_unbiased_2019}, with some fixed $T$,
    we have $\rho' \in [0, 1)$ satisfying
    \begin{equation}
        \norm{\flowq{t}(q_0^1, p_0) - \flowq{t}(q_0^2, p_0)} \leq \rho' \norm{q_0^1 - q_0^2}
    \label{eq:heng_and_jacob-lemma1}
    \end{equation}
    for any $t \in (0, T]$ and all $(q_0^1, q_0^2, p_0) \in A$.
    Since $\flowq{t}(q_0^1, -p_0) = \flowq{-t}(q_0^1, p_0)$,
    applying \citep[Lemma 1,][]{heng_unbiased_2019} again with the momentum variable negated,
    we have~\eqref{eq:heng_and_jacob-lemma1} for $t \in [-T, 0)$.
    Therefore~\eqref{eq:heng_and_jacob-lemma1} holds for $t \in [-T, T] \setminus \set{0}$.

    With these two intermediate results, 
    we can now bound the left-hand side (LHS) of~\eqref{eq:local_abs_contractivity} for any $t \in [-T, T] \setminus \set{0}$ with $l = t / \stepsize{} \in \ZZ$ and all $(q_0^1, q_0^2, p_0) \in A$
    \begin{equation*}
    \begin{aligned}
        &\norm{{\color{red} \flowqdisc{\stepsize{}}{l}(q_0^1, p_0)} - \color{blue} {\flowqdisc{\stepsize{}}{l}(q_0^2, p_0)}} \\
        =& \lVert{\color{red} \flowqdisc{\stepsize{}}{l}(q_0^1, p_0)} - {\color{orange} \flowq{t}(q_0^1, p_0)} - \\&{\color{blue} \flowqdisc{\stepsize{}}{l}(q_0^2, p_0)} + {\color{brown} \flowq{t}(q_0^2, p_0)} + {\color{orange} \flowq{t}(q_0^1, p_0)} - {\color{brown} {\flowq{t}(q_0^2, p_0)}} \rVert \\
        \leq& \norm{{\color{red} \flowqdisc{\stepsize{}}{l}(q_0^1, p_0)} - {\color{orange} \flowq{t}(q_0^1, p_0)}} + \\&\norm{{\color{blue} \flowqdisc{\stepsize{}}{l}(q_0^2, p_0)} - {\color{brown} \flowq{t}(q_0^2, p_0)}} + \norm{{\color{orange} \flowq{t}(q_0^1, p_0)} - \color{brown} {\flowq{t}(q_0^2, p_0)}} \\
        \leq& \left( C(q_0^1, p_0, t) + C(q_0^2, p_0, t) \right) \stepsize{}^2 + \rho' \norm{q_0^1 - q_0^2}
    \end{aligned}
    \end{equation*}
    where the third line is a result of the triangle inequality and the last line comes from~\eqref{eq:leapfrog_bound-phase} and~\eqref{eq:heng_and_jacob-lemma1} respectively.
    As $\lim_{\stepsize{} \to 0} \left( C(q_0^1, p_0, t) + C(q_0^2, p_0, t) \right) \stepsize{}^2 = 0$,
    for any $\rho \in (\rho', 1)$,
    there exists a step size $\stepsize{}_1 > 0$ such that for any $\stepsize{} \leq \stepsize{}_1$,~\eqref{eq:local_abs_contractivity} holds.
\end{proof}

\subsection{Proof of Proposition~\ref{prop:property-parallel}}
\label{app:proof-property-parallel}

For the sake of presentation, in this section we only consider \condref{ass:local_contractivity} for $m = 1$. To prove that $\gamma^{\ast}$ satisfies \condref{ass:local_contractivity} for $m > 1$ follows the exact reasoning since \eqref{eq:local_abs_contractivity} in \lemref{lemma:local_abs_contractivity} still holds when both sides are raised to some positive power $m$.

\begin{proposition}\label{prop:property-parallel}
    Suppose that $U$ satisfies Assumptions \ref{ass:regularity_and_growth} and \ref{ass:local_strong_convexity}.
    For any compact set $A \subset S \times S \times \mathbb{R}^d$ and any parallel-in-time joint $J^\parallel \in \RR^{K \times K}$, 
    there exists a trajectory length $T > 0$, a step size $\stepsize{}_1 > 0 $ s.t.~for any $\stepsize{} \in (0, \stepsize{}_1]$ and any $L_1, L_2 \in \NN$ with $L_1 + L_2 = K - 1$ and $\stepsize{} L_1, \stepsize{} L_2 < T$,
    there exists $\tilde{\rho} \in (0, 1)$ satisfying
    \begin{equation}
        \E[(i, j) \sim J^\parallel] { \norm{\flowqdisc{\stepsize{}}{l_i}(q^1, p) - \flowqdisc{\stepsize{}}{l_j}(q^2, p)} } \leq \tilde{\rho} \norm{q^1 - q^2}
        \label{eq:property-parallel}
    \end{equation}
    for all $(q^1, q^2, p) \in A$,
    where $l_k$ is the $k$-th entry of the vector $[-L_1, \dots, 0, \dots, L_2]$.
\end{proposition}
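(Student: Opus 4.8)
Fix a parallel-in-time joint $J^\parallel$, i.e.\ one supported on the diagonal $\set{(k,k) : k = 1, \dots, K}$, so that $(i,j) \sim J^\parallel$ forces $i = j$ and hence $l_i = l_j$. The plan is to note that the left-hand side of \eqref{eq:property-parallel} collapses to a $J^\parallel$-weighted sum of \emph{same-time} discrete-flow distances,
\begin{equation*}
  \E[(i,j)\sim J^\parallel]{\norm{\flowqdisc{\stepsize{}}{l_i}(q^1,p) - \flowqdisc{\stepsize{}}{l_j}(q^2,p)}}
  = \sum_{k=1}^{K} J^\parallel_{kk}\, \norm{\flowqdisc{\stepsize{}}{l_k}(q^1,p) - \flowqdisc{\stepsize{}}{l_k}(q^2,p)},
\end{equation*}
and then to bound each summand via \lemref{lemma:local_abs_contractivity_disc}, isolating the single zero-time index.

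First I would apply \lemref{lemma:local_abs_contractivity_disc} to the compact set $A$ to obtain a trajectory length $T > 0$, a step size $\stepsize{}_1 > 0$ and a rate $\rho \in [0,1)$ with $\norm{\flowqdisc{\stepsize{}}{l}(q^1,p) - \flowqdisc{\stepsize{}}{l}(q^2,p)} \le \rho \norm{q^1 - q^2}$ for all $(q^1,q^2,p) \in A$, all $\stepsize{} \in (0,\stepsize{}_1]$ and all $t = l\stepsize{} \in [-T,T]\setminus\set{0}$ with $l \in \ZZ$; since the lemma is stated for negative times as well, this covers the backward half $l_k < 0$ of the trajectory through the identification $\flowqdisc{\stepsize{}}{-|l|}(q,p) = \flowqdisc{\stepsize{}}{|l|}(q,-p)$. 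The hypothesis $\stepsize{} L_1, \stepsize{} L_2 < T$ ensures every time $l_k\stepsize{}$ appearing in the sum lies in $[-T,T]$, so the bound applies to every index $k$ with $l_k \neq 0$. Let $k_0 = L_1 + 1$ be the unique index with $l_{k_0} = 0$; there $\flowqdisc{\stepsize{}}{0}(q^c,p) = q^c$, so the summand is exactly $\norm{q^1 - q^2}$. Splitting the sum and using $\sum_k J^\parallel_{kk} = 1$,
\begin{equation*}
  \sum_{k=1}^{K} J^\parallel_{kk}\, \norm{\flowqdisc{\stepsize{}}{l_k}(q^1,p) - \flowqdisc{\stepsize{}}{l_k}(q^2,p)}
  \le \Big( J^\parallel_{k_0 k_0} + \big(1 - J^\parallel_{k_0 k_0}\big)\rho \Big)\norm{q^1 - q^2},
\end{equation*}
so \eqref{eq:property-parallel} holds with $\tilde{\rho} := J^\parallel_{k_0 k_0} + (1 - J^\parallel_{k_0 k_0})\rho$.

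The only real obstacle is showing $\tilde{\rho} < 1$, which amounts to $J^\parallel_{k_0 k_0} < 1$: the joint must not concentrate \emph{all} its mass on the initial point. Here $K = L_1 + L_2 + 1 \ge 2$, and the parallel-in-time joints that \lemref{lemma:contractivity-maximal} feeds into this proposition have marginals built from the multinomial weights $\sigma(\wc, \traj)$; energy conservation of the exact flow together with the numerical energy-error bound \eqref{eq:numerical-error-energy} forces the weights along a trajectory to lie within a $\stepsize{}$-controlled factor of one another, so $J^\parallel_{k_0 k_0}$ is bounded away from $1$ uniformly over $\stepsize{} \in (0,\stepsize{}_1]$ (after possibly shrinking $\stepsize{}_1$) and over $A$, giving $\tilde{\rho} \in (0,1)$. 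Finally, the case $m > 1$ of \condref{ass:local_contractivity} follows verbatim after raising \eqref{eq:local_abs_contractivity_disc} to the power $m$: every nonzero-time summand contracts by $\rho^m$ and the zero-time summand is $\norm{q^1 - q^2}^m$, so $\tilde{\rho}^m := J^\parallel_{k_0 k_0} + (1 - J^\parallel_{k_0 k_0})\rho^m < 1$ works.
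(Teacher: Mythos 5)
Your proof follows essentially the same route as the paper's: reduce the expectation over $J^\parallel$ to a sum along the diagonal, contract every nonzero-time summand via \lemref{lemma:local_abs_contractivity_disc}, leave the zero-time summand at $\norm{q^1 - q^2}$, and take $\tilde{\rho}$ to be the resulting convex combination. The one place you go beyond the paper is in noting that $\tilde{\rho} < 1$ requires $J^\parallel_{k_0 k_0} < 1$: the paper simply asserts the convex combination lies in $(0,1)$ (implicitly excluding a joint concentrated entirely on the initial index), whereas you justify the bound for the joints that actually arise from the maximal coupling via near-uniformity of the multinomial weights --- a legitimate and more careful patch, though strictly speaking it narrows ``any parallel-in-time joint'' to those used downstream.
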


\begin{proof}
    By definition, $J^\parallel$ has only diagonal entries, thus $(i,j) \sim J^\parallel$ is equivalent to $(i, i)$ with $ i \sim \mathrm{diag}(J^\parallel)$.
    Denote the left-hand side of~\eqref{eq:property-parallel} as $A_1$,
    expanding and rearranging $A_1$ and applying \lemref{lemma:local_abs_contractivity}, we have
    \begin{equation*}
    \begin{aligned}
        A_1
        &= \sum_{k = 0}^{L_1+L_2+1} \PP(i = k) \times \norm{\flowqdisc{\stepsize{}}{l_k}(q^1, p) - \flowqdisc{\stepsize{}}{l_k}(q^2, p)} \\
        &= \sum_{k \neq L_1+1} \PP(i = k) \times \norm{\flowqdisc{\stepsize{}}{l_k}(q^1, p) - \flowqdisc{\stepsize{}}{l_k}(q^2, p)} \\
        & \qquad \qquad + \PP(i = L_1 + 1) \times \norm{\flowqdisc{\stepsize{}}{0}(q^1, p) - \flowqdisc{\stepsize{}}{0}(q^2, p)} \\
        &\leq \sum_{k \neq L_1+1} \PP(i = k) \times \rho_{l_k} \times \norm{q^1 - q^2} \\
        & \qquad \qquad + \PP(i = L_1 + 1) \times \norm{q^1 - q^2} \\
        &= \E[i]{ \rho_{l_i}} \times \norm{q^1 - q^2} \\
        &:= \tilde{\rho} \norm{q^1 - q^2}
    \end{aligned}
    \end{equation*}
    where we let $\rho_0 = 1$.
    As $\rho_l \in (0, 1) $ for $l \neq 0$ and $\rho_0 = 1$,
    $ \E[i]{ \rho_{l_i}} = \sum_k \PP(i = k) \times \rho_{l_k} \in (0, 1) $ by the property of convex combination.
    In other words, we have $\tilde{\rho} \in (0, 1)$.
\end{proof}

\subsection{Proof of Proposition~\ref{prop:maximal_coupling}}
\label{app:proof-maximal_coupling}

\begin{proof}
    For two length-$K$ Hamiltonian trajectories $\traj^1$ and $\traj^2$,
    denote $\vx = [\Energy(\traj^1_1), \dots, \Energy(\traj^1_K)]$ and $\vy = [\Energy(\traj^2_1), \dots, \Energy(\traj^2_K)]$ as vectors of the Hamiltonian energy of all phasepoints.
    With the softmax function $\sigma(\vx)_i = \exp(-\vx_i) / \sum_{i'} \exp(-\vx_{i'})$,
    the entries of $\vmu$ and $\vnu$ can be expressed as
    \begin{equation*}
        \begin{aligned}
        &\mu_i = \sigma(\vx)_i&
        &\nu_j = \sigma(\vy)_j
        \end{aligned}
    \end{equation*}
    By the Cauchy–Schwarz inequality, we have $\norm{\sigma(\vx) - \sigma(\vy)}_1 \leq \sqrt{K} \norm{\sigma(\vx) - \sigma(\vy)}$.
    With this, we can then upper-bound $\TV( \vmu, \vnu )$ as
    \begin{equation*}
    \begin{aligned}
        \TV( \vmu, \vnu ) &= \TV( \sigma(\vx), \sigma(\vy) ) \\
        &= \frac{1}{2}\norm{\sigma(\vx) - \sigma(\vy)}_1 \\
        &\leq \frac{1}{2} \sqrt{K} \norm{\sigma(\vx) - \sigma(\vy)}
    \end{aligned}
  \end{equation*}
    Denote the energy of the initial phasepoints in each trajectory $(q_0^1, p_0)$ and $(q_0^2, p_0)$ as $\Energy_0^1$ and $\Energy_0^1$ and let $\Energy_i^1 := \vx_i$ and $\Energy_j^1 := \vy_j$; note that for some $i_0 \in \set{1, \dots, K}$ we have $\Energy(\traj_{i_0}^c) = \Energy_0^c$ for $c = 1, 2$, i.e. $i_0$ represents the initial time-index which is shared between the two.
    As the leapfrog integrator is of order two \citep{hairer2006geometric,bou-rabee_coupling_2019}, for any sufficiently small step size $\stepsize{} = T / L$, we have
    \begin{equation}\label{eq:energy-numerical-error}
        \abs{\Energy_0^c - \Energy(\traj_i^c))} \leq C_2(q_0^c, p_0)\,t_i\,\stepsize{}^2 \leq C_2(q_0^c, p_0)\,T\,\stepsize{}^2
    \end{equation}
    for $c = 1, 2$, where $t_i$ denotes the corresponding integration time for the $i$-th phasepoint from the first phasepoint.
    Denote the energy differences as $\Delta_i^1 = \Energy(\traj_i^1) - \Energy_0^1$ and $\Delta_j^1 = \Energy(\traj_j^2) - \Energy_0^2$ and observe that
    \begin{equation*}
        \begin{aligned}
            &\sigma(\vx) = \sigma([\Delta_1^1, \dots, \Delta_K^1])&
            &\sigma(\vy) = \sigma([\Delta_1^2, \dots, \Delta_K^2])
        \end{aligned}
    \end{equation*}
    Using the fact that the softmax function is 1-Lipschitz \citep{gao2018properties} and applying~\eqref{eq:energy-numerical-error}, we have
    \begin{equation*}
        \begin{aligned}
            \norm{\sigma(\vx) - \sigma(\vy)}
            &= \norm{\sigma([\Delta_1^1, \dots, \Delta_K^1]) - \sigma([\Delta_1^2, \dots, \Delta_K^2])} \\
            &\le \norm{[\Delta_1^1, \dots, \Delta_K^1] - [\Delta_1^2, \dots, \Delta_K^2]} \\
            &\leq \sqrt{ \sum_{k=1}^K C_2(q_0^1, p_0) C_2(q_0^2, p_0)\,T^2\,\stepsize{}^4 } \\
            &= \sqrt{K C_2(q_0^1, p_0) C_2(q_0^2, p_0)}\,T\,\stepsize{}^2
        \end{aligned}
    \end{equation*}
    Substituting back into our bound on $\TV( \vmu, \vnu )$,
    \begin{equation*}
        \TV( \vmu, \vnu ) \leq \frac{1}{2} K \sqrt{C_2(q_0^1, p_0) C_2(q_0^2, p_0)}\,T\,\stepsize{}^2
    \end{equation*}
    Since $T$ is fixed, $\stepsize{} = T / L$ and $K = L + 1$, we have
    \begin{equation}
    \begin{split}
        \TV( \vmu, \vnu ) &\leq \frac{1}{2} \sqrt{C_2(q_0^1, p_0) C_2(q_0^2, p_0)}\,T^3\,\frac{L+1}{L^2}  \\
        & \le \sqrt{C_2(q_0^1, p_0) C_2(q_0^2, p_0)}\,T^3\,L^{-1} \\
        & \le \sqrt{C_2(q_0^1, p_0) C_2(q_0^2, p_0)}\,T^2\,\stepsize{}.
    \end{split}
    \label{eq:tv_bound}
    \end{equation}
    Finally, note that the upper-bound decreases in with $\stepsize{}$ and $T$, hence for any given $\delta > 0$, there exists $\stepsize{}_0 > 0$, $L_0 \in \mathbb{N}$ such that $\TV( \vmu, \vnu ) \le \delta$ for all $\stepsize{} \in (0, \stepsize{}_0)$ and $L \in \mathbb{N}$ satisfying $\stepsize{} L < \stepsize{}_0 L_0 = T$.
\end{proof}

\subsection{Proof of Lemma~\ref{lemma:contractivity-maximal}}
\label{app:proof-contractivity-maximal}

Similarly to in \appref{app:proof-property-parallel} we only consider \condref{ass:local_contractivity} with $m = 1$ as the case of $m > 1$ follows similarly.

To prove \lemref{lemma:contractivity-maximal} we first restate a more detailed version of the lemma, which we then prove.

\begin{lemma}
    Suppose that the potential $U$ satisfies Assumptions \ref{ass:regularity_and_growth} and \ref{ass:local_strong_convexity}.
    For a maximal coupling $\gamma^\ast$, 
    there exists a trajectory length $T > 0$ and a step size $\stepsize{}_2 > 0 $ such that for any $\stepsize{} \in (0, \min\{\stepsize{}_1, \stepsize{}_2\}]$ and any $t \in [-T, T] \setminus \set{0}$ with $l := t / \stepsize{} \in \ZZ$,
    there exists $\rho_2 \in (0, 1)$ satisfying
    \begin{equation}\label{eq:contractivity-maximal}
        \E[(l_1, l_2) \sim \gamma^{\ast}]{\norm{\flowqdisc{\stepsize{}}{l_1}(q^1, p) - \flowqdisc{\stepsize{}}{l_2}(q^2, p)}} \le \rho \norm{q^1 - q^2}
    \end{equation}
    for all $(q^1, q^2) \in S \times S$,
    where $\CMKer_{\stepsize, l}^\ast$ is the coupled kernel in Algorithm~\ref{alg:coupled-hmc} with (i) shared momentum, (ii) shared forward and backward simulation steps and (iii) $(i, j) \sim \gamma^\ast$ for intra-trajectory sampling.
    \label{lemma:contractivity-maximal-restated}
\end{lemma}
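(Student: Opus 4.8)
The plan is to exploit the mixture representation of the maximal coupling in~\eqref{eq:maximal_coupling}, which writes $\gamma^\ast = \omega\,\gamma^{\parallel} + (1-\omega)\,\gamma^{\perp}$ with $\omega = \sum_i(\vmu\wedge\vnu)_i = 1 - \TV(\vmu,\vnu)$, the component $\gamma^{\parallel} = \mathrm{diag}\big((\vmu\wedge\vnu)/\omega\big)$ supported on the diagonal (a parallel-in-time joint in the sense of \propref{prop:property-parallel}), and $\gamma^{\perp}$ supported off the diagonal. Writing the left-hand side of~\eqref{eq:contractivity-maximal} as $\omega A_{\parallel} + (1-\omega)A_{\perp}$, where $A_{\parallel}$ and $A_{\perp}$ are the conditional expectations of $\norm{\flowqdisc{\stepsize{}}{l_1}(q^1,p)-\flowqdisc{\stepsize{}}{l_2}(q^2,p)}$ under the two components, I would bound the two terms separately, working on a fixed compact set $A\subset S\times S\times\RR^d$ as in \propref{prop:property-parallel} and \propref{prop:maximal_coupling}.

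First I would fix the trajectory length $T$ and step-size threshold $\stepsize{}_1$ supplied by \lemref{lemma:local_abs_contractivity_disc} and apply \propref{prop:property-parallel} to $J^{\parallel}=\gamma^{\parallel}$, giving $A_{\parallel}\le\tilde\rho\,\norm{q^1-q^2}$ for some $\tilde\rho\in(0,1)$ that stays bounded away from $1$ uniformly in $\stepsize{}$ and $L$ (every off-centre index contracts with the single uniform rate $\rho<1$ of \lemref{lemma:local_abs_contractivity_disc}, and the central index carries mass at most one half), with the leapfrog discretization error already absorbed. Next, for $(q^1,q^2,p)\in A$ and all admissible $l_1,l_2$ (i.e.\ $\abs{\stepsize{}l_1},\abs{\stepsize{}l_2}\le T$ and $\stepsize{}\le\stepsize{}_1$) the iterates $\flowqdisc{\stepsize{}}{l_1}(q^1,p)$ and $\flowqdisc{\stepsize{}}{l_2}(q^2,p)$ stay in a compact set, so $A_{\perp}\le B$ for a finite $B=B(A,T)$, whence $(1-\omega)A_{\perp}=\TV(\vmu,\vnu)\,A_{\perp}\le\TV(\vmu,\vnu)\,B$. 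Combining, $\E[(l_1,l_2)\sim\gamma^\ast]{\norm{\flowqdisc{\stepsize{}}{l_1}(q^1,p)-\flowqdisc{\stepsize{}}{l_2}(q^2,p)}}\le\tilde\rho\,\norm{q^1-q^2}+\TV(\vmu,\vnu)\,B$, so it remains to absorb $\TV(\vmu,\vnu)\,B$ into the contraction: by the softmax-Lipschitz estimate in the proof of \propref{prop:maximal_coupling} (which bounds $\TV(\vmu,\vnu)$ by the $\ell_2$-norm of the vector of energy differences along the two trajectories, and hence by $\norm{q^1-q^2}$ up to $\stepsize{}^2$ terms, after using shift-invariance of the softmax to cancel the common $U(q^1)-U(q^2)$ offset), and choosing $\stepsize{}_2>0$, $L_0\in\NN$ from \propref{prop:maximal_coupling} small enough, one obtains $\TV(\vmu,\vnu)\,B\le(\rho_2-\tilde\rho)\norm{q^1-q^2}$ for a suitable $\rho_2\in(\tilde\rho,1)$, which is~\eqref{eq:contractivity-maximal}.

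The main obstacle is precisely this last step: $\TV(\vmu,\vnu)$ must be controlled \emph{relative to} $\norm{q^1-q^2}$ rather than merely made uniformly small, which is delicate when $q^1$ and $q^2$ are close, since then the target $\rho_2\norm{q^1-q^2}$ is itself small while the off-diagonal displacement $A_{\perp}$ does not vanish; the fix is to pair the linear-in-separation softmax bound above with a trajectory length $T$ taken small enough that the separation-independent remainder terms are dominated. Once this is in place, the conclusion automatically holds for all smaller $\stepsize{}$ and $T$, because \propref{prop:property-parallel}, \propref{prop:maximal_coupling} and \lemref{lemma:local_abs_contractivity_disc} all do, which is exactly what \condref{ass:local_contractivity} demands; taking $\stepsize{}\le\min\{\stepsize{}_1,\stepsize{}_2\}$ and the minimum of the required trajectory lengths finishes the argument, and the case $m>1$ follows verbatim since \eqref{eq:local_abs_contractivity_disc} and all ensuing bounds survive being raised to the power $m$.
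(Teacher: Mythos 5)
Your proposal follows essentially the paper's own route: the same mixture decomposition of $\gamma^\ast$ into a diagonal (parallel-in-time) component and an off-diagonal residual, \propref{prop:property-parallel} to contract the diagonal term, and \propref{prop:maximal_coupling} to make the off-diagonal mass $1-\omega = \TV(\vmu,\vnu)$ small, with the off-diagonal conditional expectation bounded by a constant via compactness. Where you diverge is only the final absorption step. The paper bounds $A_2 \le \omega\tilde{\rho}\norm{q^1-q^2} + (1-\omega)\,\E[J^\nparallel]{\norm{\flowqdisc{\stepsize{}}{l_i}(q^1,p)-\flowqdisc{\stepsize{}}{l_j}(q^2,p)}}$, notes the second factor is finite, and concludes from \propref{prop:maximal_coupling} that the upper bound tends to $\tilde{\rho}\norm{q^1-q^2}$ as $\stepsize{}\to 0$, hence some $\stepsize{}_2$ works; it does not discuss uniformity over $(q^1,q^2)\in S\times S$. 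That is exactly the subtlety you flag: near the diagonal of $S\times S$ the available slack $(\rho-\tilde{\rho})\norm{q^1-q^2}$ vanishes while the off-diagonal displacement does not, so a separation-independent bound on $\TV(\vmu,\vnu)$ does not obviously suffice. In other words, your concern points at a step the paper's proof passes over, not at a defect specific to your argument.

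However, the patch you sketch is not yet a proof. To obtain $\TV(\vmu,\vnu)\lesssim \norm{q^1-q^2}$ uniformly in small $\stepsize{}$ from the softmax-Lipschitz estimate, you need the leapfrog energy error $q\mapsto \Energy\big(\flowdisc{\stepsize{}}{l}(q,p)\big)-\Energy(q,p)$ to be Lipschitz in $q$ with constant of order $\stepsize{}^2$; under \assumptref{ass:regularity_and_growth} ($U$ only twice continuously differentiable) the naive Lipschitz constant is merely $O(1)$, and since the trajectory contains $K\approx T/\stepsize{}+1$ points, the entrywise estimates give either $\TV(\vmu,\vnu)\lesssim K\norm{q^1-q^2}$, whose constant blows up as $\stepsize{}\to 0$, or, after interpolating with the uniform $O(\stepsize{}^2)$ entrywise bound, only a bound of order $\sqrt{\norm{q^1-q^2}}$, which is still too weak near the diagonal. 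Closing this step seems to require either additional smoothness of $U$ (so that the energy-error map has $O(\stepsize{}^2)$ derivatives in the initial condition) or a genuinely different argument; the paper's proof, which stops at the pointwise limit $\stepsize{}\to 0$, does not supply one either.
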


\begin{proof}
    We first decompose $\gamma^\ast$ into its "diagonal" and "non-diagonal" components
    \begin{equation*}
        \gamma^\ast = \omega J^\parallel + (1 - \omega) J^\nparallel
    \end{equation*}
    where $1 - \omega = \PP(i \neq j)$ and $J^\nparallel$ is defined to be the residual with normalization.
    Thus we have
    \begin{equation}\label{eq:lemma-4.2-A2}
    \begin{aligned}
    & A_2 := \omega \E[J^\parallel] { \norm{\flowqdisc{\stepsize{}}{l_i}(q^1, p) - \flowqdisc{\stepsize{}}{l_j}(q^2, p)} } \\
    &\quad + (1 - \omega) \E[J^\nparallel] { \norm{\flowqdisc{\stepsize{}}{l_i}(q^1, p) - \flowqdisc{\stepsize{}}{l_j}(q^2, p)} } \\
    & \leq \omega \tilde{\rho} \norm{q^1 - q^2} \\
    & \quad + (1 - \omega) \E[J^\nparallel] { \norm{\flowqdisc{\stepsize{}}{l_i}(q^1, p) - \flowqdisc{\stepsize{}}{l_j}(q^2, p)} } 
    \end{aligned}
    \end{equation}
    for $T > 0$, $\stepsize{} \in (0, \stepsize{}_1]$ and $\tilde{\rho} \in (0, 1) $ in Proposition~\ref{prop:property-parallel}.
    As $\E[J^\nparallel] { \norm{\flowqdisc{\stepsize{}}{l_i}(q^1, p) - \flowqdisc{\stepsize{}}{l_j}(q^2, p)}}$ is finite,
    by Proposition~\ref{prop:maximal_coupling},
    the limit of the upper bound goes to  $\tilde{\rho} \norm{q^1 - q^2}$ as $\stepsize{} \to 0$.
    In other words, 
    for any $\rho \in (\tilde{\rho}, 1)$,
    there exists a step size $\stepsize{}_2 > 0$ such that for any $\stepsize{} \in (0, \min\{\stepsize{}_1, \stepsize{}_2\}]$,
    \begin{equation*}
        A_2 \leq \rho \norm{q^1 - q^2}
    \end{equation*}
    which is exactly what we wanted to prove.
\end{proof}

\section{Additional Experimental Details}

\subsection{Target distributions}
\label{app:target_distributions}

We follow the pre-processing steps in \cite{heng_unbiased_2019} for the German credit dataset \citep{asuncion2007uci} and the Finnish pine saplings dataset \citep{moller_log_1998} used in logistic regression and log-Gaussian Cox point process respectively.

\paragraph{Bayesian logistic regression}
We combine features in the German credit dataset  with all of their standardized pairwise interactions, resulting in a design matrix in $\RR^{300 \times 1,000}$. Denoting an Exponential distribution with rate $\lambda$ as $\Exp(\lambda)$, the Bayesian logistic regression follows the following generative process: $s^2 \sim \Exp(\lambda), a \sim \Normal(0, s^2), b \sim \Normal_{300}$, where the variance $s^2\in \RR$, the intercept $a \in \RR$ and the coefficients $b \in \RR^{300}$, giving a total dimension $d=302$.

\paragraph{Log-Gaussian Cox point process}

Firstly, the plot of the forest is discretized into an $n \times n$ grid.
For $i \in \{1, \dots, n\}^2$,
the number of points in each grid cell $y_i \in \mathbb{N}$ is assumed to be conditionally independent given a latent intensity variable $\Lambda_i$ and
follows a Poisson distribution with mean $a \Lambda_i$,
where $a = n^{-2}$ is the area of each cell.
We denote the logarithm of $\Lambda$ as $X$ and put a Gaussian process prior with mean $\mu \in \mathbb{R}$ and exponential covariance function $\Sigma_{i,j} = s^2 \exp \left(-|i - j| / (n b)\right)$ on it, where $s^2$, $b$ and $\mu$ are hyperparameters.
The generative process of the number of grid cell points follows
$X \sim \mathcal{GP}(\mu, \Sigma),\;\forall\;i \in \{1, \dots, n\}^2: \Lambda_i = \exp(X_i),\;y_i \sim \mathcal{P}oisson(a \Lambda_i)$.
Following \citep{moller_log_1998}, we use a dataset of 126 Scot pine saplings in a natural forest in Finland, 
and adapt the parameters $s^2 = 1.91$, $b = 1 / 33$ and $\mu = \log(126) - s^2 / 2$.

\section{Additional Experimental Results}

\subsection{Robustness: meeting time with more parameter sweeps}
\label{app:sweep}

Figure~\ref{app:meeting-gaussian}, \ref{app:meeting-lr} and \ref{app:meeting-coxprocess} 
provide a wider range of parameter sweep under the same experimental setup as Section~\ref{sec:meeting}.
\begin{figure*}[t]
    \ffigbox[\textwidth]{%
        \centering
        \begin{tikzpicture}
    \begin{axis}[%
        hide axis, xmin=10, xmax=50, ymin=0, ymax=0.4,
        legend style={color={rgb,1:red,0.1333;green,0.1333;blue,0.3333}, draw opacity={0.1}, line width={1}, solid, fill={rgb,1:red,1.0;green,1.0;blue,1.0}, fill opacity={0.9}, text opacity={1.0}, font={{\fontsize{8 pt}{10.4 pt}\selectfont}}, at={(1.02, 1)}, anchor={north west}, legend columns=-1}
    ]
        \addlegendimage{color={rgb,1:red,0.2667;green,0.4667;blue,0.6667}, name path={41d92112-fced-469f-bc61-94814c6a6591}, draw opacity={0.7}, line width={1.2}, solid, mark={diamond*}, mark size={2.25 pt}, mark options={color={rgb,1:red,0.0;green,0.0;blue,0.0}, draw opacity={0.7}, fill={rgb,1:red,0.2667;green,0.4667;blue,0.6667}, fill opacity={0.7}, line width={0.0}, rotate={0}, solid}}
        \addlegendentry{Metropolis}
        \addlegendimage{color={rgb,1:red,0.1333;green,0.5333;blue,0.2}, name path={d1056c54-bf38-4a84-9416-9b3b5e66f354}, draw opacity={0.7}, line width={1.2}, solid, mark={star}, mark size={2.25 pt}, mark options={color={rgb,1:red,0.0;green,0.0;blue,0.0}, draw opacity={0.7}, fill={rgb,1:red,0.1333;green,0.5333;blue,0.2}, fill opacity={0.7}, line width={0.0}, rotate={0}, solid}}
        \addlegendentry{Maximal}
        \addlegendimage{color={rgb,1:red,0.8;green,0.7333;blue,0.2667}, name path={98ad80d3-1b99-4400-911a-13702f6a1df5}, draw opacity={0.7}, line width={1.2}, solid, mark={*}, mark size={2.25 pt}, mark options={color={rgb,1:red,0.0;green,0.0;blue,0.0}, draw opacity={0.7}, fill={rgb,1:red,0.8;green,0.7333;blue,0.2667}, fill opacity={0.7}, line width={0.0}, rotate={0}, solid}}
        \addlegendentry{$W_2$}
    \end{axis}
\end{tikzpicture}
        \begin{subfloatrow}
            \ffigbox[0.32\textwidth]
            {\caption{$L = 5$}}
            {\includegraphics[width=\linewidth]{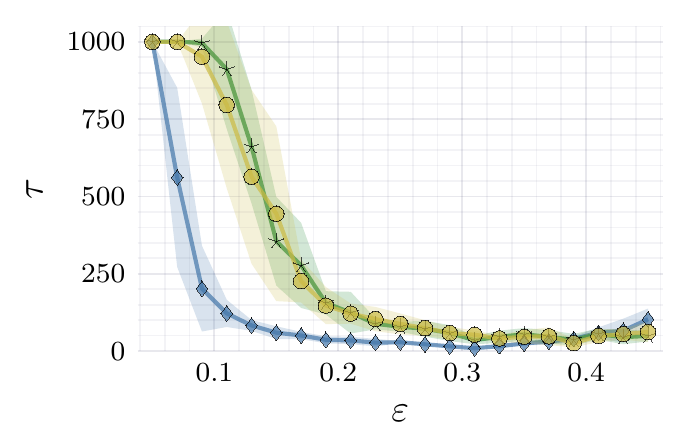}}
            \ffigbox[0.32\textwidth]
            {\caption{$L = 10$}}
            {\includegraphics[width=\linewidth]{plots/meeting/model=gaussian-L=10.pdf}}
            \ffigbox[0.32\textwidth]
            {\caption{$L = 15$}}
            {\includegraphics[width=\linewidth]{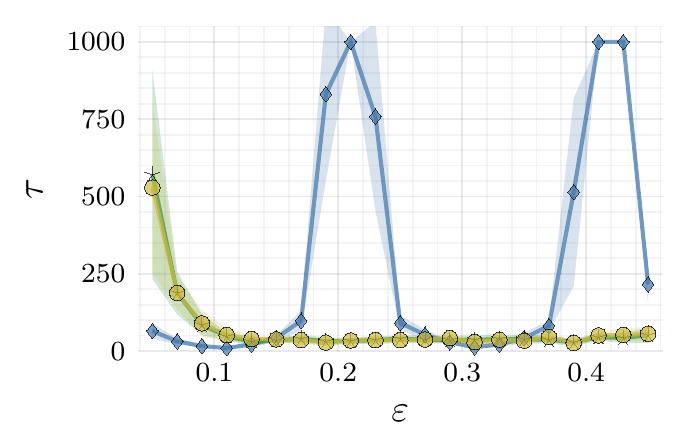}}
        \end{subfloatrow}
    }{%
        \caption{Averaged meeting time $\bar{\tau}$ with different $\epsilon$ and $L$ for 1,000D Gaussian.}
        \label{app:meeting-gaussian}
    }
\end{figure*}
\begin{figure*}[t]
    \ffigbox[\textwidth]{%
        \centering
        \begin{tikzpicture}
    \begin{axis}[%
        hide axis, xmin=10, xmax=50, ymin=0, ymax=0.4,
        legend style={color={rgb,1:red,0.1333;green,0.1333;blue,0.3333}, draw opacity={0.1}, line width={1}, solid, fill={rgb,1:red,1.0;green,1.0;blue,1.0}, fill opacity={0.9}, text opacity={1.0}, font={{\fontsize{8 pt}{10.4 pt}\selectfont}}, at={(1.02, 1)}, anchor={north west}, legend columns=-1}
    ]
        \addlegendimage{color={rgb,1:red,0.2667;green,0.4667;blue,0.6667}, name path={41d92112-fced-469f-bc61-94814c6a6591}, draw opacity={0.7}, line width={1.2}, solid, mark={diamond*}, mark size={2.25 pt}, mark options={color={rgb,1:red,0.0;green,0.0;blue,0.0}, draw opacity={0.7}, fill={rgb,1:red,0.2667;green,0.4667;blue,0.6667}, fill opacity={0.7}, line width={0.0}, rotate={0}, solid}}
        \addlegendentry{Metropolis}
        \addlegendimage{color={rgb,1:red,0.1333;green,0.5333;blue,0.2}, name path={d1056c54-bf38-4a84-9416-9b3b5e66f354}, draw opacity={0.7}, line width={1.2}, solid, mark={star}, mark size={2.25 pt}, mark options={color={rgb,1:red,0.0;green,0.0;blue,0.0}, draw opacity={0.7}, fill={rgb,1:red,0.1333;green,0.5333;blue,0.2}, fill opacity={0.7}, line width={0.0}, rotate={0}, solid}}
        \addlegendentry{Maximal}
        \addlegendimage{color={rgb,1:red,0.8;green,0.7333;blue,0.2667}, name path={98ad80d3-1b99-4400-911a-13702f6a1df5}, draw opacity={0.7}, line width={1.2}, solid, mark={*}, mark size={2.25 pt}, mark options={color={rgb,1:red,0.0;green,0.0;blue,0.0}, draw opacity={0.7}, fill={rgb,1:red,0.8;green,0.7333;blue,0.2667}, fill opacity={0.7}, line width={0.0}, rotate={0}, solid}}
        \addlegendentry{$W_2$}
    \end{axis}
\end{tikzpicture}
        \begin{subfloatrow}
            \ffigbox[0.32\textwidth]
            {\caption{$L = 10$}}
            {\includegraphics[width=\linewidth]{plots/meeting/model=lr-L=10.pdf}}
            \ffigbox[0.32\textwidth]
            {\caption{$L = 20$}}
            {\includegraphics[width=\linewidth]{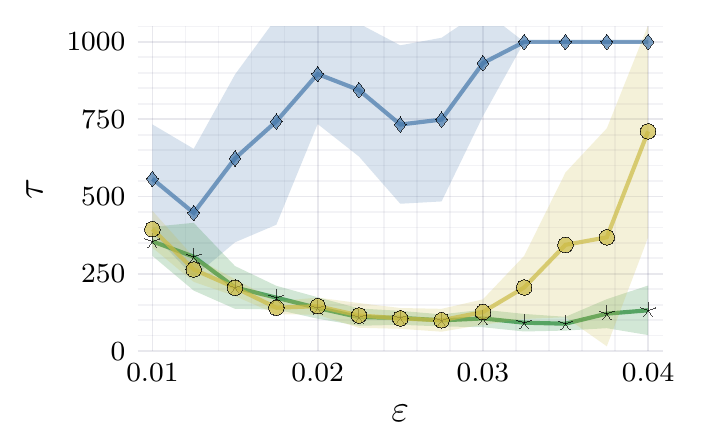}}
            \ffigbox[0.32\textwidth]
            {\caption{$L = 30$}}
            {\includegraphics[width=\linewidth]{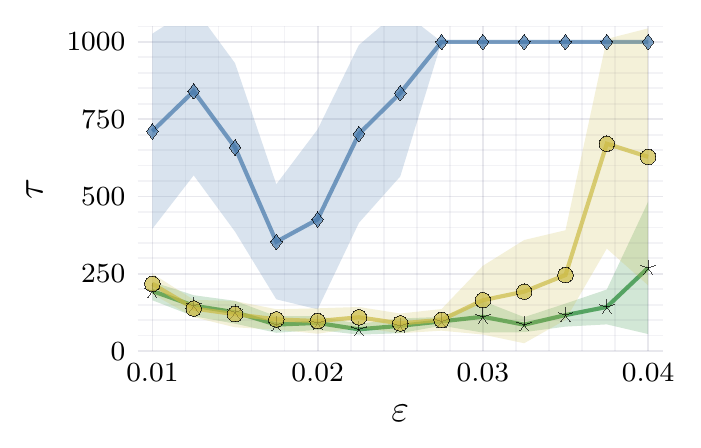}}
        \end{subfloatrow}
    }{%
        \caption{Averaged meeting time $\bar{\tau}$ with different $\epsilon$ and $L$ for logistic regression.}
        \label{app:meeting-lr}
    }
\end{figure*}
\begin{figure*}[t!]
    \ffigbox[\textwidth]{%
        \centering
        \begin{tikzpicture}
    \begin{axis}[%
        hide axis, xmin=10, xmax=50, ymin=0, ymax=0.4,
        legend style={color={rgb,1:red,0.1333;green,0.1333;blue,0.3333}, draw opacity={0.1}, line width={1}, solid, fill={rgb,1:red,1.0;green,1.0;blue,1.0}, fill opacity={0.9}, text opacity={1.0}, font={{\fontsize{8 pt}{10.4 pt}\selectfont}}, at={(1.02, 1)}, anchor={north west}, legend columns=-1}
    ]
        \addlegendimage{color={rgb,1:red,0.2667;green,0.4667;blue,0.6667}, name path={41d92112-fced-469f-bc61-94814c6a6591}, draw opacity={0.7}, line width={1.2}, solid, mark={diamond*}, mark size={2.25 pt}, mark options={color={rgb,1:red,0.0;green,0.0;blue,0.0}, draw opacity={0.7}, fill={rgb,1:red,0.2667;green,0.4667;blue,0.6667}, fill opacity={0.7}, line width={0.0}, rotate={0}, solid}}
        \addlegendentry{Metropolis}
        \addlegendimage{color={rgb,1:red,0.1333;green,0.5333;blue,0.2}, name path={d1056c54-bf38-4a84-9416-9b3b5e66f354}, draw opacity={0.7}, line width={1.2}, solid, mark={star}, mark size={2.25 pt}, mark options={color={rgb,1:red,0.0;green,0.0;blue,0.0}, draw opacity={0.7}, fill={rgb,1:red,0.1333;green,0.5333;blue,0.2}, fill opacity={0.7}, line width={0.0}, rotate={0}, solid}}
        \addlegendentry{Maximal}
        \addlegendimage{color={rgb,1:red,0.8;green,0.7333;blue,0.2667}, name path={98ad80d3-1b99-4400-911a-13702f6a1df5}, draw opacity={0.7}, line width={1.2}, solid, mark={*}, mark size={2.25 pt}, mark options={color={rgb,1:red,0.0;green,0.0;blue,0.0}, draw opacity={0.7}, fill={rgb,1:red,0.8;green,0.7333;blue,0.2667}, fill opacity={0.7}, line width={0.0}, rotate={0}, solid}}
        \addlegendentry{$W_2$}
    \end{axis}
\end{tikzpicture}
        \begin{subfloatrow}
            \ffigbox[0.32\textwidth]
            {\caption{$L = 10$}}
            {\includegraphics[width=\linewidth]{plots/meeting/model=coxprocess-L=10.pdf}}
            \ffigbox[0.32\textwidth]
            {\caption{$L = 20$}}
            {\includegraphics[width=\linewidth]{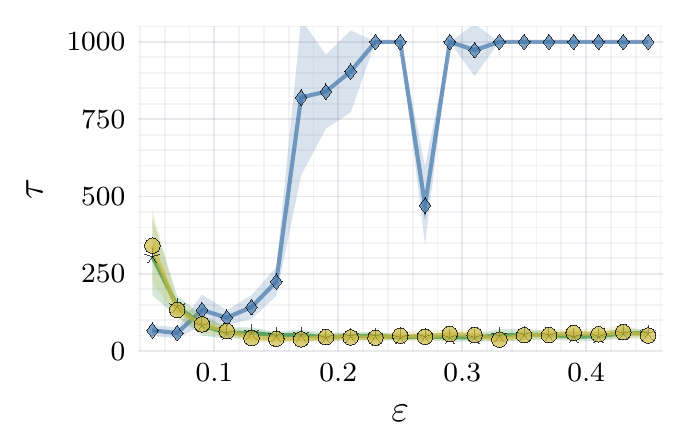}}
            \ffigbox[0.32\textwidth]
            {\caption{$L = 30$}}
            {\includegraphics[width=\linewidth]{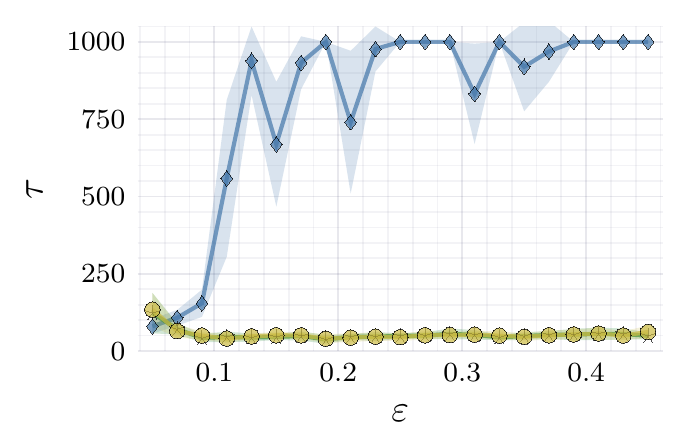}}
        \end{subfloatrow}
    }{%
        \caption{Averaged meeting time $\bar{\tau}$ with different $\epsilon$ and $L$ for log-Gaussian Cox point process.}
        \label{app:meeting-coxprocess}
    }
\end{figure*}

\subsection{Toy examples}
\label{app:toy}

We first study how proposed methods behave on multi-modal distributions.
Specifically, we want to know if the coupled chains can meet in a short time given the target is multi-modal.
We consider a mixture of Gaussians on $\mathbb{R}^2$ with three components $\Normal([-1, -1], 0.25^2 I)$, $\Normal([0, 0], 0.25^2 I)$, $\Normal([1, 1], 0.25^2 I)$ weighted by 0.25, 0.4 and 0.35 respectively.
We initialise chains from $\Unif([0, 1]^2)$, covering two of the modes.
We simulate $R = 500$ pairs of chains and check if they meet within $100$ iterations.
Denoting the number of chains which meet as $N_\tau$, 
we report $i_\tau = N_\tau / R$ as a measure of efficiency in meeting.
Regarding the choice of $\stepsize{}, L$, it is known that HMC is sensitive to the total trajectory length $\stepsize{} L$ in multi-modal distributions: it requires the Hamiltonian simulation long enough to allow jumps between modes.
Therefore, starting with $(\stepsize{}, L) = (0.1, 10)$, we consider two ways of increasing $\stepsize{} L$: sweeping $\stepsize{} \in \{0.1, 0.15, \dots, 0.3\}$ and sweeping $L \in \{10, 15, \dots, 30\}$,
equivalently providing a range of total lengths between $1$ and $3$.
While both means increase the trajectory length, the first approach doesn't introduce additional computation but might lead to larger simulation errors, 
which may then affect the overall performance.
Figure~\ref{fig:gmm} provides $i_\tau$ under such changes of total trajectory lengths for all methods.
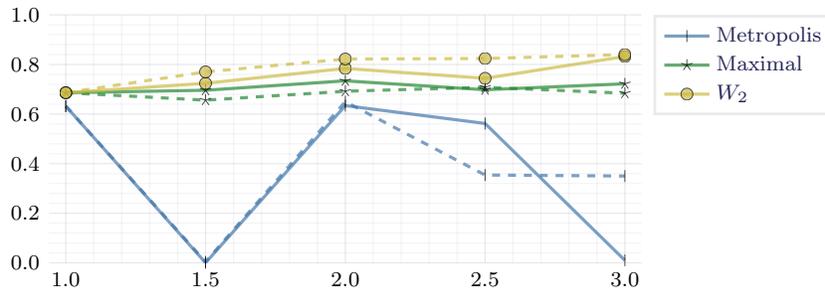
\begin{figure*}[t]
    \centering
    \begin{tikzpicture}[/tikz/background rectangle/.style={fill={rgb,1:red,1.0;green,1.0;blue,1.0}, draw opacity={1.0}}, show background rectangle]
    \begin{axis}[point meta max={nan}, point meta min={nan}, legend cell align={left}, title={}, title style={at={{(0.5,1)}}, anchor={south}, font={{\fontsize{14 pt}{18.2 pt}\selectfont}}, color={rgb,1:red,0.0;green,0.0;blue,0.0}, draw opacity={1.0}, rotate={0.0}}, legend style={color={rgb,1:red,0.1333;green,0.1333;blue,0.3333}, draw opacity={0.1}, line width={1}, solid, fill={rgb,1:red,1.0;green,1.0;blue,1.0}, fill opacity={0.9}, text opacity={1.0}, font={{\fontsize{8 pt}{10.4 pt}\selectfont}}, at={(1.02, 1)}, anchor={north west}}, axis background/.style={fill={rgb,1:red,1.0;green,1.0;blue,1.0}, opacity={1.0}}, anchor={north west}, xshift={1.0mm}, yshift={-1.0mm}, width={94.6mm}, height={48.8mm}, scaled x ticks={false}, xlabel={}, x tick style={draw={none}}, x tick label style={color={rgb,1:red,0.0;green,0.0;blue,0.0}, opacity={1.0}, rotate={0}}, xlabel style={at={(ticklabel cs:0.5)}, anchor=near ticklabel, font={{\fontsize{11 pt}{14.3 pt}\selectfont}}, color={rgb,1:red,0.0;green,0.0;blue,0.0}, draw opacity={1.0}, rotate={0.0}}, xmajorgrids={true}, xmin={0.94}, xmax={3.06}, xtick={{1.0,1.5,2.0,2.5,3.0}}, xticklabels={{$1.0$,$1.5$,$2.0$,$2.5$,$3.0$}}, xtick align={inside}, xticklabel style={font={{\fontsize{8 pt}{10.4 pt}\selectfont}}, color={rgb,1:red,0.0;green,0.0;blue,0.0}, draw opacity={1.0}, rotate={0.0}}, x grid style={color={rgb,1:red,0.1333;green,0.1333;blue,0.3333}, draw opacity={0.1}, line width={0.5}, solid}, extra x ticks={{1.1,1.2,1.3,1.4,1.6,1.7,1.8,1.9,2.1,2.2,2.3,2.4,2.6,2.7,2.8,2.9}}, extra x tick labels={}, extra x tick style={grid={major}, x grid style={color={rgb,1:red,0.1333;green,0.1333;blue,0.3333}, draw opacity={0.05}, line width={0.5}, solid}, major tick length={0}}, x axis line style={{draw opacity = 0}}, scaled y ticks={false}, ylabel={}, y tick style={draw={none}}, y tick label style={color={rgb,1:red,0.0;green,0.0;blue,0.0}, opacity={1.0}, rotate={0}}, ylabel style={at={(ticklabel cs:0.5)}, anchor=near ticklabel, font={{\fontsize{11 pt}{14.3 pt}\selectfont}}, color={rgb,1:red,0.0;green,0.0;blue,0.0}, draw opacity={1.0}, rotate={0.0}}, ymajorgrids={true}, ymin={0}, ymax={1}, ytick={{0.0,0.2,0.4,0.6000000000000001,0.8,1.0}}, yticklabels={{$0.0$,$0.2$,$0.4$,$0.6$,$0.8$,$1.0$}}, ytick align={inside}, yticklabel style={font={{\fontsize{8 pt}{10.4 pt}\selectfont}}, color={rgb,1:red,0.0;green,0.0;blue,0.0}, draw opacity={1.0}, rotate={0.0}}, y grid style={color={rgb,1:red,0.1333;green,0.1333;blue,0.3333}, draw opacity={0.1}, line width={0.5}, solid}, extra y ticks={{0.04,0.08,0.12,0.16,0.24000000000000002,0.28,0.32,0.36,0.44000000000000006,0.4800000000000001,0.5200000000000001,0.56,0.6400000000000001,0.6800000000000002,0.7200000000000001,0.7600000000000001,0.8400000000000001,0.8800000000000001,0.92,0.9600000000000001}}, extra y tick labels={}, extra y tick style={grid={major}, y grid style={color={rgb,1:red,0.1333;green,0.1333;blue,0.3333}, draw opacity={0.05}, line width={0.5}, solid}, major tick length={0}}, y axis line style={{draw opacity = 0}}]
    \addplot[color={rgb,1:red,0.2667;green,0.4667;blue,0.6667}, name path={59fed147-545f-4334-966a-bc1dc3912a78}, draw opacity={0.7}, line width={1.2}, solid, mark={|}, mark size={2.25 pt}, mark options={color={rgb,1:red,0.0;green,0.0;blue,0.0}, draw opacity={0.7}, fill={rgb,1:red,0.2667;green,0.4667;blue,0.6667}, fill opacity={0.7}, line width={0.0}, rotate={0}, solid}]
        coordinates {
            (1.0,0.632)
            (1.5,0.0)
            (2.0,0.634)
            (2.5,0.562)
            (3.0,0.01)
        }
        ;
    \addlegendentry {Metropolis}
    \addplot[color={rgb,1:red,0.2667;green,0.4667;blue,0.6667}, name path={7c03378d-8d72-428d-ba15-ec3a3eced6f5}, draw opacity={0.7}, line width={1.2}, dashed, mark={|}, mark size={2.25 pt}, mark options={color={rgb,1:red,0.0;green,0.0;blue,0.0}, draw opacity={0.7}, fill={rgb,1:red,0.2667;green,0.4667;blue,0.6667}, fill opacity={0.7}, line width={0.0}, rotate={0}, solid}, forget plot]
        coordinates {
            (1.0,0.632)
            (1.5,0.004)
            (2.0,0.65)
            (2.5,0.354)
            (3.0,0.35)
        }
        ;
    \addplot[color={rgb,1:red,0.1333;green,0.5333;blue,0.2}, name path={6b83289c-c2f1-494f-bd85-5601703952a9}, draw opacity={0.7}, line width={1.2}, solid, mark={star}, mark size={2.25 pt}, mark options={color={rgb,1:red,0.0;green,0.0;blue,0.0}, draw opacity={0.7}, fill={rgb,1:red,0.1333;green,0.5333;blue,0.2}, fill opacity={0.7}, line width={0.0}, rotate={0}, solid}]
        coordinates {
            (1.0,0.686)
            (1.5,0.696)
            (2.0,0.734)
            (2.5,0.698)
            (3.0,0.722)
        }
        ;
    \addlegendentry {Maximal}
    \addplot[color={rgb,1:red,0.1333;green,0.5333;blue,0.2}, name path={0de0b3c3-9274-4b62-ac67-7c9eabbf8c5c}, draw opacity={0.7}, line width={1.2}, dashed, mark={star}, mark size={2.25 pt}, mark options={color={rgb,1:red,0.0;green,0.0;blue,0.0}, draw opacity={0.7}, fill={rgb,1:red,0.1333;green,0.5333;blue,0.2}, fill opacity={0.7}, line width={0.0}, rotate={0}, solid}, forget plot]
        coordinates {
            (1.0,0.686)
            (1.5,0.656)
            (2.0,0.692)
            (2.5,0.708)
            (3.0,0.684)
        }
        ;
    \addplot[color={rgb,1:red,0.8;green,0.7333;blue,0.2667}, name path={9af03e12-4443-4ed4-9d6f-cb6736eae1c8}, draw opacity={0.7}, line width={1.2}, solid, mark={*}, mark size={2.25 pt}, mark options={color={rgb,1:red,0.0;green,0.0;blue,0.0}, draw opacity={0.7}, fill={rgb,1:red,0.8;green,0.7333;blue,0.2667}, fill opacity={0.7}, line width={0.0}, rotate={0}, solid}]
        coordinates {
            (1.0,0.686)
            (1.5,0.724)
            (2.0,0.784)
            (2.5,0.744)
            (3.0,0.832)
        }
        ;
    \addlegendentry {$W_2$}
    \addplot[color={rgb,1:red,0.8;green,0.7333;blue,0.2667}, name path={00366e4f-ded5-47dd-a3d9-cdb646429f8d}, draw opacity={0.7}, line width={1.2}, dashed, mark={*}, mark size={2.25 pt}, mark options={color={rgb,1:red,0.0;green,0.0;blue,0.0}, draw opacity={0.7}, fill={rgb,1:red,0.8;green,0.7333;blue,0.2667}, fill opacity={0.7}, line width={0.0}, rotate={0}, solid}, forget plot]
        coordinates {
            (1.0,0.686)
            (1.5,0.77)
            (2.0,0.822)
            (2.5,0.824)
            (3.0,0.84)
        }
        ;
\end{axis}

    \end{tikzpicture}
    
    \caption{Meeting efficiency on the mixture of Gaussians target with the total trajectory length $\stepsize{} L$ increasing. Solid lines are from increasing $\stepsize{}$ and dashed ones from increasing $L$.}
    \label{fig:gmm}
\end{figure*}
First, by increasing $\stepsize{} L$,
our proposed methods overall improve the meeting efficiency,
which is not the case for coupled Metropolis HMC.
This can be explained by the following:
for coupled Metropolis HMC, meetings can only happen if two chains are proposed to the same mode.
However, for coupled multinomial HMC,
as long as the trajectories explore common modes,
there is a chance for meeting.
Especially with $W_2$-coupling, 
this chance is further increased by utlizing the actual distances between pairs to find coupling, making it the best in the figure.
Second, regarding the two ways of increasing $\stepsize{} L$, for our proposed methods, 
increasing $L$ appears to be better as we expected.
That said, the gap is relatively small -- coupled multinomial HMC tends to be robust against large $\stepsize{}$, 
which is practically useful as it allows the use of a smaller amount of computation comparing to increasing $L$.
Note that we do not claim or indicate our methods improve the mixing in multi-modal distributions, which by itself is an important and unsolved issue for HMC.
\begin{table*}[t]
    \centering
    \begin{tabularx}{0.52\columnwidth}{l|ccc}
    \toprule
    Momentum & Metropolis & Maximal & $W_2$ \\ \midrule
    Shared & $136.6 \pm 95.8$ &  $112.4 \pm 74.9$ & $103.8 \pm 76.5$ \\ \hline
    Contractive & $\mathbf{39.7} \pm \mathbf{18.9}$ &  $\mathbf{81.3} \pm \mathbf{56.3}$ & $\mathbf{77.2} \pm \mathbf{48.1}$  \\ \bottomrule
    \end{tabularx}
    \caption{Effect of different momentum coupling methods on meeting time for the Banana target.}
    \label{tab:banana}
\end{table*}
Second, to examine the proposed methods on highly non-convex distributions, 
we consider a banana-shaped distribution on $\mathbb{R}^2$,
of which the potential is given by the Rosenbrock function $U(x_1, x_2) = (1 - x_1)^2 + 10 (x_2 - x_1^2)^2\;(x_1, x_2 \in \mathbb{R})$.
As it is done in \citep{heng_unbiased_2019},
we also take this chance to study the effect of other methods for coupling the initial momentums rather than simply sharing them.
Specifically, we consider the contractive coupling from \citep{bou-rabee_coupling_2019}, 
in which the initial momentums $P^1, P^2$ are sampled based on the current positions $Q^1, Q^2$ as follow
\begin{equation*}
\begin{aligned}
    P^1 &\sim \Normal(0, I), \\ 
    P^2 &= \begin{cases}
    P^1 + \kappa \Delta &\text{with prob. } \frac{\Normal \left(\bar{\Delta}^\top P^1 + \kappa |\Delta|; 0, 1 \right)}{\Normal \left(\bar{\Delta}^\top P^1; 0, 1 \right)} \\ 
    P^1 - 2(\bar{\Delta}^\top P^1)\bar{\Delta} &\text{otherwise}
    \end{cases}
\end{aligned}
\end{equation*}
where $\kappa > 0$ is a tuning parameter, $\Delta = Q^1 - Q^2$ is the difference in position space and $\bar{\Delta}$ is the corresponding normalised difference.
With initial states sampled from 
$\Unif([0,1]^2)$%
, we simulated $R = 500$ pairs of coupled chains with $(\stepsize{}, L) = (1/50, 50)$
for maximally 500 iterations with two momentum coupling methods: shared momentum and contractive coupling with $\kappa = 1$.
We summarise means and standard deviations of $\tau$ from $R$ runs in Table~\ref{tab:banana}.

First of all, 
all method with two momentum coupling methods can meet within 150 iterations in such high non-convex setup,
except approximate $W_2$-coupling with contractive momentum.
Also, it can be seen that our methods can also benefit from contractive coupling,
even though it is derived as a maximal coupling \citep{thorisson_coupling_2000} for Metropolis HMC.
This is the reason why coupled Metropolis HMC is largely improved by it.
That is to say,
contractive coupling is an orthogonal method of ours rather than a replacement.
Note that the table should not be used to compare coupled multinomial HMC against coupled Metropolis HMC in terms of meeting time because they have different optimal parameters for meeting in this target.

\end{document}